\newtheorem{Lem}{Lemma}
\newtheorem{Def}{Definition}
\newtheorem{Prop}{Proposition}
\newtheorem{Theorem}{Theorem}
\DeclareMathOperator{\im}{Im}
\DeclareMathOperator{\re}{Re}
\begin{document}

% upref

\title{Poisson Sigma Model with  Branes and Hyperelliptic Riemann Surfaces}

\author{Andrea Ferrario}
\address{Departement Mathematik--ETH Z\"urich--8092, Z\"urich---Switzerland}  
\email{andrea.ferrario@math.ethz.ch}

\maketitle

\begin{abstract}
We derive the explicit form of  the superpropagators  in presence of general boundary conditions (coisotropic branes) for the Poisson Sigma Model. This  generalizes the results presented in \cite{Cattaneo:1999fm} and \cite{CF} for  Kontsevich's  angle function \cite{Kont} used in the deformation quantization program of Poisson manifolds. The relevant  superpropagators for $n$ branes are defined as gauge fixed homotopy operators of a complex of differential forms on  $n$ sided polygons $P_n$ with particular "alternating" boundary conditions.
 In presence of more than three branes we use first order Riemann theta functions with odd singular characteristics on the Jacobian variety of a hyperelliptic Riemann surface (canonical setting). In genus $g$ the  superpropagators present  $g$ zero modes contributions. 
\end{abstract}

\tableofcontents

\section{Introduction}

The Poisson Sigma Model (P$\sigma$M) is a topological field theory defined in terms of a functional on the space of maps from the tangent bundle of a two dimensional oriented surface $\Sigma$ to the cotangent bundle of a given Poisson manifold $(M,\pi)$. When the source surface is the unit disk it has been shown in the celebrated paper \cite{Cattaneo:1999fm} that  Kontsevich's star product on $M$ \cite{Kont} can be obtained from  Feynman's expansions of certain Green functions, assuming particularly simple boundary conditions. The same authors have extended the previous calculations with the P$\sigma$M with boundary conditions such that the base map $X$ maps the boundary of the unit disk to a coisotropic submanifold of $M$ \cite{CF}: it turns out that the coisotropic submanifolds of a Poisson manifold label the possible boundary conditions of the P$\sigma$M and its quantization is related to the deformation quantization of the coisotropic submanifold itself. 
In \cite{Calvo:2005th} it has been shown that even non coisotropic branes are allowed at quantum level; when the brane is defined by the so called second class constraint, then the perturbative quantization of the P$\sigma$M yields  Kontsevich's star product associated now to the Dirac bracket defined on the brane. In \cite{CZ,C2007} a unifying approach is proposed with the introduction of   Pre-Poisson submanifolds: given a Pre-Poisson submanifold $C$ of a Poisson manifold $M$, then it is always possible to find a  presymplectic submanifold $M'$ of $M$ containing $C$ as coisotropic submanifold.
In this note we construct explicitly the superpropagators for  the P$\sigma$M in presence of $n \geq 2$ coisotropic branes generalizing the results in \cite{Cattaneo:1999fm,CF}. 

In the perturbative expansion every coisotropic submanifold $\mathcal{C}_j$, or brane, of the Poisson manifold $(M,\pi)$ is defined by the constraint
$\mathcal{C}_j=\left\{ x^{\mu_j}=0 \mid \mu_j \in I_j  \right\}$; considering $n \geq 2 $ branes the source manifold   for the P$\sigma$M is defined by the couple $(P_n, u)$ where $P_n := u(\mathbb{H}^+)$ is a $n$ sided  polygon  and $u :\mathbb{H}^+\rightarrow P_n$ is a suitable homeomorphism between the compactified complex upper half plane $\mathbb{H}^+$ and $P_n$, depending on the number  of branes considered. This definition allows to fix the polygon $P_n$ through $u$ and to use the technique of the "mirror charges" \cite{CF} to write the explicit formula for the superpropagators with the correct boundary conditions. In presence of $n=2,3$ branes, $u$ is chosen to be a suitable Schwarz-Christoffel mapping \cite{Gonzal}, while for $ n\geq 4$ branes we need to introduce particular sections of hyperelliptic Riemann surfaces and their projections to the Riemann sphere $\mathbb{C}_{\infty}$ (we refer to Section 6 for the full construction). In the sequel each side $\partial P_n^i$ of $P_n$ is called brane as well, when confusion does not arise with the corresponding coisotropic submanifold $\mathcal{C}_i \subset M$. 
 The fundamental superpropagators in presence of $n$ branes  are those associated to the boundary conditions expressed by the index sets $\mathcal{S}_1=I_1^c\cap I_2 \cap I^c_3\cap \dots \cap I_n$,  $\mathcal{S}_2=I_1\cap I_2^c \cap I_3\cap \dots \cap I_n^c$ for $n$ even and $\mathcal{S}_1=I_1^c\cap I_2 \cap I^c_3\cap \dots \cap I^c_n$,  $\mathcal{S}_2=I_1\cap I_2^c \cap I_3\cap \dots \cap I_n$ for $n$ odd; we call them relevant. Let $(\mathcal{H}^n_i,d)$ be the  complexes of differential forms on  $P_n$ with Dirichlet boundary conditions on the even branes  for $i=1$ and  Dirichlet boundary conditions on the odd ones for $i=2$. The relevant superpropagators  are defined as the gauge fixed homotopy operators $G_{\mathcal{S}_i}$  which give the Hodge-Kodaira decomposition

\begin{equation*}
d G_{\mathcal{S}_i}+G_{\mathcal{S}_i} d = I-P_{\mathcal{S}_i}
\end{equation*}
of $\mathcal{H}_i^n$, where $P_{\mathcal{S}_i}$ is a projection  onto De Rham cohomology, for $i=1,2$. This setup is a special case of a strong deformation retract used in the homological perturbation theory contest (\cite{Kajiura:2001ng,Kajiura:2003ax} and references therein).  The degree minus one operators $G_{\mathcal{S}_i}:\mathcal{H}_i^{n,k} \rightarrow \mathcal{H}_i^{n,k-1}$ are given in Def.1; the paper is devoted to the construction of their integral kernels $\theta(Q,P)_{\mathcal{S}_i}:=-\frac{i}{\hbar}\langle \tilde{\xi}^{a_1}(Q)\tilde{\eta}_{a_2}(P)\rangle$, $(P,Q) \in P_n \times P_n$, $a_1,a_2 \in \mathcal{S}_i$ satisfying the  boundary conditions imposed by the $\mathcal{S}_i$ themselves. With $\tilde{\xi}, \tilde{\eta}$ we denote superfields \cite{Cattaneo:1999fm} of the P$\sigma$M  in the perturbative expansion. In principle $\theta(Q,P)_{\mathcal{S}_i}$ will consist of two different contributions: a generalization of the Kontsevich angle map \cite{Kont} and an additional term   due to the projection $P_{\mathcal{S}_i}$ onto cohomology $P_{\mathcal{S}_i}\mathcal{H}^n_i:=H^{\bullet}(\mathcal{H}^n_i)$.
 The main results of the paper are collected in the following

\begin{Theorem}{\bf{Superpropagators for the P$\sigma$M with $n$ branes}} 
Let $G_{\mathcal{S}_i}$ be the  relevant superpropagators  for the P$\sigma$M with $n$ branes defined by  the constraints \ $\mathcal{C}_j=\left\{ x^{\mu_j}=0 \mid \mu_j \in I_j  \right\}$ and $\mathcal{S}_i$ defined as above. The integral kernels $\theta(Q,P)_{\mathcal{S}_i}:=-\frac{i}{\hbar}\langle \tilde{\xi}^{\bullet}(Q)\tilde{\eta}_{\bullet}(P)\rangle$  are given by:

\begin{itemize} 
\item  two branes case: 
\begin{eqnarray*}
&&\theta(Q,P)_{\mathcal{S}_1} = \frac{1}{2\pi}d \arg \frac{(u-v)(\bar{u}-v)}{(\bar{u}+v)(u+v)}, \\
&&\theta(Q,P)_{\mathcal{S}_2} = \frac{1}{2\pi}d \arg \frac{(u-v)(\bar{u}+v)}{(\bar{u}-v)(u+v)},
\end{eqnarray*}
where $P_2 :=u(\mathbb{H}^+)$ with $u(z)=\sqrt{z}$, $ v:=u(w)$, $d=d_u+d_v$: we identify $(P,Q)$ with the couple $(u,v)$.

\item  three branes case: 
\begin{eqnarray*}
&&\theta(Q,P)_{\mathcal{S}_1} = \frac{1}{2\pi}d\arg \frac{\sin i\pi (u-v)\sin i\pi (\bar{u}+v)}{\sin i\pi (\bar{u}-v)\sin i\pi (u+v)}, \\
&&\theta(Q,P)_{\mathcal{S}_2} =\frac{1}{2\pi}d\arg \frac{\sin i\pi (u-v)\sin i\pi (\bar{u}-v)}{\sin i\pi (\bar{u}+v)\sin i\pi (u+v)},
\end{eqnarray*}
where $P_3 :=u(\mathbb{H}^+)$ with $u(z)= \frac{1}{2\pi} \int_1^z \frac{ds}{\sqrt{s(s-1)}}$, $ v:=u(w)$, $d=d_u+d_v$: we identify $(P,Q)$ with the couple $(u,v)$.

\item $n=2g+2$, $g \geq 1$  branes case:

 \begin{eqnarray*}
\theta(Q,P)_{\mathcal{S}_1}=\frac{1}{2\pi}d\arg \frac{ \vartheta(\varphi(P)-\varphi(Q)+\mathcal{A}_g,\Omega) \vartheta(\overline{\varphi(P)}-\varphi(Q)+ \bar{\mathcal{A}}_g,\Omega)}{\vartheta(\varphi(P)+\varphi(Q)+\bar{\mathcal{A}}_g,\Omega) \vartheta(\overline{\varphi(P)}+\varphi(Q)+\mathcal{A}_g,\Omega)}-4\mathcal{Z}_{\mathcal{S}_1}(Q,P) & \\
\theta(Q,P)_{\mathcal{S}_2}=\frac{1}{2\pi} d\arg\frac{ \vartheta(\varphi(P)-\varphi(Q)+\mathcal{A}_g,\Omega) \vartheta(\overline{\varphi(P)}+\varphi(Q)+ \bar{\mathcal{A}}_g,\Omega)}{\vartheta(\varphi(P)+\varphi(Q)+\bar{\mathcal{A}}_g,\Omega) \vartheta(\overline{\varphi(P)}-\varphi(Q)+\mathcal{A}_g,\Omega)}-4\mathcal{Z}_{\mathcal{S}_2}(Q,P)& 
 \end{eqnarray*}
\end{itemize}
 where $\mathcal{Z}_{\mathcal{S}_i}(Q,P):=\left\{ \begin{array}{c} \im \varphi_i(Q) (\im \Omega)^{-1}d\re  \varphi_j(P)~~~ i=1 \\ \im \varphi_i(P) (\im \Omega)^{-1}d\re  \varphi_j(Q)~~~ i=2\end{array}\right. $,  $\varphi: \mathcal{M} \rightarrow \mathcal{J}(\mathcal{M}):=\mathbb{C}^g/\null^tnI+\null^tm\Omega$ is the Abel-Jacobi map for the hyperelliptic Riemann surface $\mathcal{M}$ of genus $g$ which realizes the two sheeted branched covering $z: \mathcal{M} \rightarrow \mathbb{C}_{\infty}$ with 
 branching points $\{ P_i\}_{i=1,\dots,2g+2}$ such that $z(P_i) \in \mathbb{R} \cup \{\infty\}$,  $(P,Q) \in P_n \times P_n$ and $d=d_P+d_Q$.
 The  polygon $P_n$ is defined via $z(P_n)=\mathbb{H}^+ \subset \mathbb{C}_{\infty}$ while $\mathcal{A}_g$ denotes non singular odd half periods on $\mathcal{J}(\mathcal{M})$ (see Section 6.2) and  $\vartheta$ are  first order Riemann theta functions defined on $\mathcal{J}(\mathcal{M})$. In presence of an odd number $n$ of branes, $n \geq 5$, we can reduce to the $n-1$ even case.  By construction $\theta_{\mathcal{S}_1}(Q,P)=\theta_{\mathcal{S}_2}(P,Q)$, $\theta_{\mathcal{S}_2}(Q,P)=\theta_{\mathcal{S}_1}(P,Q)$ for any number of branes.

\end{Theorem}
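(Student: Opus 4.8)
The plan is to characterize each kernel $\theta(Q,P)_{\mathcal{S}_i}$ intrinsically and then exhibit the stated function as the unique object meeting that characterization, constructed by the method of mirror charges. At the level of integral kernels the Hodge--Kodaira identity $dG_{\mathcal{S}_i}+G_{\mathcal{S}_i}d=I-P_{\mathcal{S}_i}$ is equivalent to three requirements on the $1$-form $\theta(Q,P)_{\mathcal{S}_i}$ on $P_n\times P_n$ (with $d=d_P+d_Q$): it is closed away from the diagonal, it carries the Dirac-type singularity $\tfrac{1}{2\pi}d\arg(u-v)$ along $P=Q$ (reproducing $I$), and it restricts on each side of $\partial P_n$ so as to enforce the alternating Dirichlet/Neumann pattern defining $\mathcal{S}_i$. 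Since closedness and the local singularity structure are conformally invariant, I would first transport the problem to $\mathbb{H}^+$ through the uniformizing map (the Schwarz--Christoffel map $u$ for $n=2,3$, the hyperelliptic projection $z$ for $n\ge 4$); this is precisely what converts the flat boundary conditions on $\mathbb{R}=\partial\mathbb{H}^+$ into the alternating ones on $\partial P_n$, so it remains to build the angle form with the prescribed behaviour and pull it back.

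For $n=2,3$ the doubled polygon is simply connected and the reflection group of $P_n$ is explicit, so the mirror-charge sum terminates in elementary functions. When $n=2$ the wedge $u(z)=\sqrt z$ has the Klein four-group of images $\{v,\bar v,-\bar v,-v\}$, and summing $\tfrac{1}{2\pi}d\arg(u-\cdot)$ over these images with the sign pattern $(+,-,+,-)$ dictated by imposing Dirichlet on the even brane (for $\mathcal{S}_1$) or on the odd brane (for $\mathcal{S}_2$) yields exactly the rational combinations in the statement. When $n=3$ the map $u(z)=\tfrac{1}{2\pi}\int_1^z ds/\sqrt{s(s-1)}$ is an inverse hyperbolic function sending $\mathbb{H}^+$ to a half-strip, whose reflection group is infinite and translation-periodic; the product over images then resums, as a Weierstrass-type infinite product, to the ratios of $\sin i\pi(\cdot)$ displayed. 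In both cases I would check directly that the single diagonal zero reproduces the Kontsevich singularity and that on each flat brane the argument becomes locally constant (Dirichlet) or stationary (Neumann) as required.

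For $n=2g+2$ with $g\ge 1$ the Schottky double of $P_n$ is the hyperelliptic surface $\mathcal{M}$ realized by $z\colon\mathcal{M}\to\mathbb{C}_\infty$ branched over the $2g+2$ vertices, which by construction lie in $\mathbb{R}\cup\{\infty\}$. The elementary products of images must be replaced by their automorphic analogue: the first-order theta function shifted by the non-singular odd half-period $\mathcal{A}_g$, since $\vartheta(\varphi(P)-\varphi(Q)+\mathcal{A}_g,\Omega)$ is odd in $\varphi(P)-\varphi(Q)$ and vanishes simply exactly at $P=Q$, thus supplying the correct diagonal singularity of $d\arg$, while the reflected arguments $\overline{\varphi(P)}$ together with $\bar{\mathcal{A}}_g$ encode the anti-holomorphic involution that now plays the role of the mirror charges across the branes. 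I would assemble the four theta factors in the two sign patterns to obtain the conditions defining $\mathcal{S}_1$ and $\mathcal{S}_2$ respectively, reducing the odd case $n\ge 5$ to $n-1$ as asserted.

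The crux, and the step I expect to be hardest, is the zero-mode term $\mathcal{Z}_{\mathcal{S}_i}$. Because $\vartheta$ is only quasi-periodic, $d\arg$ of the theta ratio fails to be single-valued around the $g$ handles of $\mathcal{M}$, and its periods are precisely the $g$ harmonic generators of $H^{\bullet}(\mathcal{H}^n_i)$. I would extract these periods from the quasi-periodicity factor of $\vartheta$ and show that subtracting $4\mathcal{Z}_{\mathcal{S}_i}=4\,\im\varphi(\cdot)(\im\Omega)^{-1}d\re\varphi(\cdot)$ cancels them exactly; this is the assertion that the correction is the Hodge projection $P_{\mathcal{S}_i}$ onto cohomology, and it is the genus-$g$ analogue of the single-valuedness (Arakelov/prime-form) factor, accounting for the $g$ zero-mode contributions announced in the abstract. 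Finally, the relations $\theta_{\mathcal{S}_1}(Q,P)=\theta_{\mathcal{S}_2}(P,Q)$ follow by inspection of the explicit formulas, using the parity of the odd theta functions and the reality of the $1$-form $d\arg$, which interchange the two sign patterns and hence the roles of $\mathcal{S}_1$ and $\mathcal{S}_2$.
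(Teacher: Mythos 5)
Your skeleton matches the paper's own proof: mirror charges on $P_n=u(\mathbb{H}^+)$ with Schwarz--Christoffel maps for $n=2,3$, and for $n=2g+2$ a ratio of four odd-characteristic theta functions on $\mathcal{J}(\mathcal{M})$ corrected by a zero-mode term. However, two steps you take for granted are exactly where the real work lies, and one of them is wrong as stated. The claim that $\vartheta(\varphi(P)-\varphi(Q)+\mathcal{A}_g,\Omega)$ ``vanishes simply exactly at $P=Q$'' fails for $g\geq 2$: by Riemann's vanishing theorem this function of $P$ (for fixed $Q$) either vanishes identically or has exactly $g$ zeros, and with the choice $\mathcal{A}_g=\varphi(P_3P_5\dots\widehat{P}_{2j+1}\dots P_{2g+1})+\mathcal{K}$ its zero divisor is $Q\mathcal{D}_{g-1,j}$, i.e.\ the diagonal zero plus $g-1$ spurious zeros. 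One must (i) choose $\mathcal{A}_g$ non-singular --- this uses $i(P_3P_5\cdots P_{2g+1})=0$ for the hyperelliptic curve, so that the theta factors are not identically zero on $\mathcal{M}$ --- and (ii) show that the spurious zeros cancel between numerator and denominator of the four-factor ratio; the paper does this in Proposition 3 by tracking the divisors under the sheet exchange $J$ and the conjugation automorphism $C$, including the sign ambiguity $\overline{w(\bar s)}=\pm w(s)$. Without this divisor analysis you cannot conclude that the kernel is regular away from the diagonal in $P_n^{\circ}\times P_n^{\circ}$, which your own characterization requires.

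Second, you posit at the outset that the Hodge--Kodaira identity is ``equivalent'' to closedness, the diagonal singularity, and the boundary conditions, and you lean on uniqueness of the object so characterized. Neither is free: the paper proves the splitting directly (Appendix C, Proposition 5) by Stokes-type computations, the nontrivial part being that $d\mathcal{Z}_{\mathcal{S}_i}$ is exactly the integral kernel of the projection onto $H^1(\mathcal{H}^n_{\mathcal{S}_i})\cong\mathbb{C}^g$, proved via Riemann's bilinear relations on the fundamental polygon of $\mathcal{M}$; uniqueness is nowhere used and is in fact explicitly deferred by the author to later work, so building your argument on it would leave a hole. Relatedly, your quasi-periodicity account of the zero-mode term identifies the right source but misses the sharper structural point of Appendix B: the reflection constraints on the four half periods $\mathcal{A}_g,\mathcal{B}_g,\mathcal{C}_g,\mathcal{D}_g$ are mutually \emph{incompatible}, so the reflections along the sides reached by $\Omega$-translations must be relaxed, producing defects $8\pi\re v_1+8\pi\re v_j$ in Propositions 1--2; the coefficient $4(\im\Omega)^{-1}$ in $\mathcal{Z}_{\mathcal{S}_i}$ is then fixed precisely so that these defects cancel (Lemma 3), not merely chosen to kill the periods of a multivalued form. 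Your plan would need to reproduce this compatibility analysis to pin down both which reflections to relax and the exact normalization of the zero-mode term.
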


The paper is structured as follows. In Section 3 we introduce  the P$\sigma$M, its Batalin Vilkovisky action and  the superpropagators as homotopy operators. We define the concept of  coisotropic branes in Poisson geometry and we show how they  naturally arise in the BV quantization of the P$\sigma$M action describing the one brane case. We compute then the dimension of the space of the "zero modes", i.e. the dimension of the De Rham cohomology of the differential complexes $\mathcal{H}^n_i$. Section 4 and 5 deal with the two and three branes cases: we  want  integral kernels which are zero on the boundary of $P_2$ and $P_3$ for particular sets of indices specified by the presence of branes. To determine such kernels is equivalent to find generalizations of  Kontsevich's angle formula \cite{Kont} with the correct boundary conditions imposed by the constraint equations defining the branes.

 Once defined the  polygon $P_n$ as image under $u$ of $\mathbb{H}^+$ then it is possible to express the boundary conditions of the integral kernels as reflections properties relative to the sides of the polygon $P_n$ of  maps $\psi(u,v)_{\mathcal{S}_i}: P_n \times P_n \rightarrow \mathbb{R}/ 2\pi\mathbb{Z}$: the method is a multi brane generalization of the classical "mirror charges" formalism.
 All the  problem is then reduced to find explicit $\psi(u,v)_{\mathcal{S}_i}$  satisfying the correct reflections. The presence of branes implies the existence of  non zero simple observables even in the trivial Poisson structure case; the complete analysis of the algebra of observables for the P$\sigma$M with branes will appear elsewhere.
 In Section 6 we deal with  $n \geq 4$ branes; we write the $\psi(u,v)_{\mathcal{S}_i}$ maps via  first order Riemann theta functions with odd non singular characteristics and hyperelliptic curves $\mathcal{M}$ \cite{FK}: we refer to this constructions as the canonical setting. With more than three branes we have zero modes contributions: we show that their presence is stricly correlated with the boundary conditions the integral kernels must satisfy.  Section 7 is about   conclusions and further developments; in Appendices A and B we put known material on Riemann theta functions and  the proof of Proposition 1. In Appendix C we describe some properties of the superpropagators as homotopy operators.
 
 This paper is meant to be an introduction to the P$\sigma$M with $n\geq 2$ branes; with the explicit formulas for the superpropagators we can study the algebraic structure and the deformation of the associative  product of the algebra of observables, relating it to the already known results which show $P_{\infty}$ properties for the one brane case. With non trivial Poisson structure it is possible to extend the results of   \cite{Cattaneo:1999fm, Cattaneo:2005zz,Calvo:2005th} for the deformation quantization of branes in the sense of $A_{\infty}$ bimodules.
 In order to study the uniqueness of the superpropagators one can introduce the Laplacian in a suitable metric completion of the differential complexes $\mathcal{H}^n_i$. This topic and the relations between the superpropagators for the $n \geq 4$ cases and classical kernels on compact Riemann surfaces will be discussed in \cite{next}.
  
  The non perturbative study of the P$\sigma$M in presence of general boundary conditions  involves the construction of a generalization of the Fukaya $A_{\infty}$-category:  work in this direction is in progress motivated by the constructions of Section 6 in terms of Riemann surfaces. The idea is to begin producing a local version of the Fukaya category applying the HPT tools to the differential complexes $(\mathcal{H}^n_{\mathcal{S}_i},d)$ to generate an $A_{\infty}$ structure on the cohomology, then define an associated $A_{\infty}$ category. The adjective local refers to the fact that we are in the perturbative contest, expanding around zero modes, in presence of $n\geq 4$ branes.

\section{Acknowledgements}
The idea to use the mirror charges formalism to solve boundary conditions is already  contained in \cite{Cattaneo:2005zz} where Cattaneo and Felder use it to produce the generalized angle maps in the $n=2$ branes case. I would really like  to  thank G. Felder for introducing me this method  and for the many inspiring discussions we had throughout the writing of this paper. I thank also A. Cattaneo, H. Kajiura, C. Rossi for helpful comments  on the initial drafts of this work. This work has been partially supported by the Swiss National Science Fundation (grant 200020-105450).

\section{The P$\sigma$M: classical action functional and BV quantization }
The Poisson Sigma Model \cite{Schaller,Ikeda} is a two-dimensional topological Sigma  theory defined on a two dimensional orientable surface $\Sigma$ and with  a Poisson manifold $(M,\pi)$ as target. It is defined by a classical functional $S$ on the space of bundle maps $(X,\eta): T\Sigma \rightarrow T^*M$ with base map $X: \Sigma \rightarrow M$ and $\eta \in \Omega(\Sigma, X^*T^*M)$, where $S$ is given explicitly by:

\begin{equation}
\label{poisson}
S[X,\eta]=\int_{\Sigma} \langle \eta, d X \rangle  +\frac{1}{2}\langle \pi \circ X,\eta \wedge \eta \rangle,
\end{equation}
and $\langle,\rangle$ is the canonical pairing between vectors and covectors of $M$. 
The Euler-Lagrange equations express the condition that the pair $(X,\eta)$ is a Lie algebroid morphism between $T\Sigma$  and $T^*M$. 
Under the infinitesimal gauge transformations $\delta_{\beta}X^i=\pi^{ij}(X)\beta_j$, $ \delta_{\beta}\eta_i=-d\beta_i-\partial_i\pi^{jk}(X)\eta_j\beta_k$, where $\beta=\beta_idX^i$ is a section of $X^*(T^*M)$, the action (\ref{poisson}) transforms by a boundary term $\delta_{\beta}S=-\int_{\Sigma}d(dX^i\beta_i)$. The commutator of two infinitesimal gauge transformations is a gauge transformation only on shell, that is when Euler-Lagrange equations are fulfilled for the action $S$. Thus the gauge transformations form a Lie algebra only when acting on the set of critical points, or classical solutions of $S$: in order to quantize the model we need to implement the Batalin-Vilkovisky (BV) procedure; before this one defines the  BRST operator $\delta_0$ \cite{Cattaneo:1999fm}  (the gauge parameters $\beta_i$ are promoted to ghost fields ) and the  boundary conditions for the model \cite{CalF}. 

\subsection{Perturbative analysis: space filling brane case}

Varying the action (\ref{poisson}) we observe the appearance of a boundary term of the form

\begin{eqnarray}
\label{bound2}
\int_{\partial\Sigma}\langle \eta, \delta X\rangle.
\end{eqnarray}
 If we assume that $\Sigma:=\{z \in \mathbb{C}  \mid~ \mid z \mid \leq 1 \}$ with $\eta$ vanishing when contracted with vectors tangent to the boundary, then (\ref{bound2}) cancels; imposing that  the infinitesimal parameters $\beta_i$ vanish on  $\partial\Sigma$ we cancel $\delta_{\beta}S$. We refer to these BC as the "space filling brane" case \cite{Cattaneo:1999fm}.
We introduce some notation concerning the BV formalism; we refer to \cite{Cattaneo:1999fm} for the full analysis and to \cite{Super} for a geometrical approach.  We introduce the "antifields" $\phi^+ := (X^+,\eta^+,\beta^+)$ with complementary ghost number and degree as differential forms on $\Sigma$ w.r.t. the "fields"  $\phi := (X,\eta,\beta)$. Then we look for the so called Batalin-Vilkovisky action $S_{BV}[\phi,\phi^+]$ of ghost number zero such that $S_{BV}[\phi,0]$ reduces to the classical action $S[\phi]$ and $(S_{BV},S_{BV})-2\hbar i \Delta S_{BV}=0$, where $(,)$ is the BV antibracket and $\Delta$ is the BV Laplacian.  

Then we fix the  gauge $d*\eta_i=0$. The Hodge operator $*$ is explicitly given, in terms on the coordinates on $\Sigma$, by $*dx^1=dx^2,$ and $*dx^2=-dx^1$ with $z \in \Sigma, z=x_1+ix_2$ and the gauge fixing fermion is $\Psi =-\int_{\Sigma} d\gamma^i *\eta_i$. In order to express the BV Laplacian one can introduce the Hodge dual antifields $\phi^*_{\alpha}:= *\phi_{\alpha}$ with the rule that they must have the same boundary conditions as the fields.  Selecting the Lagrangian submanifold $\mathcal{L}$ defined by the equations $\phi^+_{\alpha}=\frac{\partial}{\partial\phi_{\alpha}}\Psi$ and adding the antighost term $-\int \lambda^i\gamma^+_i$, then the gauge fixed action is:

\begin{eqnarray*}
S_{gf}&=&\int_{\Sigma} \eta_i \wedge d X^i +\frac{1}{2}\pi^{ij}(X)\eta_i \wedge \eta_j-*d\gamma^i \wedge (d\beta_i+\partial_i\pi^{kl}(X)\eta_k\beta_l)+\nonumber \\
&&-\frac{1}{4}*d\gamma^i \wedge *d\gamma^j  \partial_{i}\partial_j \pi^{kl}(X)\beta_k\beta_l -\lambda^id*\eta_i,
\end{eqnarray*}
where, in particular, $\eta_i^+=*d\gamma^i$ and the Lagrange multipliers $\lambda^i$  satisfy Dirichlet boundary conditions on $\partial\Sigma$. In the perturbative expansion of the space filling brane case we select $\Sigma=D$, $D$ unit disk in $\mathbb{C}$ and we expand around the classical solution  $X(x)=y$, $\eta=0$, i.e. $X(x)=y+\xi(x)$, $\xi(x)$ fluctuation field. The  kinetic part of the gauge fixed action:

\begin{eqnarray*}
S_{gf}^0&=&\int_{\Sigma} \eta_i \wedge d \xi^i -\lambda^id*\eta_i-*d\gamma^i \wedge d\beta_i.
\end{eqnarray*}
We map conformally the disk  to the (compactified) complex upper half plane $\mathbb{H}^{+}$ and we use the standard complex coordinates $(z,w)$ on it: introducing  $\tilde{\xi} := \xi-*d\gamma, \tilde{\eta}:= \beta+\eta$ we obtain  what in \cite{Cattaneo:1999fm} are called the "superpropagators" for the space filling brane case:

 \begin{eqnarray}
\label{sp} 
 ~~~~G(w,z)_{\mathcal{A}} :=
\langle \tilde{\xi}^k(w)\tilde{\eta}_j(z) \rangle= \frac{i\hbar}{2\pi}\delta_{\mathcal{A}}d\phi(z,w),
\end{eqnarray}
where $d=d_z+d_{w}$,  $\phi(z,w):=\frac{1}{2i}\ln \frac{(z-w)(z-\bar{w})}{(\bar{z}-\bar{w})(\bar{z}-w)}=\arg(z-w)+\arg(z-\bar{w})$ is  Kontsevich's angle function \cite{Kont} with $d_z=dz\frac{\partial}{\partial z}+d\bar{z}\frac{\partial}{\partial \bar{z}}$ and $k,j \in \mathcal{A}=\{1,2\dots,m\}$, $m=$dim$M$. The angle function $\phi: \mathbb{H}^+ \times \mathbb{H}^+ \rightarrow \mathbb{R}/2\pi\mathbb{Z}$ associates to each pair of distinct points $(z,w)$ in the upper half plane the angle between the geodesics w.r.t. the Poincar\'e metric connecting $z$ to $+i\infty$ and to $w$, measured in the counterclockwise direction.  On the boundary $\partial \mathbb{H}^+=\mathbb{R}\cup \{\infty \}$ we have   $\phi( z \in \mathbb{R}\cup \{\infty \},w)=0$.

 \subsection{Branes as generalized boundary conditions}A submanifold of the Poisson manifold $M$ chosen as boundary condition  is called brane; symmetries of the model give strict characterizations of branes in terms of  Poisson geometry. A bunch of definitions;  the sharp map $\pi^{\sharp}$ for the Poisson manifold $(\mathcal{M},\pi)$ is defined as $\pi^{\sharp}: T^*M \rightarrow TM$, with $\langle \pi^{\sharp}(y)\sigma,\tau \rangle=\pi(y)(\sigma,\tau), \forall y \in M$, $\forall \sigma,\tau \in T^*_yM$ and $\langle,\rangle$ denotes the canonical pairing.  A submanifold $\mathcal{C}$ of the Poisson manifold $(M,\pi)$ is called pre-Poisson \cite{CZ} if $\pi^{\sharp}(N^*\mathcal{C}) + T\mathcal{C}$ has constant rank along $\mathcal{C}$: in \cite{CalF} this was called a "submanifold with strong regular conditions". In the symplectic context this condition is equivalent to $\mathcal{C}$ being presymplectic. In \cite{Calvo:2005th} it is shown that if $\mathcal{C}$ is pre-Poisson then $A\mathcal{C}=\pi^{\sharp-1}T\mathcal{C}\cap N^*\mathcal{C}$ is a Lie subalgebroid of the full Lie algebroid $T^*M$. A submanifold $\mathcal{C} \subset M$ is called coisotropic if $\pi^{\sharp}(N^*\mathcal{C}) \subset T\mathcal{C}$. It follows from the Jacobi identity for $\pi$ that the characteristic distribution $\pi(N^*\mathcal{C})$ on the coisotropic submanifold $\mathcal{C}$ is involutive; the corresponding foliation is called the characteristic foliation. In the symplectic context, $\pi^{\sharp}$ yields an isomorphism between $N^*\mathcal{C}$ and $T^{\perp}\mathcal{C}$ and we recover the usual definition of coisotropic submanifolds in the symplectic case: $T^{\perp}\mathcal{C} \subset T\mathcal{C}$, where $T^{\perp}\mathcal{C}$ is the subbundle of $T_{\mathcal{C}}M$ of vectors that are symplectic orthogonal to all vectors of $T\mathcal{C}$. If $\mathcal{C}$ is coisotropic, then $A\mathcal{C}=N^*\mathcal{C}$.  Pre-Poisson submanifolds are the most general boundary conditions for P$\sigma$M compatible with symmetries \cite{C2007}; it can be  shown (cfr. \cite{CZ}) that if $\mathcal{C}$ is Pre-Poisson in $M$ then it is always possible to find a cosymplectic submanifold of $M$ which contains $\mathcal{C}$ as a coisotropic submanifold: then it is enough to consider coisotropic submanifolds as boundary conditions for the P$\sigma$M. This means that, in the one brane case, given a coisotropic submanifold $\mathcal{C}$  of $M$ (i.e the brane),  we impose the boundary conditions $X|_{\partial\Sigma}: \partial\Sigma \rightarrow \mathcal{C}$, $i_{\partial\Sigma}^*\eta \in \Gamma( X^*N^*\mathcal{C})$, where $i_{\partial\Sigma}$ is the inclusion map $\partial\Sigma \hookrightarrow \Sigma$ and the ghosts on the boundary satisfy $i_{\partial\Sigma}^*\beta \in \Gamma(X^*N^*\mathcal{C})$.

 \subsection{Perturbative analysis: one brane case... }  The perturbative analysis in presence of a brane is performed considering the case where $M$ is an open subset of $\mathbb{R}^n$ with coordinates $x^1,\dots,x^n$ and the brane $\mathcal{C}$ is given by the constraint equations $x^{\mu}=0, \mu \in I ,$ with $\mu=m+1,\dots,n$. The tangent space to a point of $\mathcal{C}$ is spanned by $\frac{\partial}{\partial x^{i}}$, $i=1,\dots,m$ and the conormal bundle  by $dx^{\mu}$, $\mu=m+1,\dots,n$. We follow the convention that latin indices run along the brane, that is over $\left\{1,\dots,m \right\}$,  while  greek indices are associated to coordinates normal to  the brane and run over $\left\{m+1,\dots,n \right\}$. The whole boundary of the disk, (or conformally, the compactified real line in the upper half plane), is mapped to the brane $\mathcal{C}$:  the splitting of the indices into $\mathcal{S}_1:=I^c$ and $\mathcal{S}_2:=I$ induces   $G_{\mathcal{S}_1}(\omega,z)$ and $ G_{\mathcal{S}_2}(\omega,z)$ and the  boundary conditions for $\xi$ and $\eta$ become  $\xi^{\mu}|_{\mathcal{C}}=0$ and $i^*_{\mathcal{C}}\eta_{i}=0$. Consequently the "superpropagators" with the correct boundary conditions are \cite{CF}:

\begin{eqnarray}
G_{\mathcal{S}_1}(w,z)= \frac{i\hbar}{2\pi}\delta_{\mathcal{S}_1}d\phi(z,w), ~~G_{\mathcal{S}_2}(w,z)= \frac{i\hbar}{2\pi}\delta_{\mathcal{S}_1}d\phi(w,z), \label{ppp2}
\end{eqnarray}
  where $\phi$ is  Kontsevich's angle function and $\delta_{\mathcal{S}_i}$ is the Kronecker delta restricted to pairs of indices in $_{\mathcal{S}_i}$. This result allows to study the quantization of the coisotropic brane $\mathcal{C}$ via path integral, with the introduction in the diagrammatics of straight and wavy lines  induced by the presence of the two   index sets $\mathcal{S}_1$ and $\mathcal{S}_2$.

\subsection{...and two or more branes cases.}
In presence of two or more branes a more general analysis of the boundary conditions is needed. The branes are defined as usual by the constraints 

\begin{equation}
\label{branes}
\mathcal{C}_j=\left\{ x^{\mu_j}=0 \mid \mu_j \in I_j  \right\}
\end{equation}
and given $n \geq 2$ branes (\ref{branes}) we define the index sets:

\begin{eqnarray}
&&\mathcal{S}_1 :=\left\{ \begin{array}{c} I_1^c\cap I_2 \cap I^c_3\cap \dots \cap I_n  ~~\mbox{ $n$ even} \\ I_1^c\cap I_2 \cap I^c_3\cap \dots \cap I^c_n  ~~\mbox{$n$ odd}, \end{array} \right.  \nonumber \\
&&\mathcal{S}_2:=\left\{ \begin{array}{c} I_1\cap I_2^c \cap I_3\cap \dots \cap I_n^c  ~~\mbox{$n$ even} \\  I_1\cap I_2^c \cap I_3\cap \dots \cap I_n ~~\mbox{$n$ odd}  \end{array}\right. \label{index!}
\end{eqnarray}
Let $\Sigma:=P_n$ be the source manifold for the P$\sigma$M in presence of $n\geq 2$ branes.  $P_n$ is defined as a $n$ sided convex polygon with boundary $\partial\Sigma = \partial P_n$ given by the decomposition $\partial \Sigma = \cup_{i=1}^{n} \partial P_n^i$. The corners of the polygon are the elements of  the set $\{ \partial P_n^i \cap \partial P_n^{i+1} \}_{i=1,\dots,n ~\mbox{\tiny{mod n}}}$. The polygon $P_n$ is fixed by the condition $P_n :=u(\mathbb{H}^+)$, where   $u:\mathbb{H}^+ \rightarrow P_n$ is a homeomorphism depending by the number $n$ of branes considered in the perturbative analysis and $\mathbb{H}^+$ denotes the compactified complex upper half plane. In particular this allows to give an ordering to the sides $\partial P_n^i$ in a consistent way. All the possible index sets $\mathcal{A}_k$, $k=1,\dots,2^n$ for the integral kernels 

\begin{equation}\label{intker}
\theta(Q,P)_{\mathcal{A}_k} := -\frac{i}{\hbar}\langle \tilde{\xi}^{a_1}(Q)\tilde{\eta}_{a_2}(P) \rangle,
\end{equation}
with $(P,Q) \in P_n \times P_n, a_1,a_2 \in \mathcal{A}_k$ are given by the intersections of the sets $I_j,I^c_j$ defining the $n$ branes (\ref{branes}).
In the sequel we will drop out the $\delta^{a_2}_{a_1}$  dependence in (\ref{intker}); we will reduce to the $a_1=a_2$ case to simplify notation.
 Fixing the number of branes $n$, one repeats the same calculations of the preceding subsections, giving new boundary conditions to eliminate the additional terms
 $\int_{\partial\Sigma} \langle\eta,\delta X\rangle, \int_{\partial\Sigma}\beta dX $ and to develop a coherent BV formalism. The boundary conditions for the $X$ field are given by $X|_{\partial P_n^i }:\partial P_n^i \rightarrow \mathcal{C}_i$ while we choose $\eta|_{\partial P_n^i} \in \Gamma (X^*N^*\mathcal{C}_i)$, $\beta|_{\partial P_n^i} \in \Gamma (X^*N^*\mathcal{C}_i)$ $i=1,\dots,n$ generalizing the one brane case. More explicitly, $\eta|_{\partial P_n^i}:= i^{*}_{\partial P_n^i}\eta$, $\beta|_{\partial P_n^i}:= i^{*}_{\partial P_n^i}\beta$ where $i_{\partial P_n^i}:\partial P_n^i \hookrightarrow P_n$ denotes the inclusion. We apply the same notation to all the component fields appearing in the sequel.
 $\beta$ satisfies the same boundary conditions of $\eta$ as it belongs to the same superfield $\tilde{\eta}$ (we refer to \cite{Cattaneo:1999fm} for a brief introduction to the superfield formalism): in particular this  allows to cancels the boundary terms coming from $-\int_{\Sigma}d(dX^i\beta_i)$).
At the same time $*d\gamma=\eta^+$ belongs to the same superfield $\tilde{X}$ of $X$.

 In the sequel we will refer to the sides $\partial P_n^i$ as branes, whenever confusion does not arise with the corresponding (under $X$) coisotropic submanifolds $\mathcal{C}_i \subset (M,\pi)$.
 For simplicity let us discuss the boundary conditions in the $n=2$ branes case explicitly, i.e. $\partial\Sigma=\partial P_2^1 \cup \partial P_2^2$,  $\mathcal{A}_1=I_1 \cap I_2$, $\mathcal{A}_2=I_1^c \cap I_2^c$, $\mathcal{S}_1=I_1^c \cap I_2$, $\mathcal{S}_2=I_1 \cap I_2^c$. 
As $\delta X=\pi^{\sharp}\beta$ then $\delta X\mid_{\partial P_2^i}\in T\mathcal{C}_i$, 
$X^{\mathcal{A}_1}\mid_{\partial P_2^1}=X^{\mathcal{S}_2}\mid_{\partial P_2^1}=X^{\mathcal{A}_1}\mid_{\partial P_2^2}=X^{\mathcal{S}_1}\mid_{\partial P_2^2}=0$ and  $\eta^{\mathcal{A}_2}\mid_{\partial P_2^1}=\eta^{\mathcal{S}_1}\mid_{\partial P_2^1}=\eta^{\mathcal{A}_2}\mid_{\partial P_2^2}=\eta^{\mathcal{S}_2}\mid_{\partial P_2^2}=0$. In particular we get

 \begin{eqnarray*}
   X^{\mathcal{S}_1}\mid_{\partial P_2^2}=0, \eta^{\mathcal{S}_1}\mid_{\partial P_2^1}=0,   \\     
 X^{\mathcal{S}_2}\mid_{\partial P_2^1}=0, \eta^{\mathcal{S}_2}\mid_{\partial P_2^2}=0,
 \end{eqnarray*}
 i.e. specifing the index sets $\mathcal{S}_i$ we have Dirichlet boundary conditions for the components fields of $(\tilde{X},\tilde{\eta})$ only on half of the boundary $\partial P_2$.  A general fact: given $n$ number of branes, then the  "non trivial" integral kernels are those associated to the alternating Dirichlet boundary conditions respect to the points $(P,Q)$ of the polygon $P_n$, with index sets given by $\mathcal{S}_1,\mathcal{S}_2$.
In presence of all the other index sets, it is possible to "reduce" to a lower number of branes case to compute the superpropagators.

 Developing the BV formalism we generalize the gauge fixing condition $d*\eta=0$ to eliminate the boundary terms $\int_{\partial P_2} \lambda^i*\eta_i=\int_{\partial P_2^1} \lambda^{i_1}*\eta_{i_1}\mid_{i_1\in\mathcal{S}_2}+
\int_{\partial P_2^1} \lambda^{i_2}*\eta_{i_2}\mid_{i_2\in\mathcal{A}_1}+\int_{\partial P_2^2} \lambda^{i_3}*\eta_{i_3}\mid_{i_3\in\mathcal{S}_1} +\int_{\partial P_2^2} \lambda^{i_4}*\eta_{i_4}\mid_{i_4\in\mathcal{S}_2}$ imposing the extended gauge fixing

\begin{eqnarray}
d*\eta&=&0, \nonumber \\
*\eta^{\mathcal{S}_1}\mid_{\partial P_2^2}&=&*\eta^{\mathcal{S}_2}\mid_{\partial P_2^1}=0 \label{egf},
\end{eqnarray}
and $*\eta^{\mathcal{A}_1}\mid_{\partial P_2^2}=*\eta^{\mathcal{A}_1}\mid_{\partial P_2^2}=0$ with boundary conditions for the Lagrange multipliers given by $\lambda^{\mathcal{S}_1}\mid_{\partial P_2^1}=\lambda^{\mathcal{S}_2}\mid_{\partial P_2^2}=\lambda^{\mathcal{A}_2}\mid_{\partial P_2^1}=\lambda^{\mathcal{A}_2}\mid_{\partial P_2^2}=0$. The extended gauge fixing also imposes $d\gamma^{\mathcal{S}_1}\mid_{\partial P_2^1}=d\gamma^{\mathcal{S}_2}\mid_{\partial P_2^2}= d\gamma^{\mathcal{A}_2}\mid_{\partial P_2^1}=d\gamma^{\mathcal{A}_2}\mid_{\partial P_2^2}=0$. With the choice (\ref{egf}) we fix the boundary conditions for the components fields of the superfields $(\tilde{X},\tilde{\eta})$ on the whole $\partial P_2$ for the index sets $\mathcal{S}_i$, getting  alternating Dirichlet-Neumann boundary conditions.

The general $n$ branes case follows the same lines of the $n=2$ case; the Dirichlet boundary conditions and the extended gauge fixing are collected in the following table.

\begin{center}
\begin{tabular}{|r|}\hline
Dirichlet Boundary conditions in the $n$ branes case \\ \hline
 $X^{\mathcal{S}_1}\mid_{\partial P_n^{even}}=0, \eta^{\mathcal{S}_1}\mid_{\partial P_n^{odd}}=0$~~~~~~~~~~~~~~~~ \\ 
 $X^{\mathcal{S}_2}\mid_{\partial P_n^{odd}}=0, \eta^{\mathcal{S}_2}\mid_{\partial P_n^{even}}=0$~~~~~~~~~~~~~~~~ \\  \hline
Extended gauge fixing in the $n$ branes case \\ \hline
 $d*\eta=0$~~~~~~~~~~~~~~~~~~~~~~~~~~~~ \\
 $*\eta^{\mathcal{S}_1}\mid_{\partial P_n^{even}}=0, *\eta^{\mathcal{S}_2}\mid_{\partial P_n^{odd}}=0$~~~~~~~~~~~~~~~ \\ \hline
\end{tabular}
\end{center}
This implies that the integral kernels (\ref{intker}) satisfy

\begin{eqnarray}\label{bc}
\theta(Q,P \in \partial P_{n}^{odd} )_{\mathcal{S}_1}&=&0, \nonumber \\ 
\theta(Q \in \partial P_{n}^{even},P  )_{\mathcal{S}_1}&=&0, \nonumber \\
\theta(Q,P \in \partial P_{n}^{even} )_{\mathcal{S}_2}&=&0, \nonumber \\ 
\theta(Q \in \partial P_{n}^{odd},P  )_{\mathcal{S}_2}&=&0, 
\end{eqnarray} 
with $\mathcal{S}_i$ given by (\ref{index!}).

\subsection{Superpropagators and homotopy operators }

The presence of branes in the perturbative expansion induces the "alternating" boundary conditions on the component fields of the superfields $(\tilde{X},\tilde{\eta})$ as shown in the preceding section. This allows to give the following

\begin{Def}{\bf{Relevant superpropagators for the P$\sigma$M in presence of branes}}
The   gauge fixed homotopy operators $G_{\mathcal{S}_i}$
which realize a Hodge-Kodaira splitting

\begin{equation}\label{splitting}
dG_{\mathcal{S}_i}+G_{\mathcal{S}_i}d=I-P_{\mathcal{S}_i}
\end{equation}
 of the differential complexes  $(\mathcal{H}_{\mathcal{S}_i}^n,d):= \left\{ \begin{array}{c} \Omega(P_n,\partial P^{even}_n) ~i=1 \\ \Omega(P_n,\partial P^{odd}_n) ~~i=2 \end{array} \right.$ are called relevant superpropagators for the P$\sigma$M in presence of $n$ branes (\ref{branes}). With   $P_{\mathcal{S}_i}$ we  denote a projection onto cohomology $H(\mathcal{H}_{\mathcal{S}_i}^n)$. The $G_{\mathcal{S}_i}$ operators act via

\begin{equation}\label{operator}
(G_{\mathcal{S}_i}\phi)(Q):=\int_{P_n} \theta(Q,P)_{\mathcal{S}_i} \wedge \phi(P) ~~~~Q \in P_n,
\end{equation}
$\forall \phi \in (\mathcal{H}_{\mathcal{S}_i}^n,d)$. The integral kernels (\ref{intker})  are given explicitly by the sum 

\begin{equation}\label{intkernn}
\theta(Q,P)_{\mathcal{S}_i}:=\frac{1}{2\pi}d \arg \psi(P,Q)_{\mathcal{S}_i}-\mathcal{Z}_{\mathcal{S}_i}(Q,P),
\end{equation}
 where
\begin{equation*}
 \mu_{\mathcal{S}_i}(P,Q):= \arg \psi(P,Q)_{\mathcal{S}_i} 
\end{equation*}
is called generalized angle function and  $d\mathcal{Z}_{\mathcal{S}_i}(Q,P)=\mathcal{P}(Q,P)_{\mathcal{S}_i}$ is the integral kernel of the projection $P_{\mathcal{S}_i}$ onto cohomology, $d=d_P+d_Q$. The generalized angle functions satisfy $d\mu_{\mathcal{S}_i}(P,Q) \sim d\arg (z_P-w_Q)$ for $P \rightarrow Q$, where $(z_P,w_Q)$ are coordinates of the points $(P,Q)$.
\end{Def}
 
The explicit form of the integral kernels is simply given by a contribution which generalizes the Kontsevich angle function due to the boundary conditions imposed by the presence of the branes and  a non trivial term due to  de Rham cohomology $H(\mathcal{H}_{\mathcal{S}_i}^n)$. With an explicit choice of metric (i.e. a Hodge star operator) it is possible to introduce a Laplacian operator on a metric completion  the differential complexes $(\mathcal{H}_{\mathcal{S}_i}^n,d)$ (and a gauge fixing for the theory, as we have already seen). 
Defining   harmonic forms on the metric completion one can study the uniqueness of the relevant superpropagators; we will discuss this elsewhere \cite{next}.

\begin{Lem}
Let $(\mathcal{H}_{\mathcal{S}_i}^n,d)$ $i=1,2$ be as above; then the De Rham cohomologies 
$H^{\bullet}(\mathcal{H}^n_{\mathcal{S}_i})$ are given by:
\begin{eqnarray*}
&&H^p(\mathcal{H}_{\mathcal{S}_i}^n) \simeq \left\{ \begin{array}{c} \mathbb{C}^{\frac{n-2}{2}} \null ~~~p=1 \\ 0 \null  ~~~\mbox{otherwise} \end{array} \right. \mbox{for n even},\\
&&H^p(\mathcal{H}_{\mathcal{S}_i}^n) \simeq \left\{ \begin{array}{c} \mathbb{C}^{\frac{n-3}{2}} \null ~~~p=1 \\ 0 \null ~~~\mbox{otherwise} \end{array} \right.  \mbox{for n odd}.
\end{eqnarray*}
\end{Lem}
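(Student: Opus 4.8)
\emph{Plan.} The operative observation is that $(\mathcal{H}^n_{\mathcal{S}_i},d)$ is nothing but the relative de Rham complex of the polygon modulo the sides carrying Dirichlet conditions. Writing $A_1:=\partial P_n^{even}$ and $A_2:=\partial P_n^{odd}$, a form lies in $\mathcal{H}^n_{\mathcal{S}_i}$ precisely when its pullback to $A_i$ vanishes, so $\Omega(P_n,A_i)=\{\omega : \iota_{A_i}^*\omega=0\}$. Since $(P_n,A_i)$ is a good pair ($P_n$ being a disk with corners and $A_i$ a subcomplex of its boundary circle), the relative de Rham theorem gives $H^\bullet(\mathcal{H}^n_{\mathcal{S}_i})\simeq H^\bullet(P_n,A_i;\mathbb{C})$. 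I would first record this identification and note that $P_n$, being convex, is contractible.

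\emph{Counting the components of $A_i$.} The whole computation then reduces to the topology of $A_i$, a disjoint union of closed arcs (the selected sides), and the only input needed is the number $c_i$ of its connected components. Two selected sides lie in the same component exactly when they are cyclically adjacent; since selected and unselected sides alternate, I must only count which selected sides are joined across a shared corner. For $n$ even both $A_1$ and $A_2$ consist of $n/2$ pairwise non-adjacent sides, so $c_i=n/2$. For $n$ odd the wrap-around between side $n$ and side $1$ is the subtle point: in $A_1$ the even sides $2,4,\dots,n-1$ remain mutually separated, so $c_1=(n-1)/2$, whereas in $A_2$ the odd sides $1,3,\dots,n$ contain the adjacent pair $\{n,1\}$, which merges into one arc, giving $c_2=(n+1)/2-1=(n-1)/2$. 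Thus $c_i=n/2$ ($n$ even) or $(n-1)/2$ ($n$ odd), independently of $i$.

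\emph{The long exact sequence.} With $H^0(A_i)=\mathbb{C}^{c_i}$, $H^{>0}(A_i)=0$, and $H^0(P_n)=\mathbb{C}$, $H^{>0}(P_n)=0$, the long exact sequence of the pair
\[
0\to H^0(P_n,A_i)\to H^0(P_n)\xrightarrow{\,r\,} H^0(A_i)\to H^1(P_n,A_i)\to H^1(P_n)\to\cdots
\]
collapses: the restriction $r:\mathbb{C}\to\mathbb{C}^{c_i}$ is the diagonal, hence injective, so $H^0(P_n,A_i)=0$; exactness identifies $H^1(P_n,A_i)\simeq\operatorname{coker}r\simeq\mathbb{C}^{c_i-1}$, while for $k\geq 2$ the group $H^k(P_n,A_i)$ is squeezed between $H^{k-1}(A_i)=0$ and $H^k(P_n)=0$ and so vanishes. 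Substituting the values of $c_i$ yields $H^1\simeq\mathbb{C}^{(n-2)/2}$ for $n$ even and $H^1\simeq\mathbb{C}^{(n-3)/2}$ for $n$ odd, all other degrees being zero, which is the claim.

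\emph{Main obstacle.} The genuinely delicate points are not the homological algebra but, first, justifying the relative de Rham isomorphism in the presence of corners, for which I would invoke that a polygon is a manifold with corners and that its pair with a boundary subcomplex is an ANR (equivalently CW) pair; and second, the component count for odd $n$, where the $\{n,1\}$ adjacency is exactly what forces $\mathcal{S}_1$ and $\mathcal{S}_2$ to produce the same dimension despite selecting different numbers of sides. Everything else is forced, and the resulting dimension $g=(n-2)/2$ for $n=2g+2$ matches the $g$ zero-mode count announced in the Theorem.
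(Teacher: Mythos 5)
Your proof is correct and takes essentially the same approach as the paper, which obtains the same long exact sequence from the short exact sequence of complexes $0\to\Omega^{\bullet}(P_n,\partial P_n^{even})\to\Omega^{\bullet}(P_n)\to\Omega^{\bullet}(\partial P_n^{even})\to 0$ and then performs the same component count. Your explicit handling of the cyclic adjacency $\{n,1\}$ for $n$ odd and $i=2$ is precisely the detail the paper compresses into ``the other case is equivalent,'' so your write-up is, if anything, more complete than the original.
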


\begin{proof}
We do the analysis for $i=1$; the other case is equivalent. We write the short exact sequence $0 \hookrightarrow \ker i^* \hookrightarrow \Omega^{\bullet}(P_n) \rightarrow \im i^* \rightarrow 0$, where $\ker i^*:=\Omega^{\bullet}(P_n,\partial P_n^{even})$, $\im i^* := \Omega^{\bullet}(\partial P_n^{even})$ and $i: \sqcup_{i=1}^{\frac{n}{2}} \partial P_n^{2i} \hookrightarrow P_n$  for $n$ even or $i: \sqcup_{i=1}^{\frac{n-1}{2}} \partial P_n^{2i} \hookrightarrow P_n$  for $n$ odd are the inclusions of the even branes in $P_n$. This sequence induces a long exact sequence in cohomology; a standard counting  gives the thesis.
\end{proof}

\section{Two branes case}

In presence of two  branes $\mathcal{C}_1=\left\{ x^{\mu_1}=0 \mid \mu_1 \in I_1  \right\}$ and $\mathcal{C}_2=\left\{ x^{\mu_2}=0 \mid \mu_2 \in I_2  \right\}$ we can select the indices for the superpropagators in the sets $\mathcal{A}_1:= I_1 \cap I_2$, $\mathcal{A}_2:=I_1^c \cap I_2^c $ and $\mathcal{S}_1:=I_1^c \cap I_2 , \mathcal{S}_2:= I_1 \cap I_2^c  $ following (\ref{index!}). The $n=2$ sided polygon $P_2$ is defined as $P_2 :=u(\mathbb{H}^+)$, with the  Schwarz-Christoffel  mapping $u$ \cite{Gonzal} given by  

\begin{eqnarray*}
\label{M23}
z \rightarrow  u(z) := \sqrt{z}.
\end{eqnarray*}
Points $(P,Q) \in P_2 \times P_2$ are represented respectively by a pair of complex numbers $(u,v)$ in the first quadrant, with $u=u(z), v:=u(w)$ $~\forall (z,w) \in \mathbb{H}^+ \times \mathbb{H}^+$. $\partial P_2^1$ is given by the positive imaginary axis, while $\partial P_2^2$ is the positive real  axis.

The boundary conditions imposed by the index sets $\mathcal{S}_i$ are  $\theta(v,u\in\partial P_2^1 )_{\mathcal{S}_{1}}=\theta(v\in \partial P_2^2,u)_{\mathcal{S}_{1}}=0$,  $\theta(v,u\in\partial P_2^2)_{\mathcal{S}_{2}}=\theta(v\in\partial P_2^1,u)_{\mathcal{S}_{2}}=0$. 
Introducing the  maps

\begin{eqnarray}
\psi(u,v)_{\mathcal{S}_{1}}= \arg \frac{(u-v)(\bar{u}-v)}{(\bar{u}+v)(u+v)},~~~\psi(u,v)_{\mathcal{S}_{2}}=\arg \frac{(u-v)(\bar{u}+v)}{(\bar{u}-v)(u+v)}, \label{M2}
\end{eqnarray}
which satisfy the same boundary conditions of $\theta(v,u)_{\mathcal{S}_i}$ and considering that $H(\mathcal{H}^2_i)=\{0\}$, we get

\begin{Theorem}
The integral kernels for the  superpropagators $G_{\mathcal{S}_i}$ in presence of two branes are given by

\begin{equation*}
\label{zxc}
\theta(v,u)_{\mathcal{S}_i}= \frac{1}{2\pi}d\psi(u,v)_{\mathcal{S}_{i}},
\end{equation*}
with mirror maps  (\ref{M2}). The integral kernels  satisfy the additional boundary conditions $ \theta(v,u)_{\mathcal{S}_1}=\theta(v,\bar{u})_{\mathcal{S}_1}=\theta(-\bar{v},u)_{\mathcal{S}_1},$ $\theta(v,u)_{\mathcal{S}_2}=\theta(v,-\bar{u})_{\mathcal{S}_2}=\theta(\bar{v},u)_{\mathcal{S}_2}$, i.e. every boundary component of $P_2$ is labelled by a boundary condition for both the variables $(u,v)$. By construction $\theta(v,u)_{\mathcal{S}_1}=\theta(u,v)_{\mathcal{S}_2}$, $\theta(v,u)_{\mathcal{S}_2}=\theta(u,v)_{\mathcal{S}_1}$. 
 
\end{Theorem}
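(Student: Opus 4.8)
The plan is to reduce the statement to a short list of verifications about the explicit angle functions in (\ref{M2}), using Definition 1 and Lemma 1 to dispose of the cohomological term first. By Lemma 1 with $n=2$ one has $H^{\bullet}(\mathcal{H}^2_i)=0$, so $P_{\mathcal{S}_i}=0$ and the correction $\mathcal{Z}_{\mathcal{S}_i}$ appearing in (\ref{intkernn}) vanishes identically. Hence it suffices to exhibit angle functions $\psi(u,v)_{\mathcal{S}_i}$ that (i) satisfy the boundary conditions (\ref{bc}), (ii) carry the prescribed diagonal singularity $d\psi_{\mathcal{S}_i}\sim d\arg(u-v)$ as $P\to Q$, and then to check that the resulting $\theta=\frac{1}{2\pi}d\psi_{\mathcal{S}_i}$ is a homotopy kernel realizing $dG_{\mathcal{S}_i}+G_{\mathcal{S}_i}d=I$.

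First I would set up the mirror-charge geometry: on $P_2$ the side $\partial P_2^2$ is the positive real axis, fixed by the reflection $\rho_2:z\mapsto\bar z$, while $\partial P_2^1$ is the positive imaginary axis, fixed by $\rho_1:z\mapsto-\bar z$. The group $\langle\rho_1,\rho_2\rangle$ has the open quadrant as fundamental domain, and (\ref{M2}) is exactly the signed sum of elementary angles $\arg(\cdot)$ obtained by placing image charges at the reflections of the points under this group, the signs being fixed by the alternating Dirichlet pattern of the table in Section 3. I would then verify the boundary conditions by direct substitution: for $\mathcal{S}_1$, setting $u=it$ (so $\bar u=-u$) and using $\arg(-z)=\arg(z)+\pi$ shows $\psi_{\mathcal{S}_1}\equiv 0$ on $\partial P_2^1$, while taking $v\in\mathbb{R}_{>0}$ (so $\bar v=v$) and using $\arg(\bar u\mp v)=-\arg(u\mp v)$ shows $\psi_{\mathcal{S}_1}\equiv 0$ on $\partial P_2^2$; the analogous computation handles $\mathcal{S}_2$ on the complementary rays. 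Since $\psi_{\mathcal{S}_i}$ then vanishes identically along the relevant boundary locus for every position of the other point, the pullback of $\theta_{\mathcal{S}_i}=\frac{1}{2\pi}d\psi_{\mathcal{S}_i}$ to that locus vanishes, which is precisely (\ref{bc}).

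For the singularity and the homotopy identity I would note that in the open quadrant the only factor of $\psi_{\mathcal{S}_i}$ that degenerates on the diagonal is $u-v$, the remaining three being regular and nonzero there; hence $\psi_{\mathcal{S}_i}=\arg(u-v)+(\text{smooth})$ near $P=Q$, so $\theta_{\mathcal{S}_i}$ carries the normalization of Definition 1. The mirror factors are regular throughout the open quadrant (their zero loci lie in the reflected quadrants), so they do not contribute to $d\theta$ in the interior, while the boundary terms arising in $dG_{\mathcal{S}_i}+G_{\mathcal{S}_i}d$ are annihilated by (\ref{bc}); with $H^{\bullet}(\mathcal{H}^2_i)=0$ this gives $dG_{\mathcal{S}_i}+G_{\mathcal{S}_i}d=I$ by the same current computation as in the one-brane case \cite{CF}. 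The additional reflection symmetries follow from manifest invariances of (\ref{M2}): grouping the four factors and using $\arg\bar z=-\arg z$, one checks that $\psi_{\mathcal{S}_1}(u,v)$ is invariant under $u\mapsto\bar u$ and under $v\mapsto-\bar v$ (and $\psi_{\mathcal{S}_2}$ under $u\mapsto-\bar u$ and $v\mapsto\bar v$), so applying $d$ yields $\theta(v,u)_{\mathcal{S}_1}=\theta(v,\bar u)_{\mathcal{S}_1}=\theta(-\bar v,u)_{\mathcal{S}_1}$ and its $\mathcal{S}_2$ counterpart. Finally the one-line identity $\psi_{\mathcal{S}_1}(u,v)=\psi_{\mathcal{S}_2}(v,u)$ modulo an additive constant gives $\theta(v,u)_{\mathcal{S}_1}=\theta(u,v)_{\mathcal{S}_2}$, the ``by construction'' symmetry.

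The hard part will be the global, distributional bookkeeping rather than any individual computation: one must argue that the signed combination of image charges in (\ref{M2}) is the correct one, so that the Dirichlet pattern of Section 3 is reproduced and no spurious singularity survives in the interior, and that $d\,d\arg$ of the mirror terms genuinely contributes nothing off the boundary. All pointwise checks reduce to the two identities $\arg\bar z=-\arg z$ and $\arg(-z)=\arg z+\pi$ applied modulo $2\pi$, so the only real care is with branch cuts and with the current-theoretic meaning of the homotopy relation, both of which run parallel to the established one-brane treatment and therefore need only be adapted, not reproved.
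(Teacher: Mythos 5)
Your proposal is correct and follows essentially the same route as the paper: you introduce the mirror maps (\ref{M2}) via the mirror-charge reflections, verify they reproduce the Dirichlet pattern (\ref{bc}) and the diagonal singularity of Definition 1, invoke Lemma 1 to kill the cohomological term $\mathcal{Z}_{\mathcal{S}_i}$, and defer the homotopy identity to the current computation of the one-brane case (which is the paper's Appendix C argument specialized to trivial cohomology). Your write-up is in fact more detailed than the paper's, which simply asserts that the maps (\ref{M2}) satisfy the required boundary conditions and that $H(\mathcal{H}^2_i)=\{0\}$.
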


One can verify that $\theta(v,u)_{\mathcal{A}_1}= \frac{1}{2\pi}d\arg \frac{(u-v)(u+v)}{(u+\bar{v})(u-\bar{v})}=\frac{1}{2\pi}d\arg\frac{(z-w)}{(z-\bar{w})}=
\frac{1}{2\pi}d\phi(z,w)$ and  $\theta(v,u)_{\mathcal{A}_2}= \frac{1}{2\pi}d\arg \frac{(u-v)(u+v)}{(\bar{u}-v)(\bar{u}+v)}=\frac{1}{2\pi}d\arg\frac{(z-w)}{(\bar{z}-w)}=\frac{1}{2\pi}d\phi(w,z)$
with $\phi$  Kontsevich's angle function.

 The integral kernels here presented  correspond to the generalized angle functions in \cite{CF,CT};  they are used to construct an explicit quantum deformation of   bimodule structures  and to define a Kontsevich's product (associative!) in presence of two branes. Additional assumptions are necessary to take care of the faces produced in the compactification of some configuration spaces in order to guarantee associativity (see \cite{CF} ). The problem of finding integral kernels with correct boundary conditions is replaced by the easier task to write the $\psi(u,v)_{\mathcal{S}_{i}}$ maps satisfying some reflection properties  respect to the sides  $\partial P_n^j$: the $\psi(u,v)_{\mathcal{S}_{i}}$ are odd respect to these reflections, allowing  to determine the correct  kernels. The method is nothing but a generalization of the classical "mirror charges" formalism, due to the presence of multiple axis of symmetry, or branes.

\section{Three branes case}

In presence of three branes $\mathcal{C}_1, \mathcal{C}_2, \mathcal{C}_3$ we can select 8 different index sets. We define as usual $\mathcal{S}_1 :=I_1^c \cap I_2 \cap I_3^c$, $\mathcal{S}_2:=I_1 \cap I_2^c \cap I_3$;  the three sided polygon $P_3$ is defined via $P_3:=u(\mathbb{H}^+)$ with

\begin{eqnarray*}
\label{Mirror3}
z \rightarrow u(z) := \frac{1}{2\pi} \int_1^z \frac{ds}{\sqrt{s(s-1)}}, \forall z \in \mathbb{H}^+.
\end{eqnarray*}
where the integral is performed along a smooth path in $\mathbb{H}^+$. $P_3$ is a strip in the first quadrant with two sides parallel to the real axis. The above side is $\partial P_3^1$; the other sides are labelled counterclockwise. The boundary conditions for the integral kernels of the relevant superpropagators are $\theta(v,u\in\partial P_3^3)_{\mathcal{S}_1}=  \theta(v\in\partial P_3^2,u)_{\mathcal{S}_1}= \theta(v,u\in\partial P_3^1)_{\mathcal{S}_1}=0$, $\theta(v\in\partial P_3^1,u)_{\mathcal{S}_2}=  \theta(v,u\in\partial P_3^2)_{\mathcal{S}_2} = \theta(v\in\partial P_3^3,u)_{\mathcal{S}_2}$. 

Using the function $F(u-v) := \arg \sin i\pi (u-v)$ we write the mirror maps:

\begin{eqnarray}
\psi(u,v)_{\mathcal{S}_1}=\arg \frac{\sin i\pi (u-v)\sin i\pi (\bar{u}+v)}{\sin i\pi (\bar{u}-v)\sin i\pi (u+v)}, \nonumber \\ 
\psi(u,v)_{\mathcal{S}_{2}}=\arg \frac{\sin i\pi (u-v)\sin i\pi (\bar{u}-v)}{\sin i\pi (\bar{u}+v)\sin i\pi (u+v)}. \label{po2}
\end{eqnarray}
So we get (remembering that $H(\mathcal{H}_i^3))=\{0\}$)

\begin{Theorem}
The integral kernels for the relevant superpropagators $G_{\mathcal{S}_i}$  in presence of three branes  are given by 

\begin{equation*}
\label{three}
\theta(v,u)_{\mathcal{S}_i}= \frac{1}{2\pi}d\psi(u,v)_{\mathcal{S}_{i}},
\end{equation*}
with mirror maps given by (\ref{po2}). The integral kernels satisfy the additional boundary conditions $ \theta(v,u)_{\mathcal{S}_1}=\theta(v,-\bar{u})_{\mathcal{S}_1}=\theta(\bar{v},u)_{\mathcal{S}_1}=\theta(i+\bar{v},u)_{\mathcal{S}_1},$ $\theta(v,u)_{\mathcal{S}_2}=\theta(v,\bar{u})_{\mathcal{S}_2}=\theta(v,i+\bar{u})_{\mathcal{S}_2}=\theta(-\bar{v},u)_{\mathcal{S}_2}$, i.e. every boundary component of $P_3$ is labelled by a boundary condition for both the variables $(u,v)$. By construction $\theta(v,u)_{\mathcal{S}_1}=\theta(u,v)_{\mathcal{S}_2}$, $\theta(v,u)_{\mathcal{S}_2}=\theta(u,v)_{\mathcal{S}_1}$.
\end{Theorem}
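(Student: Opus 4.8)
The plan is to reduce the theorem to three verifications about the closed one-form $\theta(v,u)_{\mathcal{S}_i}=\frac{1}{2\pi}d\psi(u,v)_{\mathcal{S}_i}$: that it has the correct diagonal singularity, that it carries no cohomological correction term, and that it satisfies the alternating boundary conditions (\ref{bc}). The cohomological point is immediate from the Lemma, since for $n=3$ one has $H^{\bullet}(\mathcal{H}^3_i)\simeq\mathbb{C}^{(3-3)/2}=0$; hence $P_{\mathcal{S}_i}=0$, the splitting (\ref{splitting}) reduces to $dG_{\mathcal{S}_i}+G_{\mathcal{S}_i}d=I$, there is no $\mathcal{Z}_{\mathcal{S}_i}$ contribution, and the kernel is purely $\frac{1}{2\pi}d\psi$. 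For the diagonal singularity I would use $\sin i\pi(u-v)\sim i\pi(u-v)$ as $u\to v$, so that $\psi(u,v)_{\mathcal{S}_i}=\arg(u-v)+(\text{smooth})$ and therefore $d\psi\sim d\arg(u-v)\sim d\arg(z-w)$ near the diagonal, the last step because $u(z)$ is a biholomorphism in the interior of $\mathbb{H}^+$. This matches the requirement $d\mu_{\mathcal{S}_i}\sim d\arg(z_P-w_Q)$ of Def.1, so that $G_{\mathcal{S}_i}$ is a genuine homotopy operator in the sense of Appendix C.

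The heart of the argument is the boundary analysis, which I would organize around the single building block $F(t):=\arg\sin i\pi t$ and the decomposition $\psi(u,v)_{\mathcal{S}_1}=F(u-v)+F(\bar u+v)-F(\bar u-v)-F(u+v)$, with the analogous expression for $\mathcal{S}_2$. The whole proof then rests on three elementary identities for $F$, all following from $\sin i\pi t=i\sinh\pi t$: the parity $F(-t)=F(t)+\pi$, the conjugation rule $\overline{\sin i\pi t}=-\sin i\pi\bar t$ (equivalently $F(\bar t)=-F(t)-\pi$), and the vertical shift $\sin i\pi(t-i)=-\sin i\pi t$, i.e.\ $F(t-i)=F(t)+\pi$, all equalities understood mod $2\pi$. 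Geometrically these encode the mirror charges for the strip $P_3$: the shift identity reflects that $\sin i\pi(\,\cdot\,)$ already packages the infinite tower of images produced by the two parallel walls $\partial P_3^1,\partial P_3^3$, since its zeros lie on $i\mathbb{Z}$, while the four sign choices $\pm u\pm v$ install the remaining image across the vertical wall $\partial P_3^2$.

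With these in hand, each boundary condition in (\ref{bc}) becomes a one-line cancellation. On $\partial P_3^3$ (the real axis) one has $\bar u=u$, so the four terms collapse to $F(u-v)+F(u+v)-F(u-v)-F(u+v)=0$, whence $\psi_{\mathcal{S}_1}\equiv 0$ and $\theta(v,u\in\partial P_3^3)_{\mathcal{S}_1}=0$. On $\partial P_3^1$ (the line $\im u=\tfrac12$) one has $\bar u=u-i$, and the shift identity turns the second and third terms into $F(u+v)+\pi$ and $F(u-v)+\pi$, producing again a telescoping $0$. For the even wall I would check the condition in the $v$ slot: on $\partial P_3^2$ (the imaginary axis) $\bar v=-v$, and the conjugation rule yields $\overline{R}=R$ for the argument $R$ of $\psi_{\mathcal{S}_1}$, so that $\psi_{\mathcal{S}_1}\in\{0,\pi\}$ is locally constant and $\theta(v\in\partial P_3^2,u)_{\mathcal{S}_1}=0$. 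The additional reflection identities $\theta(v,u)_{\mathcal{S}_1}=\theta(v,-\bar u)_{\mathcal{S}_1}=\theta(\bar v,u)_{\mathcal{S}_1}=\theta(i+\bar v,u)_{\mathcal{S}_1}$ follow the same way, by showing $R$ invariant under each reflection $u\mapsto-\bar u$, $v\mapsto\bar v$, $v\mapsto i+\bar v$ using parity, conjugation and shift respectively; the case $\mathcal{S}_2$ is identical after exchanging the roles of the odd and even walls. Finally the symmetry $\theta(v,u)_{\mathcal{S}_1}=\theta(u,v)_{\mathcal{S}_2}$ is a direct computation: substituting the arguments and applying the three $F$-identities gives $\psi(v,u)_{\mathcal{S}_2}=\psi(u,v)_{\mathcal{S}_1}$ mod $2\pi$, the four additive constants $\pm\pi$ summing to zero.

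The step I expect to be the main obstacle is the consistent bookkeeping of the $2\pi\mathbb{Z}$ (and the additive $\pi$) ambiguities of $\arg$ across the reflections: each use of parity, conjugation and shift introduces a constant, and one must check that in every four-term combination these constants cancel so that the asserted equalities hold on the nose as maps into $\mathbb{R}/2\pi\mathbb{Z}$, not merely up to a locally constant shift; equivalently, that the chosen branch of $\psi$ is globally single valued on $P_3\times P_3$ away from the diagonal. Establishing this single-valuedness, rather than any individual boundary computation, is where care is needed. Once it is settled, the verification that $G_{\mathcal{S}_i}$ realizes the Hodge--Kodaira splitting with $P_{\mathcal{S}_i}=0$ is routine given Appendix C.
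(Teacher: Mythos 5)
Your proposal is correct and takes essentially the same route as the paper: the mirror maps (\ref{po2}) built from $F(t)=\arg\sin i\pi t$, the vanishing of $H(\mathcal{H}^3_{\mathcal{S}_i})$ from Lemma 1 to eliminate any projection term $\mathcal{Z}_{\mathcal{S}_i}$, and the reflection properties of $\psi$ across the sides of the strip $P_3$ (the multi-brane mirror-charge formalism), with the homotopy property delegated to Appendix C. The paper itself leaves the boundary verifications implicit (its proof is little more than the construction plus the remark $H(\mathcal{H}_i^3)=\{0\}$), so your parity, conjugation and shift identities for $F$, and the explicit checks $\bar u=u$, $\bar u=u-i$, $\bar v=-v$ on $\partial P_3^3$, $\partial P_3^1$, $\partial P_3^2$, simply supply the bookkeeping the paper omits.
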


Choosing the six "non relevant" index sets, we can reduce to a lower number of branes analysis. As we impose boundary conditions respect the same variable on adiacent sides of $P_3$ (not separated by $\{ \infty \}$, except in the $\mathcal{A}_1=I_1 \cap I_2\cap I_3$ and $\mathcal{A}_2=I_1^c \cap I_2^c \cap I_3^c$ cases), then it is simple to see that we can introduce a new "square root" homeomorphim  $\tilde{u}$ and reduce to a two branes case. Choosing
$\mathcal{A}_1, \mathcal{A}_2$ and repeating the same lines we can reduce to the one brane case: the interesting boundary conditions are those associated to the index sets (\ref{index!}).

\section{More than three branes: the hyperelliptic cases}

In presence of $n \geq 4$ branes we deal with the so called hyperelliptic cases. The name comes from the fact that we are going to use hyperelliptic Riemann surfaces $\mathcal{M}$ of genus $g$ in order to find the superpropagators. In principle one could write a  Schwarz-Christoffel  mapping from the unit disk with boundary partitioned into $n \geq 4$ sectors, or conformally, from $\mathbb{H}^+$ into a suitable polygon with $n$ sides; this would permit to avoid Riemann surfaces and theta functions formalism. The main problem concerning this formulation is that the reflections to impose on the $\psi(u,v)_{\mathcal{S}_i}$ maps would become particularly complicated and it is not clear a priori which functions one should  use to get the correct  maps fulfilling the $n$ boundary conditions of the integral kernels. The canonical setting in presence of hyperelliptic Riemann surfaces is a natural choice, instead. We refer to Appendix B for a brief introduction on Riemann theta functions. 
Let $\mathcal{M}$ be the hyperelliptic Riemann surface of genus $g$ which realizes the two sheeted branched covering $z:\mathcal{M} \rightarrow \mathbb{C}_{\infty}$ such that $z(P_i):=x_i \in \mathbb{R}, x_i <x_{i+1}$, $\forall i=1,\dots,2g+1$ and $z(P_{2g+2})=\infty$, where $\{P_1,\dots,P_{2g+2}\}$ denotes the set of the branching points.  $\mathcal{M}$ is the compact Riemann surface of the algebraic curve $w^2=\prod_{i=1}^{2g+1}(z-x_i)$. We represent $\mathcal{M}$ in Figure 1; with $\mathcal{B}_1:=\{a_1,\dots,a_g,b_1,\dots,b_g \}$ we denote an explicit choice of canonical homology basis for $H_1(\mathcal{M})$ and with $\{\omega_i\}_{i=1,\dots,g}$ the dual basis of holomorphic abelian differentials. The $n=2g+2$ sided polygon $P_n$ is represented in Figure 2; the restriction $z\mid_{P_n}$ is a homeomorphism between $P_n$ and $\mathbb{H}^+ \subset \mathbb{C}_{\infty}$.
 Let $\varphi$ be the Abel-Jacobi map for $\mathcal{M}$ \cite{FK}, i.e.  $\varphi:\mathcal{M}\rightarrow \mathcal{J}(\mathcal{M})$, $ \varphi(P):= \int_{P_1}^{P}\underline{\omega}$: we explicitly choose the branching point $P_1$ as base point for $\varphi$.  The main result of Section 6 is 

\begin{Theorem}
The integral kernels for the relevant superpropagators $G_{\mathcal{S}_i}$   in presence of $n \geq 4$ branes with  $n=2g+2$,   are given by 

\small{
\begin{eqnarray*}
\theta(Q,P)_{\mathcal{S}_1}=\frac{1}{2\pi}d\arg \frac{ \vartheta(\varphi(P)-\varphi(Q)+\mathcal{A}_g,\Omega) \vartheta(\overline{\varphi(P)}-\varphi(Q)+ \bar{\mathcal{A}}_g,\Omega)}{\vartheta(\varphi(P)+\varphi(Q)+\bar{\mathcal{A}}_g,\Omega) \vartheta(\overline{\varphi(P)}+\varphi(Q)+\mathcal{A}_g,\Omega)}-4\mathcal{Z}_{\mathcal{S}_1}(Q,P) & \\
\theta(Q,P)_{\mathcal{S}_2}=\frac{1}{2\pi} d\arg\frac{ \vartheta(\varphi(P)-\varphi(Q)+\mathcal{A}_g,\Omega) \vartheta(\overline{\varphi(P)}+\varphi(Q)+ \bar{\mathcal{A}}_g,\Omega)}{\vartheta(\varphi(P)+\varphi(Q)+\bar{\mathcal{A}}_g,\Omega) \vartheta(\overline{\varphi(P)}-\varphi(Q)+\mathcal{A}_g,\Omega)}-4\mathcal{Z}_{\mathcal{S}_2}(Q,P)& 
 \end{eqnarray*}} \normalsize{
where $\mathcal{Z}_{\mathcal{S}_i}(Q,P):=\left\{ \begin{array}{c} \im \varphi_i(Q) (\im \Omega)^{-1}d\re  \varphi_j(P)~~~ i=1 \\ \im \varphi_i(P) (\im \Omega)^{-1}d\re  \varphi_j(Q)~~~ i=2\end{array}\right. $,  $\varphi: \mathcal{M} \rightarrow \mathcal{J}(\mathcal{M}):=\mathbb{C}^g/\null^tnI+\null^tm\Omega$ is the Abel-Jacobi map for the hyperelliptic Riemann surface $\mathcal{M}$ of genus $g$ which realizes the two sheeted branched covering $z: \mathcal{M} \rightarrow \mathbb{C}_{\infty}$ with 
 branching points $\{ P_1,\dots,P_{2g+2}\}$ such that $z(P_i) \in \mathbb{R} \cup \{\infty\}$, $i=1,\dots,2g+2$, $(P,Q) \in P_n \times P_n$ and $d=d_P+d_Q$. $\mathcal{A}_g$ is  any  non singular odd half period in $\mathcal{J}(\mathcal{M})$ of the form $\mathcal{A}_g=\varphi(P_3P_5\dots\widehat{P}_{2j+1}\dots P_{2g+1})+\mathcal{K}$
 $j=1,\dots,g$, with  $\mathcal{K}$   vector of Riemann constants and $~\widehat{\null}$ means omission.
 Moreover,  given the compact notation $u =\varphi(P), v= \varphi(Q)$, $\forall (P,Q) \in P_n \times P_n$, we have the additional boundary conditions

\begin{eqnarray*}
&&\theta(v,u)_{\mathcal{S}_1}=\theta(-\bar{v},u)_{\mathcal{S}_1}=\theta(v,\bar{u})_{\mathcal{S}_1}=\theta(e^{(1)}-\bar{v},u)_{\mathcal{S}_1}= \nonumber 
\\ \label{AMM}
&&\theta(e^{(1)}+\dots +e^{(j)}-\bar{v},u)_{\mathcal{S}_1}=\theta(v,\tau^{(1)}+\bar{u})_{\mathcal{S}_1}=\theta(v,\tau^{(1)}+\tau^{(j)}+\bar{u})_{\mathcal{S}_1},
\\ 
&&\theta(v,u)_{\mathcal{S}_2}=\theta(\bar{v},u)_{\mathcal{S}_2}=\theta(v,-\bar{u})_{\mathcal{S}_2}=\theta(\tau^{(1)}+\bar{v},u)_{\mathcal{S}_2}= \nonumber 
\\ \label{ANN}
&&\theta(\tau^{(1)}+ \tau^{(j)}+\bar{v},u)_{\mathcal{S}_2}=\theta(v,e^{(1)}-\bar{u})_{\mathcal{S}_2}=\theta(v,e^{(1)}+\dots+e^{(j)}-\bar{u})_{\mathcal{S}_2}
\end{eqnarray*}
where  $j=2,\dots,g$ and  $e^{(j)}, \tau^{(j)}$   are the $j$-th columns of the  identity matrix $I$ and the  matrix of periods $\Omega$, respectively.
 By construction $\theta_{\mathcal{S}_1}(v,u)=\theta_{\mathcal{S}_2}(u,v)$, $\theta_{\mathcal{S}_2}(v,u)=\theta_{\mathcal{S}_1}(u,v)$.}
\end{Theorem}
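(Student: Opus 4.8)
The plan is to extend the mirror-charge strategy already used for $n=2,3$, now realizing the reflections through the antiholomorphic involution of $\mathcal{M}$ acting on the Jacobian, and to treat the genuinely new zero-mode term separately. Since the branching points satisfy $z(P_i)\in\mathbb{R}\cup\{\infty\}$, the surface $\mathcal{M}$ carries an antiholomorphic involution $\sigma$ covering $z\mapsto\bar z$, whose real locus projects onto $\partial\mathbb{H}^+$ and hence onto the $n$ sides of $P_n$. First I would record how $\sigma$ acts on the chosen homology basis $\mathcal{B}_1$, on the dual differentials $\omega_i$, and therefore on $\varphi$ and on $\Omega$; the outcome should be $\varphi(\sigma P)=\overline{\varphi(P)}$ up to a lattice and half-period shift, so that reflection across each successive side corresponds to one of the maps $v\mapsto-\bar v,\ \bar v,\ e^{(1)}-\bar v,\dots$ listed in the additional boundary conditions, with the shifts $e^{(j)},\tau^{(j)}$ bookkeeping the crossing of the branch points $x_i$ along the Abel--Jacobi path $\int_{P_1}^{P}\underline\omega$.

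With this dictionary I would verify (\ref{bc}). Exactly as in the $n=2$ case, where the relevant product became real on each side, the point is that the theta-quotient inside $\arg$ has locally constant argument whenever $P$ (resp. $Q$) lies on the appropriate boundary segment. This reduces to the transformation law of $\vartheta$ under $z\mapsto\bar z$ and under half-period shifts: the numerator and denominator are paired so that, on a reflection-fixed side, the quotient coincides with its own complex conjugate and is therefore real, pinning $\arg$ to a constant and forcing $d\arg(\cdot)=0$ there. The choice of $\mathcal{A}_g$ as a non-singular odd half-period is precisely what makes this pairing compatible with $\sigma$; I would check each listed identity (in particular the $j=2,\dots,g$ relations) against the quasi-periodicities $\vartheta(z+e^{(j)})=\vartheta(z)$ and $\vartheta(z+\tau^{(j)})=e^{-\pi i\Omega_{jj}-2\pi i z_j}\vartheta(z)$.

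Next I would confirm the diagonal behaviour required by Definition 1 and the meaning of the correction $4\mathcal{Z}_{\mathcal{S}_i}$. As $P\to Q$ the argument of the first factor tends to $\mathcal{A}_g$, where the odd non-singular theta has a simple zero, so $\vartheta(\varphi(P)-\varphi(Q)+\mathcal{A}_g)$ vanishes to first order while the remaining three factors stay nonzero; since $\varphi$ is a local biholomorphism off the branch points this yields $d\mu_{\mathcal{S}_i}\sim d\arg(z_P-w_Q)$, matching Kontsevich's local model. For the zero-mode term I would argue two things: first, via the Riemann bilinear relations, $d\mathcal{Z}_{\mathcal{S}_i}$ cancels exactly the real-linear multivaluedness that the quasi-periodic factors inject into $\arg$, so $\theta_{\mathcal{S}_i}$ is single-valued on $P_n\times P_n$; second, $d\mathcal{Z}_{\mathcal{S}_i}$ is a bilinear combination of the harmonic forms $\re\omega_j$, $\im\omega_i$, hence represents the projection $P_{\mathcal{S}_i}$ onto $H^1(\mathcal{H}^n_i)\cong\mathbb{C}^g$, i.e.\ the $g=\tfrac{n-2}{2}$ zero modes of Lemma 1. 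Combined with the delta source from the diagonal singularity and the boundary conditions, this delivers the splitting (\ref{splitting}).

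Finally, the symmetry $\theta_{\mathcal{S}_1}(v,u)=\theta_{\mathcal{S}_2}(u,v)$ should follow by swapping $P\leftrightarrow Q$ in the explicit quotients and using the oddness $\vartheta(-z+\mathcal{A}_g)=-\vartheta(z+\mathcal{A}_g)$ together with $d\arg(1/X)=-d\arg X$, which interchanges the $\mathcal{S}_1$ and $\mathcal{S}_2$ numerators and the two cases of $\mathcal{Z}_{\mathcal{S}_i}$. I expect the main obstacle to lie in the first two steps: pinning down the exact correspondence between the geometric reflections across the $n$ sides and the half-period shifts $e^{(j)},\tau^{(j)}$ under $\sigma$, and then establishing that the theta-quotient is genuinely real on every side. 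This demands careful tracking of the integrals $\int_{P_1}^{P}\underline\omega$ as $P$ sweeps the boundary and crosses each $x_i$, plus a clean statement of how $\sigma$ permutes the cycles $a_i,b_i$; once that dictionary is fixed, the theta transformation laws should render both the boundary conditions and the single-valuedness correction essentially mechanical.
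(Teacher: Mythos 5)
Your overall architecture matches the paper's (reflections transported to $\mathcal{J}(\mathcal{M})$, mirror maps built from four odd first-order theta functions, a zero-mode term tied to $H^1(\mathcal{H}^n_{\mathcal{S}_i})\cong\mathbb{C}^g$), but the step you rely on to verify the boundary conditions (\ref{bc}) is wrong, and provably so: the theta-quotient does \emph{not} have locally constant argument on every side of $P_n$, and no choice of the four odd half-periods can make it so. This is exactly the content of Propositions 1 and 2. Oddness under the reflections fixing the sides where $\re u$ is constant forces $\mathcal{A}_g=\mathcal{D}_g+\null^tn_1I$, $\mathcal{B}_g=\mathcal{C}_g+\null^tn_2I$, while oddness under the reflections fixing the sides where $\im u$ is constant forces $\mathcal{A}_g=\mathcal{D}_g+\null^tn_1I-\null^tn\Omega$, $\mathcal{B}_g=\mathcal{C}_g+\null^tn_2I+\null^tn\Omega$; these are incompatible. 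The paper therefore \emph{relaxes} the $\tau$-direction reflections, which then hold only up to the additive terms $8\pi\re v_1$ (resp.\ $8\pi\re v_1+8\pi\re v_j$), as in (\ref{gh}) and (\ref{wer}). Consequently, on the side where $\im u=\frac{1}{2}\im(\tau^{(1)}+\tau^{(j)})$ the form $\frac{1}{2\pi}d\psi_{\mathcal{S}_2}$ restricts to $2(d\re v_1+d\re v_j)$, not to $0$, and it is precisely the subtracted term $4\mathcal{Z}_{\mathcal{S}_2}=4\im u_i(\im\Omega)^{-1}_{ik}d\re v_k=2(d\re v_1+d\re v_j)$ that cancels it (Lemma 2, with $\mathcal{O}=4(\im\Omega)^{-1}$). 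Your proposal is internally inconsistent on this point: if the $d\arg$ part alone vanished on all sides, then subtracting $4\mathcal{Z}_{\mathcal{S}_i}$, which does not vanish on the $\tau$-sides, would destroy the boundary conditions you just claimed to establish. Relatedly, the zero-mode term is not a ``single-valuedness'' correction: $\theta_{\mathcal{S}_i}$ is a one-form and $P_n$ is simply connected, so no such issue arises; its role is the boundary repair just described and, in the splitting of Appendix C, the implementation of the projection onto cohomology, which Lemma 1 makes unavoidable since $H^1\neq 0$ for $n\geq 4$.

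A second gap: near the diagonal you assert that ``the remaining three factors stay nonzero,'' and you nowhere exclude spurious zeros of the four theta factors elsewhere in $P_n^{\circ}\times P_n^{\circ}$; such zeros would create additional singularities of $d\arg$ and extra unwanted contributions in the Hodge--Kodaira splitting. This is the paper's Proposition 3, and it is genuinely divisor-theoretic: by Riemann's vanishing theorem the zero divisor in $P$ of $\vartheta(\varphi(P)-\varphi(Q)-\mathcal{A}_g,\Omega)$ is $Q\mathcal{D}_{g-1,j}$ with $\mathcal{D}_{g-1,j}=P_3P_5\cdots\widehat{P}_{2j+1}\cdots P_{2g+1}$, while the zeros of the remaining factors are carried by divisors of the form $J(Q)\mathcal{D}_{g-1,j}$, $C(Q)\mathcal{D}_{g-1,j}$, $JC(Q)\mathcal{D}_{g-1,j}$ ($J$ the hyperelliptic involution, $C$ the conjugation automorphism), and all points of these divisors other than $P=Q$ lie outside the open polygon. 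Non-specialty of $\mathcal{D}_{g-1,j}$, i.e.\ non-singularity of $\mathcal{A}_g$ with the choice (\ref{AAA}), is what this argument requires; that, and not ``compatibility of the pairing with $\sigma$,'' is where the hypothesis on $\mathcal{A}_g$ actually enters the proof.
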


The rest of the paper is devoted to the proof of Theorem 4. At first we study the mirror maps in the canonical setting; then we  introduce first order Riemann theta functions on $\mathcal{J}(\mathcal{M})$; we discuss then the zero set of the mirror maps and the emersion of additional contributions called zero modes: Lemma 1 gives non trivial cohomology for $(\mathcal{H}_i^n,d)$, $n \geq 4$.

\subsection{Mirror maps and reflections}Let $n$ be the number of branes. In the above theorem $n$ is even; in fact if the number of branes $n$ is odd with $n \geq 5 $, then  we can recover a  $n-1$ branes case: for this reason it is sufficient to develop  the analysis in presence of $n=2g+2$ branes. More explicitly, in presence of $n \geq 5$ branes, with  $n$ odd, we should identify a polygon $P_n$ on a  hyperelliptic curve  given by $y^2=\prod^{n-1}_{i=1}(z-x_i)$ with real  $x_i$. But we can impose, for example,  the transformation $z \rightarrow \frac{z-x_j}{z-x_k}$, $k$ even, to get a hyperelliptic curve of the type $y'^2=\prod^{n-2}_{i=1}(z'-y_i)$: the analysis is then reduced to an even number of branes case.
 The choice of the homology basis $\mathcal{B}_1$  for $\mathcal{M}$ gives

\begin{eqnarray}
\varphi(P_1)&=&0,  \nonumber \\
\varphi(P_2)&=&\frac{1}{2}e^{(1)}, \nonumber  \\
\vdots \nonumber \\
\varphi(P_{2k+1})&=&\frac{1}{2}(e^{(1)}+\dots +e^{(k)}+\tau^{(1)}+\tau^{(k+1)}), \nonumber \\
\varphi(P_{2k+2})&=&\frac{1}{2}(e^{(1)}+\dots +e^{(k+1)}+\tau^{(1)}+\tau^{(k+1)}), \nonumber\\
\vdots \nonumber \\
\varphi(P_{2g+1})&=&\frac{1}{2}(e^{(1)}+\dots +e^{(g)}+\tau^{(1)}), \nonumber \\
\varphi(P_{2g+2})&=&\frac{1}{2}\tau^{(1)}. \nonumber 
\end{eqnarray}  
for all $k=1,\dots g-1$. Let

 \begin{eqnarray}
 u:=\varphi\mid_{P_n};
\label{Abel}
 \end{eqnarray}
 in the sequel we will use the notation $u=u(P), v=u(Q)$, $\forall (P,Q) \in P_n \times P_n$.
 
  As $\mathcal{M}$ is hyperelliptic we write the elements of the  basis of holomorphic abelian differentials as $\omega_i(z):=\mathbb{I}_{ij}\frac{z^{j-1}}{w(z)}dz$; from the definition of $\mathcal{B}_1$ it follows that  the coefficients $\mathbb{I}_{ij}$ are real and the elements $\tau_{ij}=(\Omega)_{ij}$ are purely imaginary $\forall i,j \in \{1,\dots,g\}$.  We rewrite (\ref{Abel}) as $u(P)=(\mathbb{I}_{1j} \mathcal{A}_{P_1,P}^j,\dots,\mathbb{I}_{gj} \mathcal{A}_{P_1,P}^{j})$  with $\mathcal{A}_{P_1,P}^j := \int_{x_1}^{z(P)}\frac{s^{j-1}}{w(s)}ds$, $ w(s)=\sqrt{\prod_{k=1}^{2g+1}(s-x_k)}, i=1,\dots,g$. By definition, for every point $P \in \mathcal{M}-\{P_1,P_2,\dots,P_{2g+2} \}$  there exists a point $Q \in \mathcal{M}$, $Q \neq P$ such that $z(P)=z(Q)$: the pair $(P,Q)$ projects to the same point $z(P)$ on the Riemann sphere $\mathbb{C}_{\infty}$ with $P,Q$ belonging to different sheets. This means that  selecting $P_n \subset \mathcal{M}$ then  (\ref{Abel}) is well defined.

\begin{figure}[h!]
  \begin{center}
  \epsfig{figure=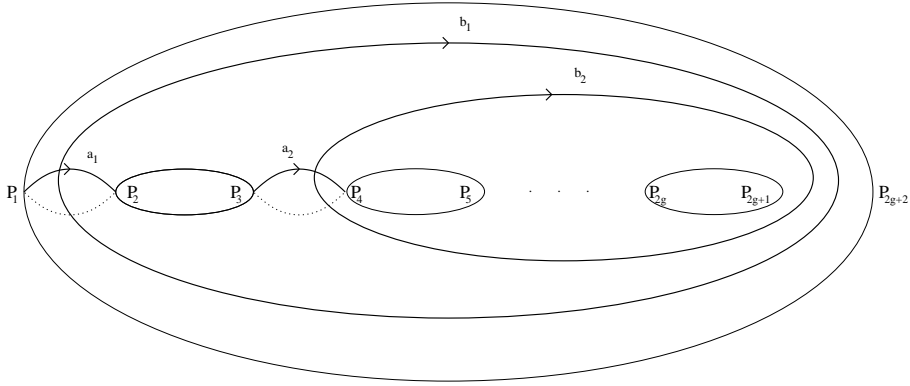,width=12cm}
  \end{center}
  \caption{Homology Basis  $\mathcal{B}_1$ for the hyperelliptic curve $\mathcal{M}$}
  \label{figure54}
  \end{figure}

  The boundary conditions for the integral kernels $\theta(P,Q)_{\mathcal{S}_i}$ are given by (\ref{bc}); we project homeomorphically each point of $P_n$ to the Jacobian variety $\mathcal{J}(\mathcal{M})$  via (\ref{Abel}) and we introduce there  the $\psi(u,v)_{\mathcal{S}_i}$ maps. This allows to express explicitly the boundary conditions (\ref{bc}) for the integral kernels in terms of the variables $(u,v)$ in the Jacobian variety itself.

  We consider now the reflections for  $\psi(u,v)_{\mathcal{S}_1}$. We define the intervals $C_k:=[x_{k-1},x_{k}]$, $k=1,\dots,2g+2$, where $x_k=z(P_k)$ for $k=1,\dots,2g+1$  and, as usual $x_0=-\infty$, $x_{2g+2}=+\infty$, as $z(P_{2g+2})=\infty$.
If $P\in C_1$ then $u(P) \in i\mathbb{R}^g$ and the reflection is $\psi(u,v)_{\mathcal{S}_1}=-\psi(-\bar{u},v)_{\mathcal{S}_1}$.
Selecting $Q \in C_2$ we get  $v(Q) \in
 \mathbb{R}^g $; this implies $\psi(u,v)_{\mathcal{S}_1}=-\psi(u,\bar{v})_{\mathcal{S}_1}$.
 With $P \in C_3$ we get $u(P)=\frac{1}{2}e^{(1)}+i\mathbb{R}^g$ and so $\psi(u,v)_{\mathcal{S}_1}=-\psi(e^{(1)}-\bar{u},v)_{\mathcal{S}_1}$. Moreover if $Q \in C_4$ then $v(Q) = \frac{1}{2}(e^{(1)}+\tau^{(1)}+\tau^{(2)})+\mathbb{R}^g$ and $\psi(u,v)_{\mathcal{S}_1}=-\psi(u,\tau^{(1)}+\tau^{(2)}+\bar{v})_{\mathcal{S}_1}$. In presence of  odd branes (different from $C_{3}$ and $C_1$), i.e. for $P \in C_{2j+1}$, $\forall j=2,\dots,g$, we have  $ u(P)=  \frac{1}{2}(e^{(1)}+\dots +e^{(j)}+\tau^{(1)}+\tau^{(j)})+ i\mathbb{R}^g$   and the reflections $\psi(u,v)_{\mathcal{S}_1}=-\psi(e^{(1)}+\dots+e^{(j)}-\bar{u},v)_{\mathcal{S}_1}$. With even branes (different from $C_{2}$ and $C_{2g+2}$ ) that is for $Q \in C_{2j}$, $\forall j=2,\dots,g$, we get $v(Q)= \frac{1}{2}(e^{(1)}+\dots +e^{(j-1)}+\tau^{(1)}+\tau^{(j)})+\mathbb{R}^g$ and the reflections $\psi(u,v)_{\mathcal{S}_1}=-\psi(u,\tau^{(1)}+\tau^{(j)}+\bar{v})_{\mathcal{S}_1}$. Analogously for $Q \in C_{2g+2}$ we have $\psi(u,v)_{\mathcal{S}_1}=-\psi(u,\tau^{(1)}+\bar{v})_{\mathcal{S}_1}$.

 Direct calculations  show that in order to compute the reflections for  $\psi(u,v)_{\mathcal{S}_2}$ we can simply consider those for $\psi(u,v)_{\mathcal{S}_1}$, then formally exchange $u$ and $v$ in $\psi(u,v)_{\mathcal{S}_1}$ and substitute the subscript $\mathcal{S}_1$ with $\mathcal{S}_2$. For example $\psi(u,v)_{\mathcal{S}_1}=-\psi(-\bar{u},v)_{\mathcal{S}_1}$ becomes $\psi(u,v)_{\mathcal{S}_2}=-\psi(u,-\bar{v})_{\mathcal{S}_2}$; $\psi(u,v)_{\mathcal{S}_1}=-\psi(u,\tau^{(1)}+\tau^{(j)}+\bar{v})_{\mathcal{S}_1}$ goes to $\psi(u,v)_{\mathcal{S}_2}=-\psi(\tau^{(1)}+\tau^{(j)}+\bar{u},v)_{\mathcal{S}_2}$ and so on. We summarize all the reflection properties for $\psi(u,v)_{\mathcal{S}_1}$ and $\psi(u,v)_{\mathcal{S}_2}$.

\begin{eqnarray*}
&&\psi(u,v)_{\mathcal{S}_1}=-\psi(-\bar{u},v)_{\mathcal{S}_1}=-\psi(u,\bar{v})_{\mathcal{S}_1}= \nonumber\\
&&-\psi(e^{(1)}-\bar{u},v)_{\mathcal{S}_1}=-\psi(u,\tau^{(1)}+\bar{v})_{\mathcal{S}_1}=-\psi(u,\tau^{(1)}+\tau^{(j)}+\bar{v})_{\mathcal{S}_1}=\nonumber \\
&&-\psi(e^{(1)}+\dots+e^{(j)}-\bar{u},v)_{\mathcal{S}_1}~ \forall j=2,\dots,g
\label{123}
\end{eqnarray*}

\begin{eqnarray*}
&&\psi(u,v)_{\mathcal{S}_2}=-\psi(u,-\bar{v})_{\mathcal{S}_2}=-\psi(\bar{u},v)_{\mathcal{S}_2}= \nonumber\\
&&-\psi(u,e^{(1)}-\bar{v})_{\mathcal{S}_2}=-\psi(\tau^{(1)}+\bar{u},v)_{\mathcal{S}_2}=-\psi(\tau^{(1)}+\tau^{(j)}+\bar{u},v)_{\mathcal{S}_2}=\nonumber \\
&&-\psi(u,e^{(1)}+\dots+e^{(j)}-\bar{v})_{\mathcal{S}_2}~ \forall j=2,\dots,g
\label{1234}
\end{eqnarray*}
Given all the other possible choices of index sets  we can reduce to a lower number of branes cases by pinching the sides  $\partial P_n^i$ of $P_n$ with the same boundary conditions imposed on $\partial P_n^{i+1}$; one repeats the process till a no more reducible case: then the computation of the integral kernels begins specifying the correct homeomorphism $u$.

\begin{figure}[h!]
  \begin{center}
  \epsfig{figure=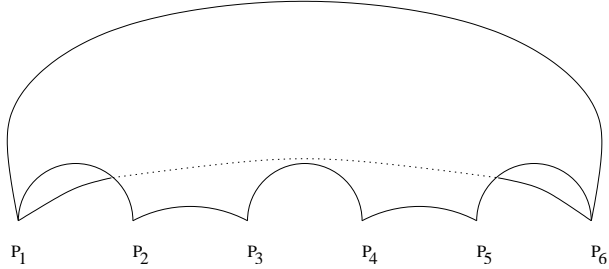,width=8cm}
  \end{center}
  \caption{Polygon $P_n$: $n=6$, or $g=2$ case}
  \label{figure5456}
  \end{figure}

 We give the explicit form of the mirror maps in terms of first order  Riemann theta functions with odd characteristics (canonical setting).
Increasing the genus $g$ of the hyperelliptic surface get more reflections to satisfy; precisely in presence of $n=2g+2$ branes we have $2g+2$ reflections to impose but it is   possible to construct mirror maps with a combination of four suitable first order Riemann theta functions with odd characteristics independently on the number of reflections. We introduce then odd non integer characteristics $(\epsilon,\epsilon')$ through the odd half periods $\mathcal{A}_g, \mathcal{B}_g, \mathcal{C}_g, \mathcal{D}_g$, where $\mathcal{A}_g = \frac{1}{2}\null^t\epsilon_1'I+\frac{1}{2}\null^t\epsilon_1\Omega$, $\mathcal{B}_g = \frac{1}{2}\null^t\epsilon_2'I+\frac{1}{2}\null^t\epsilon_2\Omega$, $\mathcal{C}_g = \frac{1}{2}\null^t\epsilon_3'I+\frac{1}{2}\null^t\epsilon_3\Omega$, $\mathcal{D}_g = \frac{1}{2}\null^t\epsilon_4'I+\frac{1}{2}\null^t\epsilon_4\Omega$ with $\epsilon_i', \epsilon_i \in \mathbb{Z}^g$ $\null^t\epsilon_i'\epsilon_i=1$ mod   2, $ i=1,\dots,4$. The basic result is the following:

\begin{Prop}
The map $\psi(u,v)_{\mathcal{S}_2} := \arg \frac{ \vartheta(u-v+\mathcal{A}_g,\Omega) \vartheta(\bar{u}+v+\mathcal{B}_g,\Omega)}{\vartheta(u+v+\mathcal{C}_g,\Omega) \vartheta(\bar{u}-v+\mathcal{D}_g,\Omega)}$ satisfies the reflection properties:

\begin{eqnarray}
&&\psi(u,v)_{\mathcal{S}_2}=-\psi(\bar{u},v)_{\mathcal{S}_2}=-\psi(u,-\bar{v})_{\mathcal{S}_2}= \nonumber\\
&&-\psi(\tau^{(1)}+\bar{u},v)_{\mathcal{S}_2}+8\pi\re  v_1=-\psi(u,e^{(1)}-\bar{v})_{\mathcal{S}_2}=\nonumber \\
&&-\psi(\tau^{(1)}+\tau^{(j)}+\bar{u},v)_{\mathcal{S}_2}+8\pi\re  v_1+8\pi\re  v_j= \nonumber \\
&&-\psi(u,e^{(1)}+\dots +e^{(j)}-\bar{v})_{\mathcal{S}_2}~~ \forall j=2,\dots,g \label{gh}
\end{eqnarray} 
 if

\begin{align*}
\mathcal{A}_g &=\mathcal{D}_g & \mathcal{A}_g &=\bar{\mathcal{B}}_g &  &\mathcal{B}_g+\mathcal{C}_g-\mathcal{A}_g-\mathcal{D}_g \in i\mathbb{R}^g ~\mbox{mod}~ \null^tm\Omega \\
\mathcal{B}_g &=\mathcal{C}_g &  \bar{\mathcal{C}}_g &=\mathcal{D}_g &   &\mathcal{A}_g+\mathcal{B}_g-\mathcal{C}_g-\mathcal{D}_g \in i\mathbb{R}^g ~\mbox{mod}~ \null^tm\Omega  \\
 \end{align*}

\end{Prop}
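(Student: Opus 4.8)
The plan is to verify each of the reflection identities in (\ref{gh}) directly by exploiting the quasi-periodicity and parity of the first order Riemann theta function $\vartheta$, together with the arithmetic relations imposed on the half periods $\mathcal{A}_g,\mathcal{B}_g,\mathcal{C}_g,\mathcal{D}_g$. Recall that each $\psi(u,v)_{\mathcal{S}_2}$ is the argument of a ratio of four theta factors, so a reflection $\psi = -\psi(\,\cdot\,)$ will follow once we show that under the substitution the ratio inside the $\arg$ is sent to the \emph{complex conjugate} of the original ratio (up to a factor whose argument is the explicit additive constant, such as $8\pi\,\re v_1$). The key facts I would invoke are: the oddness $\vartheta(-z,\Omega)=-\vartheta(z,\Omega)$ coming from the odd characteristics $\null^t\epsilon_i'\epsilon_i=1 \bmod 2$; the reality properties $\overline{\vartheta(z,\Omega)}=\vartheta(\bar z,\bar\Omega)=\vartheta(\bar z,-\Omega)$ which, since $\Omega$ is purely imaginary in the canonical basis $\mathcal{B}_1$, combine with the periodicity in $\null^t m\Omega$; and the quasi-periodicity $\vartheta(z+e^{(k)},\Omega)=\vartheta(z,\Omega)$, $\vartheta(z+\tau^{(k)},\Omega)=\exp(-\pi i\,\Omega_{kk}-2\pi i\,z_k)\,\vartheta(z,\Omega)$, whose exponential prefactor is precisely what produces the additive $8\pi\,\re v_k$ terms.

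I would proceed identity by identity, in the order listed. First the two ``simple'' conjugation reflections $\psi(u,v)=-\psi(\bar u,v)=-\psi(u,-\bar v)$: here no lattice shift is involved, so I only need the four conditions $\mathcal{A}_g=\mathcal{D}_g$, $\mathcal{B}_g=\mathcal{C}_g$, $\mathcal{A}_g=\bar{\mathcal{B}}_g$, $\bar{\mathcal{C}}_g=\mathcal{D}_g$ together with purely-imaginary $\Omega$ to check that, e.g., replacing $u\mapsto\bar u$ permutes the four theta factors into the conjugated ratio, so that $\arg$ changes sign. The equalities $\mathcal{A}_g=\mathcal{D}_g$ and $\mathcal{B}_g=\mathcal{C}_g$ pair the numerator factor $u-v+\mathcal{A}_g$ with the denominator factor $\bar u-v+\mathcal{D}_g$ (and likewise the other pair), which is exactly the pairing needed for conjugation; the remaining two conditions $\mathcal{A}_g=\bar{\mathcal{B}}_g$, $\bar{\mathcal{C}}_g=\mathcal{D}_g$ control the mixing of the $+v$ and $-v$ factors. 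Next I would treat the lattice-shift reflections $\psi(u,v)=-\psi(\tau^{(1)}+\bar u,v)+8\pi\re v_1$ and its $\tau^{(1)}+\tau^{(j)}$ analogue: applying the quasi-periodicity formula to the two factors depending on $\bar u$ generates exponential prefactors, and tracking their arguments produces exactly $8\pi\,\re v_1$ (the factor $8\pi=2\cdot 2\pi\cdot 2$ reflecting that the shift enters two factors and the half-period normalization halves the lattice vector); the condition $\mathcal{B}_g+\mathcal{C}_g-\mathcal{A}_g-\mathcal{D}_g\in i\mathbb{R}^g \bmod \null^t m\Omega$ guarantees the leftover characteristic-dependent phase is real and hence invisible to the conjugation. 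The shifts $u\mapsto u+e^{(1)}+\dots+e^{(j)}$ on the $v$-reflections are then immediate from the trivial $e^{(k)}$-periodicity, using the second linear condition $\mathcal{A}_g+\mathcal{B}_g-\mathcal{C}_g-\mathcal{D}_g\in i\mathbb{R}^g$.

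The main obstacle I anticipate is \emph{bookkeeping of the exponential phase factors}: the quasi-periodicity prefactor $\exp(-\pi i\,\Omega_{kk}-2\pi i\,z_k)$ contributes both a term linear in the point coordinate (which survives as the explicit $8\pi\re v_k$) and a constant $\Omega_{kk}$-term, and one must check carefully that, combined with the half-period characteristic shifts hidden in $\mathcal{A}_g,\dots,\mathcal{D}_g$ and the reality of $\Omega$, all the constant phases collapse to real numbers modulo $2\pi$ and therefore do not spoil the clean sign flip of the argument. Verifying that the two displayed linear conditions on $\mathcal{A}_g,\mathcal{B}_g,\mathcal{C}_g,\mathcal{D}_g$ are exactly the conditions making these residual phases imaginary (hence contributing only a real multiplicative factor to the theta ratio) is the crux; everything else is a mechanical application of the three standard theta identities listed above. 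I would organize the computation by isolating, for each reflection, which of the four theta factors are affected and writing the transformed ratio as $\overline{(\text{original ratio})}\cdot e^{i\Phi}$ with $\Phi$ explicit, then reading off $-\psi+\Phi$.
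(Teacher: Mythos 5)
Your plan is essentially the paper's own proof (Appendix B): substitute each reflection into the four-theta ratio, apply the translation properties of $\vartheta$ and the conjugation rule valid for purely imaginary $\Omega$, and track the exponential prefactors, whose arguments produce exactly the $8\pi\re v_1$ and $8\pi\re v_1+8\pi\re v_j$ terms, while the stated conditions on $\mathcal{A}_g,\mathcal{B}_g,\mathcal{C}_g,\mathcal{D}_g$ force every residual prefactor to have vanishing argument. The only structural difference is direction: the paper first demands clean reflections $\psi=-\psi(\cdot)$ with the half periods determined only up to arbitrary lattice vectors, discovers that the resulting conditions are mutually incompatible, and then relaxes the $\tau$-shift reflections --- which is precisely how the additive anomaly terms are found --- whereas you assume the displayed conditions and verify; for the ``if'' statement actually being claimed, your direction is logically sufficient.

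However, three of your ``key facts'' need repair before the computation can be executed. (i) The theta function appearing in $\psi(u,v)_{\mathcal{S}_2}$ is the plain Riemann theta, and it is \emph{even}, $\vartheta(-z,\Omega)=\vartheta(z,\Omega)$; oddness under $z\mapsto -z$ is a property of $\vartheta\left[\begin{array}{c}\epsilon\\ \epsilon'\end{array}\right]$ with $\null^t\epsilon'\epsilon=1$ mod $2$, and in this paper it enters only through the vanishing of $\vartheta$ at odd half periods (needed for the zero-divisor analysis of Proposition 3, not for these reflections). (ii) Your reality property is misstated: $\bar\Omega=-\Omega$ does not lie in the Siegel upper half space, so $\vartheta(\bar z,-\Omega)$ is not defined; the correct chain, which the paper uses, is $\overline{\vartheta(z,\Omega)}=\vartheta(-\bar z,-\bar\Omega)=\vartheta(\bar z,\Omega)$, the last equality by evenness together with $-\bar\Omega=\Omega$. (iii) Under $u\mapsto\bar u$ (with $\mathcal{A}_g=\mathcal{D}_g$, $\mathcal{B}_g=\mathcal{C}_g$) the ratio becomes its \emph{reciprocal}, not its conjugate --- only the $v$-reflections produce the conjugate --- and your counting of $8\pi$ is off: the shift $\tau^{(1)}$ touches all four theta factors (two through $u$ and two through $\bar u\mapsto u-\tau^{(1)}$, since $\tau^{(1)}$ is purely imaginary), each contributing $2\pi\re v_1$ to the argument; there is no ``halving'' by a half-period normalization. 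None of these slips is structural --- the spurious signs from (i) and (ii) would even cancel in a ratio with two factors upstairs and two downstairs --- but with them corrected your outline coincides with the computation the paper carries out.
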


\begin{proof}
See Appendix B.
\end{proof}

So, the most general mirror map satisfying (\ref{gh}) and written in canonical setting is given by:

\begin{eqnarray}
\label{poi}
\psi(u,v)_{\mathcal{S}_2}&=&\arg \frac{ \vartheta(u-v+\mathcal{A}_g,\Omega) \vartheta(\bar{u}+v+ \bar{\mathcal{A}}_g,\Omega)}{\vartheta(u+v+\bar{\mathcal{A}}_g,\Omega) \vartheta(\bar{u}-v+\mathcal{A}_g,\Omega)}.
\end{eqnarray}
Following the same lines of Proposition 1 we can state a similar result for the mirror map $\psi(u,v)_{\mathcal{S}_1}$, or

\begin{Prop}

The map $\psi(u,v)_{\mathcal{S}_1} := \arg \frac{ \vartheta(u-v+\mathcal{A}_g,\Omega) \vartheta(\bar{u}-v+\mathcal{D}_g,\Omega)}{\vartheta(u+v+\mathcal{C}_g,\Omega) \vartheta(\bar{u}+v+\mathcal{B}_g,\Omega)}$ satisfies the reflections

\begin{eqnarray}
 &&\psi(u,v)_{\mathcal{S}_1}=-\psi(u,\bar{v})_{\mathcal{S}_1}=-\psi(-\bar{u},v)_{\mathcal{S}_1}= \nonumber\\
&&-\psi(u,\tau^{(1)}+\bar{v})_{\mathcal{S}_1}+8\pi\re u_1=-\psi(e^{(1)}-\bar{u},v)_{\mathcal{S}_1}=\nonumber \\
&&-\psi(u,\tau^{(1)}+\tau^{(j)}+\bar{v})_{\mathcal{S}_1}+8\pi\re u_1+8\pi\re u_j=\nonumber \\
&&-\psi(e^{(1)}+\dots+e^{(j)}-\bar{u},v)_{\mathcal{S}_1}=~~ \forall j=2,\dots,g  
 \label{wer}
 \end{eqnarray} 
 if

\begin{align*}
\mathcal{A}_g &=\bar{\mathcal{D}}_g,  & \bar{\mathcal{B}}_g &=\mathcal{C}_g \\
 \mathcal{A}_g &=-\mathcal{B}_g-\null^tn\Omega  & \mathcal{C}_g &=-\mathcal{D}_g+\null^tn\Omega \\
\mathcal{A}_g&+\mathcal{B}_g+\mathcal{C}_g+\mathcal{D}_g \in i\mathbb{R}^g ~\mbox{mod}~ \null^tm\Omega & \mathcal{A}_g&+\mathcal{C}_g-\mathcal{B}_g-\mathcal{D}_g \in i\mathbb{R}^g ~\mbox{mod}~ \null^tm\Omega \\
\end{align*}
 
 for any vector $n \in \mathbb{Z}^g$.
 
\end{Prop}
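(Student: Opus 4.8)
The plan is to run the same computation that establishes Proposition 1 in Appendix B, with the numerator/denominator theta factors permuted according to the definition of $\psi(u,v)_{\mathcal{S}_1}$; the reflection list (\ref{wer}) is, up to the formal exchange $u\leftrightarrow v$, the list (\ref{gh}), so the two propositions are mirror images and one may either redo the argument or transport Proposition 1 through this symmetry. Only three facts about the first order theta functions are needed, all available once one recalls (Section 6.1) that the period matrix $\Omega$ is purely imaginary: (i) the reality relation $\overline{\vartheta(\zeta,\Omega)}=\vartheta(\bar\zeta,\Omega)$; (ii) the oddness at an odd half period, $\vartheta(-w+\mathcal{X}_g,\Omega)=-\exp(2\pi i\,\null^t\epsilon_{\mathcal{X}}w)\,\vartheta(w+\mathcal{X}_g,\Omega)$, where $\epsilon_{\mathcal{X}}$ is the $\Omega$-part of the characteristic of $\mathcal{X}_g$; and (iii) quasi-periodicity, $\vartheta(z+e^{(j)},\Omega)=\vartheta(z,\Omega)$ and $\vartheta(z+\tau^{(j)},\Omega)=\exp(-\pi i\Omega_{jj}-2\pi i z_j)\,\vartheta(z,\Omega)$. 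Throughout one uses $\arg\overline{R}=-\arg R$ and the additivity of $\arg$ on products and quotients, together with the fact that $-\pi i\Omega_{jj}$ is real and hence never contributes a phase.

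First I would dispose of the reflections acting on $v$, which enters $\psi(u,v)_{\mathcal{S}_1}$ through $u\mp v$ and $\bar u\mp v$ but whose conjugate $\bar v$ is absent. Passing to $\psi(u,\bar v)_{\mathcal{S}_1}$ and conjugating by (i) replaces each $\mathcal{X}_g$ by $\bar{\mathcal{X}}_g$ and interchanges the roles of $u$ and $\bar u$; the two conditions $\mathcal{A}_g=\bar{\mathcal{D}}_g$ and $\bar{\mathcal{B}}_g=\mathcal{C}_g$ are exactly what make the four resulting factors coincide with those of $\psi(u,v)_{\mathcal{S}_1}$, giving $-\psi(u,\bar v)_{\mathcal{S}_1}=\psi(u,v)_{\mathcal{S}_1}$. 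For the shifted reflection $-\psi(u,\tau^{(1)}+\bar v)_{\mathcal{S}_1}+8\pi\re u_1$ I would peel off $\tau^{(1)}$ by (iii) before conjugating; the four surviving phases recombine, the $\bar v$-contributions cancel, and $(u)_1+(\bar u)_1=2\re u_1$ produces precisely $8\pi\re u_1$, while the characteristic part of the phase vanishes thanks to $\mathcal{A}_g+\mathcal{B}_g+\mathcal{C}_g+\mathcal{D}_g\in i\mathbb{R}^g$ modulo $\null^tm\Omega$. The $\tau^{(1)}+\tau^{(j)}$ reflection is identical and yields the extra term $8\pi\re u_j$.

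Next come the reflections acting on $u$, which appears together with $\bar u$; these are the delicate ones, because here conjugation would illegitimately introduce $\bar v$. Instead I apply the oddness (ii) to each factor of $\psi(-\bar u,v)_{\mathcal{S}_1}$: writing every argument as $-w+\mathcal{X}_g$, the four factors of $(-1)$ cancel in the quotient and one is left with $\vartheta(\bar u+v+\mathcal{A}_g)$, $\vartheta(u+v+\mathcal{D}_g)$, $\vartheta(\bar u-v+\mathcal{C}_g)$, $\vartheta(u-v+\mathcal{B}_g)$ times affine exponentials. The key observation is that $2\mathcal{X}_g$ lies in the period lattice, so $-\mathcal{X}_g\equiv\mathcal{X}_g$ modulo periods; hence the conditions $\mathcal{A}_g=-\mathcal{B}_g-\null^tn\Omega$ and $\mathcal{C}_g=-\mathcal{D}_g+\null^tn\Omega$ force $\mathcal{B}_g\equiv\mathcal{A}_g$ and $\mathcal{D}_g\equiv\mathcal{C}_g$ modulo the lattice, so that (iii), in both the $e^{(k)}$ and $\tau$ directions, identifies these four factors with those of $\psi(u,v)_{\mathcal{S}_1}$. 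The total phase coming from (ii) and from this realignment is purely imaginary, hence contributes nothing, precisely because $\mathcal{A}_g+\mathcal{C}_g-\mathcal{B}_g-\mathcal{D}_g\in i\mathbb{R}^g$ modulo $\null^tm\Omega$; this yields $\psi(u,v)_{\mathcal{S}_1}=-\psi(-\bar u,v)_{\mathcal{S}_1}$. The reflections $-\psi(e^{(1)}+\dots+e^{(j)}-\bar u,v)_{\mathcal{S}_1}$ follow the same pattern, the extra integer shifts being absorbed for free by the periodicity part of (iii), which carries no phase.

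The conceptual content is light; the only genuine obstacle is the bookkeeping of the exponential prefactors produced by (ii) and (iii), and the verification that their total phase equals exactly $8\pi\re u_1$ (respectively $8\pi\re u_1+8\pi\re u_j$) for the $\tau$-shifted reflections and exactly zero for all the others. This is where the imaginarity of $\Omega$ and the six relations among $\mathcal{A}_g,\mathcal{B}_g,\mathcal{C}_g,\mathcal{D}_g$ are consumed, and fixing the sign and the integer vector $n\in\mathbb{Z}^g$ in $\mathcal{A}_g=-\mathcal{B}_g-\null^tn\Omega$ and $\mathcal{C}_g=-\mathcal{D}_g+\null^tn\Omega$ is the most error-prone step. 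Finally, exactly as the passage from Proposition 1 to (\ref{poi}) for $\mathcal{S}_2$, substituting the admissible specialization $\mathcal{B}_g=\mathcal{A}_g$, $\mathcal{C}_g=\mathcal{D}_g=\bar{\mathcal{A}}_g$ collapses the six relations to the single requirement that $\mathcal{A}_g$ be a non-singular odd half period, which is the form used in Theorem 4.
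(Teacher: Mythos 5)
Your main argument is correct and is essentially the paper's own proof: the paper establishes Proposition 2 only by the remark that it follows ``the same lines'' as Proposition 1, i.e.\ the phase-tracking computation of Appendix B, which is what you reproduce for the $\mathcal{S}_1$ map. Your assignment of hypotheses to reflections is the right one: $\mathcal{A}_g=\bar{\mathcal{D}}_g$, $\bar{\mathcal{B}}_g=\mathcal{C}_g$ from $\psi(u,\bar v)_{\mathcal{S}_1}$ via $\overline{\vartheta(z,\Omega)}=\vartheta(\bar z,\Omega)$; the condition $\mathcal{A}_g+\mathcal{B}_g+\mathcal{C}_g+\mathcal{D}_g\in i\mathbb{R}^g$ together with the cancellation of the $\bar v$-terms and $2\pi\cdot 2(u_1+\bar u_1)=8\pi\re u_1$ for the $\tau$-shifted reflections; and $\mathcal{A}_g=-\mathcal{B}_g-\null^tn\Omega$, $\mathcal{C}_g=-\mathcal{D}_g+\null^tn\Omega$ (the common $n$ being forced, since otherwise a $v$-dependent phase $2(\null^tn_2-\null^tn_1)v$ survives) plus $\mathcal{A}_g+\mathcal{C}_g-\mathcal{B}_g-\mathcal{D}_g\in i\mathbb{R}^g$ for $\psi(-\bar u,v)_{\mathcal{S}_1}$.

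One correction, though, to your closing sentence: the specialization $\mathcal{B}_g=\mathcal{A}_g$, $\mathcal{C}_g=\mathcal{D}_g=\bar{\mathcal{A}}_g$ is \emph{incompatible} with the conditions you have just derived. Indeed $\mathcal{A}_g=-\mathcal{B}_g-\null^tn\Omega$ with $\mathcal{B}_g=\mathcal{A}_g$ forces $2\mathcal{A}_g=-\null^tn\Omega$; writing $2\mathcal{A}_g=\null^t\epsilon'I+\null^t\epsilon\Omega$ and using that $\Omega$ is purely imaginary, this requires $\epsilon'=0$, which contradicts oddness of the characteristic ($\null^t\epsilon\epsilon'=1$ mod $2$). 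The same problem occurs for $\mathcal{C}_g=\mathcal{D}_g$. The specialization that actually satisfies the hypotheses of Proposition 2 (with $n=0$) and reproduces (\ref{poipoi}) is $\mathcal{D}_g=\bar{\mathcal{A}}_g$, $\mathcal{B}_g=-\mathcal{A}_g$, $\mathcal{C}_g=-\bar{\mathcal{A}}_g$; note the minus signs in the denominator of (\ref{poipoi}). The form you wrote is the one displayed in Theorem 4, whose denominator carries $+\bar{\mathcal{A}}_g$, $+\mathcal{A}_g$: that discrepancy is a lattice-shift sloppiness of the paper itself, since $-\mathcal{A}_g\equiv\mathcal{A}_g$ modulo periods only up to a nontrivial exponential prefactor, which matters for the exact reflection identities (though not for the zero divisors). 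Since this remark lies outside the statement of Proposition 2 proper, it does not invalidate your proof of the proposition, but as written it would propagate a sign error into the passage to Theorem 4.
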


If we select odd characteristics through $\mathcal{A}_g,$ $ \mathcal{B}_g,$ $\mathcal{C}_g$, and $\mathcal{D}_g$ fulfilling the hypotesis of Proposition 2  we can write the most general  mirror map in canonical formalism  which satisfies (\ref{wer}), or

\begin{eqnarray}
\label{poipoi}
\psi(u,v)_{\mathcal{S}_1}&=&\arg \frac{ \vartheta(u-v+\mathcal{A}_g,\Omega) \vartheta(\bar{u}-v+ \bar{\mathcal{A}}_g,\Omega)}{\vartheta(u+v-\bar{\mathcal{A}}_g,\Omega) \vartheta(\bar{u}+v-\mathcal{A}_g,\Omega)}.
\end{eqnarray}

\subsection{Zero set of the mirror maps} We characterize now the zero set of the mirror maps. We need to introduce some definitions for divisors before.
 Given a compact Riemann surface $\mathcal{M}$ of genus $g$, then an odd half period $\mathcal{A}_g$ of its Jacobian variety $\mathcal{J}(\mathcal{M})$ is called non singular if it can be written as $\mathcal{A}_g=\varphi(\mathcal{D}_{g-1})+\mathcal{K}$ with $\mathcal{D}_{g-1}$  non special integral divisor of degree $g-1$ on $\mathcal{M}$, $\mathcal{K}$ vector of Riemann constants and $\varphi$ Abel-Jacobi map. We say that integer odd characteristics $\epsilon, \epsilon' \in \mathbb{Z}^g$ are non singular if the corresponding odd half period $\mathcal{A}_g=\frac{1}{2}\null^t\epsilon'I+\frac{1}{2}\null^t\epsilon\Omega$ is non singular. An integral divisor $D=k_iP^i$ , $P^i \in \mathcal{M}, k_i \in \mathbb{Z}$ is special if $i(D) > g-\mbox{deg}D \Longleftrightarrow r(-D) >1$ for Riemann-Roch theorem. Here $i(D)$ is the 
 index of specialty and $r(D)$ the dimension of  the divisor $D$.
  The existence of non singular odd half periods is a result shown, for example,  in \cite{FK}. Extending the Abel-Jacobi map $\varphi$ (\ref{Abel}) to arbitrary divisors $D=k_jP^j$ with $\varphi(D):= k_i \varphi(P^i)$, we recall that, with   $D_1$, $D_2$ integral non special divisors of degree $g$ such that  $\varphi(D_1)=\varphi(D_2)$, then $D_1 =D_2$. This result comes from the fact that Abel's theorem \cite{FK} implies that $D_1-D_2=(f)$, where $(f)$ is the divisor of a meromorphic function $f$ on $\mathcal{M}$; by non specialty it follows  $r(-D_1)=r(-D_2)=1$, that is $D_1=D_2$.  
Let us consider the multivalued map on  $\mathcal{M}$: $P \rightarrow \vartheta(\varphi(P)-e)$; if it is not identically null, then it has $g$ zeros,  and the zero divisor $\mathcal{Z}_g$ satisfies $\varphi(\mathcal{Z}_g)+\mathcal{K}=e$. For more details see Appendix B. Moreover it can be shown that $P \rightarrow \vartheta(\varphi(P)-e)$ vanishes identically on $\mathcal{M}$ iff $e=\varphi(\mathcal{D}_{g})+\mathcal{K}$ with $\mathcal{D}_g$ integral special divisor of degree $g$. 

By writing $\psi(u,v)_{\mathcal{S}_1} := \arg \chi_{\mathcal{S}_1}(u,v)$, $\psi(u,v)_{\mathcal{S}_2} := \arg \chi_{\mathcal{S}_2}(u,v)$ where $\psi(u,v)_{\mathcal{S}_1}$ and $\psi(u,v)_{\mathcal{S}_2}$ are given by  (\ref{poipoi},\ref{poi}) we  can state the following

\begin{Prop}
Let $\mathcal{M}$ be the  hyperelliptic Riemann surface of genus $g$ of Section 6.1 and let the mirror maps $\psi(u,v)_{\mathcal{S}_1}$, $\psi(u,v)_{\mathcal{S}_2}$ be given by (\ref{poipoi},\ref{poi}) with $\mathcal{A}_g$ non singular odd half period on the Jacobian variety of $\mathcal{M}$. Choose 
 
\begin{equation}
\label{AAA}
\mathcal{A}_g=\varphi(P_3P_5\dots\widehat{P}_{2j+1}\dots P_{2g+1})+\mathcal{K}:=\varphi(\mathcal{D}_{g-1,j})+\mathcal{K}
\end{equation}
  with $j=1,\dots,g$, where $\{P_{2k+1}\}_{k=1,\dots,g}$ are the odd branching points of $z: \mathcal{M}\rightarrow \mathbb{C}_{\infty}$, $\mathcal{K}$ is the vector of Riemann constants (see Appendix B) and caret means omission. Then the zero divisor of 
$\chi_{\mathcal{S}_1}(u,v)$ and $\chi_{\mathcal{S}_2}(u,v)$ on $P_n^{\circ}\times P_n^{\circ} \subset \mathcal{M} \times \mathcal{M}$  consists only of the  diagonal $\Delta=\left\{ (P,P)\mid  P \in P_n^{\circ} \right\}$.
\end{Prop}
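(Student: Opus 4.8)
The plan is to reduce everything to the Riemann vanishing theorem and the non-special divisor uniqueness recalled just before the statement, and then to exploit the real and hyperelliptic symmetries of $\mathcal{M}$ to peel off the three ``conjugate'' theta factors. First I record the easy inclusion: since $\mathcal{A}_g$ is an \emph{odd} half period, $\vartheta(\mathcal{A}_g,\Omega)=0$, so on $\Delta$ (where $u=v$) the factor $\vartheta(u-v+\mathcal{A}_g,\Omega)=\vartheta(\mathcal{A}_g,\Omega)$ vanishes. It then remains to show that this is the only zero of the numerator in $P_n^\circ\times P_n^\circ$ and that the denominator never vanishes there.

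Second, I locate the zeros of the distinguished numerator factor $\vartheta(u-v+\mathcal{A}_g,\Omega)$. Fixing $Q$ I view it as $P\mapsto\vartheta(\varphi(P)-e)$ with $e=\varphi(Q)-\mathcal{A}_g$. Because $\mathcal{A}_g$ is a half period, $-\mathcal{A}_g\equiv\mathcal{A}_g$ modulo the period lattice, so $e\equiv\varphi(Q)+\mathcal{A}_g=\varphi(Q+\mathcal{D}_{g-1,j})+\mathcal{K}$ by (\ref{AAA}). For $Q\in P_n^\circ$ the degree-$g$ divisor $Q+\mathcal{D}_{g-1,j}$ contains no pair $R+\iota R$ (its support consists of distinct odd branch points together with one non-branch point), hence is non-special; by the vanishing criterion the function is not identically zero, and by the uniqueness statement for non-special divisors of degree $g$ its zero divisor is exactly $Q+\mathcal{D}_{g-1,j}$. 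Thus $\vartheta(u-v+\mathcal{A}_g,\Omega)=0$ precisely for $P\in\{Q\}\cup\operatorname{supp}\mathcal{D}_{g-1,j}$, and since $\operatorname{supp}\mathcal{D}_{g-1,j}$ consists of branch points, whose $z$-images are real and hence lie on $\partial P_n$, the homeomorphism $z|_{P_n}:P_n^\circ\to\mathbb{H}^+$ excludes them, leaving only $P=Q$.

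Third, I dispose of the remaining three factors. Let $\iota$ be the hyperelliptic involution and $\sigma$ the anti-holomorphic involution with $z(\sigma P)=\overline{z(P)}$; with the chosen basis $\Omega$ is purely imaginary and the $\mathbb{I}_{ij}$ are real, whence $\varphi(\sigma P)=\overline{\varphi(P)}$ and $\varphi(\iota P)=-\varphi(P)$ modulo the lattice (the latter using that the base point $P_1$ is a Weierstrass point, so $\varphi$ of a fibre is $2\varphi(P_1)=0$), together with $\bar{\mathcal{A}}_g\equiv\mathcal{A}_g$. Hence each of $\vartheta(\bar u-v+\bar{\mathcal{A}}_g)$, $\vartheta(u+v-\bar{\mathcal{A}}_g)$, $\vartheta(\bar u+v-\mathcal{A}_g)$ equals, up to a nonvanishing exponential factor, one of the form $\vartheta(\varphi(P^\ast)-\varphi(Q^\ast)+\mathcal{A}_g)$ with $(P^\ast,Q^\ast)\in\{(\sigma P,Q),(P,\iota Q),(\sigma P,\iota Q)\}$, so by the second step each vanishes only when $P^\ast\in\{Q^\ast\}\cup\operatorname{supp}\mathcal{D}_{g-1,j}$. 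But for $P,Q\in P_n^\circ$ one has $z(\sigma P)\in\mathbb{H}^-$ while $z(Q)=z(\iota Q)\in\mathbb{H}^+$, and $\iota Q\notin P_n$ because $z|_{P_n}$ is injective; $\sigma$ and $\iota$ moreover fix the branch points, which are not interior. Therefore none of these three equations has a solution in $P_n^\circ\times P_n^\circ$, so the second numerator factor never vanishes there and the denominator never vanishes there. Consequently the zero divisor of $\chi_{\mathcal{S}_1}$ on $P_n^\circ\times P_n^\circ$ is exactly $\Delta$, the diagonal is a genuine zero and no poles arise; the case $\chi_{\mathcal{S}_2}$ follows identically, or from the symmetry $\theta_{\mathcal{S}_1}(v,u)=\theta_{\mathcal{S}_2}(u,v)$.

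I expect the third step to be the main obstacle: one must pin down the reality relations $\varphi\circ\sigma=\overline{\varphi(\cdot)}$, $\varphi\circ\iota=-\varphi$ and $\bar{\mathcal{A}}_g\equiv\mathcal{A}_g$ precisely modulo the lattice, so that the zero divisors of the conjugated factors really coincide with the translated computation of the first factor, and one must check the non-specialty of $Q+\mathcal{D}_{g-1,j}$ uniformly for all interior $Q$ so that the uniqueness theorem applies.
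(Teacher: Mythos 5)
Your proof is correct and follows essentially the same route as the paper's: for fixed $Q$ you invoke the Riemann vanishing theorem with the non-special divisor $Q\mathcal{D}_{g-1,j}$ to pin down the zeros of $\vartheta(\varphi(P)-\varphi(Q)+\mathcal{A}_g)$, and you handle the three conjugated factors via the hyperelliptic involution and the anti-holomorphic lift of complex conjugation (the paper's $J$ and $C$), concluding their zeros have $z$-images outside $\mathbb{H}^+$ or lie off $P_n$ by injectivity of $z|_{P_n}$. Your repackaging of the conjugate factors as translates of the first factor evaluated at $(\sigma P,Q)$, $(P,\iota Q)$, $(\sigma P,\iota Q)$ is only a presentational difference (and your glossing of the $C$ versus $JC$ sheet ambiguity is harmless, since both lifts have $z$-image in the lower half plane), so the two arguments coincide in substance.
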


\begin{proof}

First of all we note that $\mathcal{A}_g$ can be  written as $\mathcal{A}_g=\varphi(P_{2j+1}) $ mod $\null^tnI+\null^tm\Omega$, $j=1,\dots,g$ as the vector of Riemann constants is given by $\mathcal{K}=\sum_{j=1}^{g}\varphi(P_{2j+1})$, for the hyperelliptic curve $\mathcal{M}$ of Section 6.1 once we select the canonical homology basis $\mathcal{B}_1$. In this setting it can be shown \cite{FK} that $i(P_3P_5\dots P_{2g+1})=0$; this implies that the divisor $P_3P_5\dots\widehat{P}_{2j+1}\dots P_{2g+1}$ is non special and  with the choice (\ref{AAA}) $\mathcal{A}_g$ is a not singular odd half period.

In order to discuss the zero divisor of $\chi_{\mathcal{S}_1}(u,v)$ we begin by writing $\psi(u,v)_{\mathcal{S}_1} =\arg \chi_{\mathcal{S}_1}(u,v)=\arg \frac{ \vartheta(u-v+\mathcal{A}_g,\Omega) \vartheta(\bar{u}-v+ \bar{\mathcal{A}}_g,\Omega)}{\vartheta(u+v-\bar{\mathcal{A}}_g,\Omega) \vartheta(\bar{u}+v-\mathcal{A}_g,\Omega)}=\arg \frac{ \vartheta(u-v-\mathcal{A}_g,\Omega) \vartheta(u+\bar{v}-\mathcal{A}_g,\Omega)}{\vartheta(u+v-\mathcal{A}_g,\Omega) \vartheta(u-\bar{v}-\mathcal{A}_g,\Omega)}$. Now  we proceed to study the zero divisor of the multivalued function

\begin{equation*}
 P \rightarrow \chi_{\mathcal{S}_1}(u(P),v(Q))=\frac{ \vartheta(u(P)-v(Q)-\mathcal{A}_g,\Omega) \vartheta(u(P)+\overline{v(Q)}-\mathcal{A}_g,\Omega)}{\vartheta(u(P)+v(Q)-\mathcal{A}_g,\Omega) \vartheta(u(P)-\overline{v(Q)}-\mathcal{A}_g,\Omega)}
\end{equation*}
with $Q \in \mathcal{M}\backslash\{P_{i}\}_{i=1,\dots,g}$ fixed. The thetas are multivalued functions on $\mathcal{M}$ but the zero divisor is well defined as the multivaluedness generates a multiplicative non vanishing factor. As $Q \not\in \mathcal{D}_{g-1,j}$, then the divisor $Q\mathcal{D}_{g-1,j}$ is not special and the zero divisor of the multivalued holomorphic function on $\mathcal{M}$: $P \rightarrow \vartheta(u(P)-v(Q)-\mathcal{A}_g,\Omega)$ is given by $Q\mathcal{D}_{g-1,j}$.
 Analogously $P \rightarrow \vartheta(u(P)+v(Q)-\mathcal{A}_g,\Omega)$ has the zero divisor $S\mathcal{D}_{g-1,j}$ with $S=J(Q)$, that is $S$ is the image of $Q$ under the hyperelliptic involution $J$. Explicitly $J$ acts as follows: we write  $J(Q)=Q$ if $Q$ is a branching point for $\mathcal{M}$ (it is not our case), otherwise $J(Q)=S$ with $z(Q)=z(S)$ and $z:\mathcal{M}\rightarrow \mathbb{C}_{\infty}$ is the two sheeted branched covering of $\mathbb{C}_{\infty }$: for this reason $J$ is also called the "sheet exchange". From the definition it follows that $v(Q)=\varphi(Q)=-\varphi(J(Q)):=-v(S)$. We continue with $P \rightarrow  \vartheta(u(P)-\overline{v(Q)}-\mathcal{A}_g,\Omega)$; from the definition of the Abel-Jacobi map we get $\overline{v(Q)}=\overline{\int_{P_1}^Q \underline{\omega}} =
(\mathbb{I}_{1j} \int_{\overline{\gamma}} \frac{s^{j-1}ds}{\pm w(s)},\dots,\mathbb{I}_{gj}\int_{\overline{\gamma}} \frac{s^{j-1}ds}{\pm w(s)})$ 
where $\gamma$ is a smooth  curve  joining $x_1$ and $z(Q)$ and $w(s)=\sqrt{\prod_{i=1}^{2g+1}(s-x_i)}$ with $\overline{w(\bar{s})}=\pm w(s)$, for Schwarz reflection principle applied to regions in $\mathbb{H}^+$. Whenever we have the + sign, introducing the automorphism on $\mathcal{M}$
$C: Q \rightarrow C(Q)=T$, with $z(Q)=\overline{z(T)}$, we get $ \overline{v(Q)}=\overline{\varphi(Q)}=\varphi(T):=v(T)$. Whenever we have the - sign (for example, for $Q$ such that $z(Q) \in (x_{2i},x_{2i+1}), i=0,\dots,g, x_0=-\infty $) we compose $C$ with the sheet exchange $J$.
 
   So we get    the zero divisor $\left\{ \begin{array}{c} C(Q) \\ JC(Q) \end{array} \right\}\mathcal{D}_{g-1,j}$  for  $P \rightarrow  \vartheta(u(P)-\overline{v(Q)}-\mathcal{A}_g,\Omega)$. The bracket selects only one of the two divisors $C(Q)$ and $JC(Q)$,  depending on the sign of $\overline{w(\bar{s})}=\pm w(s)$, as above. All the  considerations so far imply that the multivalued function $ P \rightarrow \vartheta(u(P)+\overline{v(Q)}-\mathcal{A}_g,\Omega)$ has zero divisor given by $\left\{ \begin{array}{c} JC(Q) \\ C(Q) \end{array} \right\}\mathcal{D}_{g-1,j}$. Collecting  all the zero divisors for the thetas  we obtain the divisor of the function  $P \rightarrow \chi_{\mathcal{S}_1}(u(P),v(Q))$:

\begin{equation*}
(\chi_{\mathcal{S}_1}(u(P),v(Q))) = \frac{Q\mathcal{D}_{g-1,j}\left\{ \begin{array}{c} JC(Q) \\ C(Q) \end{array} \right\}\mathcal{D}_{g-1,j}}{J(Q)\mathcal{D}_{g-1,j}\left\{ \begin{array}{c} C(Q) \\ JC(Q) \end{array} \right\}\mathcal{D}_{g-1,j}};
\end{equation*}
 i.e. $P \rightarrow \chi_{\mathcal{S}_1}(u(P),v(Q))$ is null only for $P=Q$ on $P_n^{\circ}$ for the injectivity of the Abel-Jacobi map.

 As $\psi(u,v)_{\mathcal{S}_1} =\arg \chi_{\mathcal{S}_1}(u,v)=\arg \frac{ \vartheta(v-u-\mathcal{A}_g,\Omega) \vartheta(v-\bar{u}-\bar{\mathcal{A}}_g,\Omega)}{\vartheta(v+u-\bar{\mathcal{A}}_g,\Omega) \vartheta(v+\bar{u}-\mathcal{A}_g,\Omega)}$,  then we study the zero divisor of

\begin{equation*}
 Q \rightarrow \chi_{\mathcal{S}_1}(u(P),v(Q))=\frac{ \vartheta(v(Q)-u(P)-\mathcal{A}_g,\Omega) \vartheta(v(Q)-\overline{u(P)}-\bar{\mathcal{A}}_g,\Omega)}{\vartheta(v(Q)+u(P)-\bar{\mathcal{A}}_g,\Omega) \vartheta(v(Q)+\overline{u(P)}-\mathcal{A}_g,\Omega)}
\end{equation*}
with $P \in \mathcal{M}\backslash\{P_{i}\}_{i=1,\dots,g}$ fixed. The analysis follows the same lines here discussed and $Q \rightarrow \chi_{\mathcal{B}^1,\mathcal{S}_1}(u(P),v(Q))$ is null only for $P=Q$ on $P_n^{\circ}$ for the injectivity of the Abel-Jacobi map. 
Also the multivalued map  $\chi_{\mathcal{S}_2}(u,v)$ presents the same behaviour.

\end{proof}

\subsection{Abel-Jacobi map and zero modes for the superpropagators}

In presence of more than three branes we have the emersion of zero modes contributions as the cohomology of $(\mathcal{H}^n_{\mathcal{S}_i})$ is non trivial (Lemma 1). The contributions denoted with $\mathcal{Z}_{\mathcal{S}_i}(Q,P)$ in the definition of the integral kernels (Def.1) must absorb the extra terms in the reflections of the mirror maps   to get the correct boundary conditions for the $\theta(Q,P)_{\mathcal{S}_i}$. In the elliptic case ($g=1$) the image of $P_4$ in $\mathcal{J}(\mathcal{M})$ is given by the set $u(P_4)=\{u \in \mathbb{C}/nI+m\tau  \mid \re  u \in [0,\frac{1}{2}], ~\im  u \in [0,\frac{t}{2}]\}$, where $t=\im \tau >0$. We know that dim$H^1(\mathcal{H}^n_{\mathcal{S}_i})=1$;  explicit basis are given by $\rho_{\mathcal{S}_1}=d\im u(P)= \frac{1}{2i}[\omega(P)-\overline{\omega(P)}]$ and $\rho_{\mathcal{S}_2 }=d\re  u(P)=\frac{1}{2}[\omega(P)+\overline{\omega(P)}]$, respectively. With $\omega(P)$ here we denote the basis of holomorphic abelian differentials dual to the canonical homology basis $\mathcal{B}_1$. Motivated by these considerations, we write
   
\begin{Lem}
Let  $\theta(Q,P)_{\mathcal{S}_i}$ be given by (\ref{intker}) for $n=2g+2$ branes with  

\begin{eqnarray*}
\mathcal{Z}_{\mathcal{S}_1}(Q,P)&=&\im v_i\mathcal{O}^{ik}d\re  u_k, \nonumber \\
\mathcal{Z}_{\mathcal{S}_2}(Q,P)&=&\im u_i\mathcal{O}^{ik}d\re  v_k,
\end{eqnarray*}
$i,k=1,\dots,g$; then $\theta(Q,P)_{\mathcal{S}_i}$ satisfies  (\ref{bc}) for $\mathcal{O}=4 (\Im m\Omega)^{-1}$.

\begin{proof} 
Apply the boundary conditions (\ref{bc}) to $\mathcal{Z}_{\mathcal{S}_i}$; the additional contributions generated by the reflections cancel with those of the mirror maps, as in Prop.1-2.
\end{proof}

\end{Lem}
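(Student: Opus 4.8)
The plan is to verify the four families of boundary relations in (\ref{bc}) edge by edge on the polygon $P_n$, reducing everything to the reflection identities for the mirror maps established in Propositions 1 and 2. Since by construction $\theta_{\mathcal{S}_1}(v,u)=\theta_{\mathcal{S}_2}(u,v)$, it suffices to treat $i=1$; the $\mathcal{S}_2$ statements follow by interchanging the roles of $u=\varphi(P)$ and $v=\varphi(Q)$, which is precisely why $\mathcal{Z}_{\mathcal{S}_2}$ carries $\im u$ and $d\re v$ in place of $\im v$ and $d\re u$. The decisive observation is that the reflection list (\ref{wer}) splits into \emph{clean} relations of the form $\psi_{\mathcal{S}_1}=-\psi_{\mathcal{S}_1}(\,\cdot\,)$ and \emph{anomalous} ones carrying the extra additive terms $8\pi\re u_1$ and $8\pi\re u_1+8\pi\re u_j$. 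On the clean edges the mirror contribution $\tfrac{1}{2\pi}d\arg\psi_{\mathcal{S}_1}$ already pulls back to zero by the mirror-charge argument used in the two- and three-brane cases; since Lemma 1 makes $H^1(\mathcal{H}^n_{\mathcal{S}_i})$ nontrivial for $n\geq 4$, the whole content here is that the zero-mode term $\mathcal{Z}_{\mathcal{S}_1}$ cancels the residual left by the anomalous edges while leaving the clean ones untouched.

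First I would dispose of the clean edges. Along an odd edge $C_{2j+1}$ one has, from Section 6.1, $\re u(P)=\tfrac12(e^{(1)}+\dots+e^{(j)})$ constant (the $\tau$-shifts sit in the purely imaginary part), and this value is independent of $Q$; hence $d\re u_k$ pulls back to zero on $\{P\in C_{2j+1}\}\times P_n$ and $\mathcal{Z}_{\mathcal{S}_1}$ vanishes there, matching the vanishing of the mirror term and giving $\theta(Q,P\in\partial P_n^{odd})_{\mathcal{S}_1}=0$. On the clean even edge $C_2$ one has $v(Q)\in\mathbb{R}^g$, so $\im v_i=0$ and again $\mathcal{Z}_{\mathcal{S}_1}=0$. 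Thus the clean edges impose no constraint on $\mathcal{O}$.

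The core computation is on the anomalous even edges, where I would work with forms throughout to avoid the $\!\!\mod 2\pi$ ambiguity of $\arg$. Differentiating the anomalous identity of (\ref{wer}) gives the exact $1$-form relation $d\arg\psi_{\mathcal{S}_1}=-R^{*}d\arg\psi_{\mathcal{S}_1}+8\pi\,d\re u_1$ on $C_{2g+2}$, where $R$ is the reflection $v\mapsto\tau^{(1)}+\bar v$ fixing that edge. Since $R^{*}$ acts as the identity on the pullback to the fixed locus, the mirror term leaves the residual $\tfrac{1}{2\pi}\cdot\tfrac12\cdot 8\pi\,d\re u_1=2\,d\re u_1$. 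I would then evaluate $\im v_i$ on the edge: because $\Omega$ is purely imaginary and $v=\tfrac12\tau^{(1)}+\mathbb{R}^g$ there, one gets $\im v_i=\tfrac12(\im\Omega)_{i1}$, so that $\mathcal{Z}_{\mathcal{S}_1}=\tfrac12(\im\Omega)_{i1}\mathcal{O}^{ik}d\re u_k$. Cancellation of $2\,d\re u_1$ forces $(\im\Omega)_{i1}\mathcal{O}^{ik}=4\delta^k_1$, which, by symmetry and invertibility of the period matrix, is solved exactly by $\mathcal{O}=4(\im\Omega)^{-1}$ via $(\im\Omega)_{i1}\bigl((\im\Omega)^{-1}\bigr)^{ik}=\delta^k_1$. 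The very same choice works verbatim on the edges $C_{2j}$, where the double shift $\tau^{(1)}+\tau^{(j)}$ produces the residual $2(d\re u_1+d\re u_j)$ and $\im v_i=\tfrac12[(\im\Omega)_{i1}+(\im\Omega)_{ij}]$, so that $\mathcal{Z}_{\mathcal{S}_1}=2(d\re u_1+d\re u_j)$ again by the same matrix identity.

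I expect the bookkeeping of these coefficients to be the main obstacle: one must track the factor $\tfrac12$ coming from $\im v$ on the fixed locus, the factor $8\pi$ in the anomaly, and the normalisation $\tfrac{1}{2\pi}$ in (\ref{intkernn}), and one must use in an essential way both that $\Omega$ is purely imaginary (so $\im v$ on an anomalous edge is a half-period of $\im\Omega$) and that $\im\Omega$ is symmetric and invertible (so the contraction collapses to $\delta^k_1$). Once these are in place the cancellation is exactly the mechanism already exhibited in Propositions 1 and 2, all four relations (\ref{bc}) follow for $\mathcal{S}_1$, and the $\mathcal{S}_2$ case is obtained by the $u\leftrightarrow v$ symmetry noted at the outset.
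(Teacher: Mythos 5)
Your proof is correct and follows exactly the route the paper's (one-sentence) proof indicates: apply the boundary conditions edge by edge, let the clean reflections kill both the mirror term and $\mathcal{Z}_{\mathcal{S}_i}$ (via $\im v=0$ on $C_2$ and $d\re u$ pulling back to zero on odd edges), and cancel the anomalous residuals $2\,d\re u_1$ and $2(d\re u_1+d\re u_j)$ of Propositions 1--2 against the zero-mode term using $\im v_i=\tfrac12(\im\Omega)_{i1}$ (resp. $\tfrac12[(\im\Omega)_{i1}+(\im\Omega)_{ij}]$) and the symmetry and invertibility of $\im\Omega$. The coefficient bookkeeping you flag as the main obstacle checks out, and your reduction of the $\mathcal{S}_2$ case to $\mathcal{S}_1$ by the $u\leftrightarrow v$ symmetry is consistent with the paper's construction.
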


\subsection{Superpropagators with $n \geq 4$ branes: explicit formulas} We are ready to write down the explicit formulas for the superpropagators in the $n \geq 4$ branes case. We have found the generalized angle functions through theta functions, we have studied thier zero divisor and we have introduced the zero modes terms; they correct the additional contributions generated in the reflections of the mirror maps. Collecting all these results we get the superpropagators ($P,Q \in P_n \times P_n$)

 \begin{eqnarray*}
\label{propag1}
\theta(Q,P)_{\mathcal{S}_1}=\frac{1}{2\pi}d\arg \frac{ \vartheta(\varphi(P)-\varphi(Q)+\mathcal{A}_g,\Omega) \vartheta(\overline{\varphi(P)}-\varphi(Q)+ \bar{\mathcal{A}}_g,\Omega)}{\vartheta(\varphi(P)+\varphi(Q)+\bar{\mathcal{A}}_g,\Omega) \vartheta(\overline{\varphi(P)}+\varphi(Q)+\mathcal{A}_g,\Omega)}-4\mathcal{Z}_{\mathcal{S}_1}(Q,P) & \\
\label{propag2}
\theta(Q,P)_{\mathcal{S}_2}=\frac{1}{2\pi} d\arg\frac{ \vartheta(\varphi(P)-\varphi(Q)+\mathcal{A}_g,\Omega) \vartheta(\overline{\varphi(P)}+\varphi(Q)+ \bar{\mathcal{A}}_g,\Omega)}{\vartheta(\varphi(P)+\varphi(Q)+\bar{\mathcal{A}}_g,\Omega) \vartheta(\overline{\varphi(P)}-\varphi(Q)+\mathcal{A}_g,\Omega)}-4\mathcal{Z}_{\mathcal{S}_2}(Q,P)& 
 \end{eqnarray*}
where $\mathcal{Z}_{\mathcal{S}_i}(Q,P):=\left\{ \begin{array}{c} \im \varphi_i(Q) (\im \Omega)^{-1}d\re  \varphi_j(P)~~~ i=1 \\ \im \varphi_i(P) (\im \Omega)^{-1}d\re  \varphi_j(Q)~~~ i=2\end{array}\right.$. We have simply used (\ref{Abel}) to replace $(u,v)$ with $\varphi$. They fulfill the correct boundary conditions expressed by the index sets $\mathcal{S}_i$ with no additional terms.
Moreover it follows   $\theta(Q,P)_{\mathcal{S}_1}=\theta(P,Q)_{\mathcal{S}_2}$, $\theta(Q,P)_{\mathcal{S}_2}=\theta(P,Q)_{\mathcal{S}_1}$; the integral kernels are independent on the choice of odd non singular $\mathcal{A}_g$ as in (\ref{AAA}) and the computation of the additional boundary conditions satisfied by $\theta(Q,P)_{\mathcal{S}_i}$ is straightforward: this ends the proof of Th.4.

The integral kernels for the $n \geq 4$ branes cases present a "similarity"  with the expression of the Green function for the Laplacian operator (that is second order differential operator) on compact Riemann surfaces  in \cite{Verlinde:1986kw,Verlinde:1987sd}; such Green function it is a sum of a main part involving the prime form defined on the compact curve of genus $g$ and $g$ zero modes contributions; here instead of a single prime form we have to use the product of four first order odd Riemann theta functions to fulfill the boundary conditions. A mathematical formulation of this remark, involving Schiffer and Bergmann kernels on $\mathcal{M}$ will be given in  \cite{next}.

 \section{Conclusions}

We have written the explicit formulas for the superpropagators  of the P$\sigma$M in presence of branes. With two or three branes we used a Schwarz-Christoffel mapping to produce the integral kernels with the correct boundary conditions.  With more than three branes we have introduced hyperelliptic curves $\mathcal{M}$ of genus $g$ and first order Riemann theta functions with odd characteristics defined on the Jacobian variety $\mathcal{J}(\mathcal{M})$. The superpropagators include zero modes contributions involving the inverse of the matrix of periods for $\mathcal{M}$.   With these formulas it is possible to  study the algebraic properties and the deformation of the associative product of the algebra of observables  $\mathcal{A}$ for the P$\sigma$M in presence of branes. This should give a generalization of the $P_{\infty}$ structure on $\mathcal{A}=\Gamma(\wedge NC)$ for a single brane $C$ described in \cite{Cattaneo:2005zz},\cite{C2007}, while the non perturbative analysis necessarily leads to the definition of a  Fukaya $A_{\infty}$ category for the P$\sigma$M with branes. 
Before studying a "global"   Fukaya category inspired by the nonperturbative P$\sigma$M with branes, it is possible to analyse a local version deduced by the tools of Homological Perturbation Theory applied to a suitable differential graded category. The superpropagators  here deduced play the role of homotopy operators on the space of morphisms of such category.
As the two and three branes cases induce bimodules and morphisms of bimodules \cite{CF} \cite{Cattaneo:2005zz}  one could expect $A_{\infty}$ bimodules and morphisms for the hyperelliptic cases (or even a more general structure); moreover with linear Poisson structure  (and branes as affine subspaces) the expression of the superpropagators here obtained   gives explicit higher order formulas for the diagrammatics  developed in \cite{CT}.

\appendix

\section{Theta functions and Riemann surfaces}

We introduce briefly first order Riemann theta functions; the material here presented is standard: the reader interested in  proofs and wider expositions should consult, for example, \cite{FK}. We write the translation properties for the thetas we used in Section 6.

\begin{Def}
Let $\mathcal{G}_g$ denote the Siegel upper half spaces of genus $g$, that is the space of complex symmetric $g \times g$ matrices with positive imaginary part. We define Riemann's theta function by
\begin{equation*}
\vartheta(z,\Omega):= \Sigma_{N \in \mathbb{Z}^g} e^{2\pi i(\frac{1}{2}\null^tN\Omega N+\null^tNz)}
\end{equation*}
\end{Def} 
where $z \in \mathbb{C}^g$ (viewed as a column vector) and $\Omega \in \mathcal{G}_g$ and the sum extends over all integer vectors in $\mathbb{C}^g$. The function converges absolutely and uniformly on compact subsets of $\mathbb{C}^g \times \mathcal{G}_g$. In the sequel we will fix the matrix $\Omega$ and we will consider $\vartheta$ as a holomorphic function on $\mathbb{C}^g$.

\begin{Prop}
Let $\mu, \mu' \in \mathbb{Z}^g$. Then

\begin{equation*}
\vartheta(z+\null^t\mu'I+\null^t\mu\Omega,\Omega)=e^{2\pi i(-\null^t\mu z-\frac{1}{2}\null^t\mu\Omega\mu)}\vartheta(z,\Omega),
\end{equation*}
$ \forall z \in \mathbb{C}^g, \Omega \in \mathcal{G}_g$ and with $I$  the  $g\times g$ identity matrix.
\end{Prop}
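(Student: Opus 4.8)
The plan is to prove the quasi-periodicity relation directly from the series definition of $\vartheta$, using only the symmetry of $\Omega$ and the integrality of $\mu,\mu'$, with the absolute and uniform convergence recorded just below the definition of $\vartheta$ justifying every rearrangement. Writing the shift as the lattice vector $\null^t\mu'I+\null^t\mu\Omega$ and substituting into the series, the first observation I would make is that the "real-period" part $\null^t\mu'I$ contributes trivially: each summand acquires a factor $e^{2\pi i\,\null^tN\mu'}$, and since $N,\mu'\in\mathbb{Z}^g$ the exponent $\null^tN\mu'$ is an integer, so this factor equals $1$ and is dropped termwise.

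First I would therefore reduce to $\vartheta(z+\null^t\mu\Omega,\Omega)=\sum_{N\in\mathbb{Z}^g}e^{2\pi i(\frac{1}{2}\null^tN\Omega N+\null^tNz+\null^tN\Omega\mu)}$, where the extra linear term $\null^tN\Omega\mu$ is the only new ingredient. The central step is to complete the square in $N$: using $\null^t\Omega=\Omega$ one has $\null^t\mu\Omega N=\null^tN\Omega\mu$, whence $\tfrac{1}{2}\null^t(N+\mu)\Omega(N+\mu)=\tfrac{1}{2}\null^tN\Omega N+\null^tN\Omega\mu+\tfrac{1}{2}\null^t\mu\Omega\mu$, so that $\tfrac{1}{2}\null^tN\Omega N+\null^tN\Omega\mu=\tfrac{1}{2}\null^t(N+\mu)\Omega(N+\mu)-\tfrac{1}{2}\null^t\mu\Omega\mu$. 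Combining this with $\null^tNz=\null^t(N+\mu)z-\null^t\mu z$ rewrites the exponent entirely in terms of $N+\mu$, plus the two $N$-independent terms $-\null^t\mu z-\tfrac{1}{2}\null^t\mu\Omega\mu$, which pull out of the sum as the prefactor $e^{2\pi i(-\null^t\mu z-\frac{1}{2}\null^t\mu\Omega\mu)}$.

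The final step is to reindex the residual sum by $M:=N+\mu$. Since $\mu\in\mathbb{Z}^g$ the translation $N\mapsto N+\mu$ is a bijection of $\mathbb{Z}^g$, and because the series converges absolutely this relabelling leaves the value unchanged, turning $\sum_{N}e^{2\pi i(\frac{1}{2}\null^t(N+\mu)\Omega(N+\mu)+\null^t(N+\mu)z)}$ into $\vartheta(z,\Omega)$ exactly. Assembling the pieces gives $\vartheta(z+\null^t\mu'I+\null^t\mu\Omega,\Omega)=e^{2\pi i(-\null^t\mu z-\frac{1}{2}\null^t\mu\Omega\mu)}\vartheta(z,\Omega)$, as asserted.

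There is no real obstacle here beyond bookkeeping: the argument is a completion of the square followed by a lattice reindexing. The only points demanding genuine care are keeping the transposed row/column conventions in $\null^tN\Omega\mu$ and $\null^t\mu\Omega\mu$ consistent (where the symmetry $\null^t\Omega=\Omega$ is used), and invoking absolute convergence to license both the termwise dropping of $e^{2\pi i\,\null^tN\mu'}$ and the shift of summation index; both are guaranteed by the convergence statement accompanying the definition of $\vartheta$.
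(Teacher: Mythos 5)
Your proof is correct. The paper itself offers no proof of this proposition --- Appendix A presents it as standard material and defers to the references (Farkas--Kra, Mumford) --- and your argument (termwise dropping of the factor $e^{2\pi i\,\null^tN\mu'}$ by integrality, completion of the square using $\null^t\Omega=\Omega$, and the lattice reindexing $N\mapsto N+\mu$ licensed by absolute convergence) is precisely the standard proof given in those sources, so there is nothing to add.
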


In particular

\begin{eqnarray}
\vartheta(z+e^{(j)},\Omega)&=&\vartheta(z,\Omega), \nonumber \\
\vartheta(z+\tau^{(j)},\Omega)&=&e^{2\pi i(-z_k-\frac{\tau_{kk}}{2})}\vartheta(z,\Omega), \nonumber \\
\label{translat}
\vartheta(-z,\Omega)&=&\vartheta(z,\Omega),
\end{eqnarray} 
 where $e^{(j)}$ is the $j$-th column of the identity matrix $I$ and $\tau^{(j)}$ is the
 $j$-th column of $\Omega$. We continue with
\begin{Def}
Let $\epsilon$, $\epsilon' \in \mathbb{R}^{g}$; the holomorphic functions on $\mathbb{C}^g \times \mathcal{G}_g$
\begin{equation*}
\vartheta\left[\begin{array}{c} \epsilon \\ \epsilon' \end{array}\right]
(z,\Omega):= \Sigma_{N \in \mathbb{Z}^g}e^{2\pi i(\frac{1}{2}\null^t(N+\frac{\epsilon}{2})\Omega (N+\frac{\epsilon}{2})+\null^t(N+\frac{\epsilon}{2})(z+\frac{1}{2}))}
\end{equation*}
are called (first order) theta functions with characteristics.
\end{Def} 

When selecting $\epsilon$, $\epsilon' \in \mathbb{Z}^g$ (integer characteristics) it is easy to prove that $\vartheta\left[\begin{array}{c} \epsilon \\ \epsilon' \end{array}\right]
(-z,\Omega)=e^{2\pi i (\frac{\null^{t}\epsilon\epsilon'}{2})}\vartheta\left[\begin{array}{c} \epsilon \\ \epsilon' \end{array}\right]
(z,\Omega)$, i.e. theta functions with integer characteristics are odd if $\null^{t}\epsilon'\epsilon=1$ mod 2, even when $\null^{t}\epsilon^{'}\epsilon=0$ mod 2. Moreover it follows $\vartheta\left[\begin{array}{c} \epsilon \\ \epsilon' \end{array}\right]
(z,\Omega)=e^{2\pi i ( \frac{1}{8}\null^{t}\epsilon \Omega \epsilon +\frac{1}{2}\null^{t}\epsilon z + \frac{1}{4}\null^{t}\epsilon\epsilon' )         }\vartheta(z+ \frac{\null^t\epsilon'}{2}I+ \frac{\null^t\epsilon}{2}\Omega,\Omega)$, i.e.  if $\vartheta\left[\begin{array}{c} \epsilon \\ \epsilon' \end{array}\right] (z,\Omega)$ is odd then $\vartheta( \frac{\null^t\epsilon'}{2}I+ \frac{\null^t\epsilon}{2}\Omega,\Omega)=0$; the Riemann theta function  vanishes at the odd points of order two in the lattice generated by the columns of $I$ and $\Omega$: they are  called odd half periods. We introduce now $\vartheta\left[\begin{array}{c} \epsilon \\ \epsilon' \end{array}\right] \circ \varphi$, where $\varphi$ is the  Abel-Jacobi map for $\mathcal{M}$, compact Riemann surface of genus $g$. $\vartheta\left[\begin{array}{c} \epsilon \\ \epsilon' \end{array}\right] \circ \varphi$  is multivalued but its zeroes are well defined on $\mathcal{M}$ becouse the multivaluedness is multiplicative with a non vanishing factor. Then we have the classical

\begin{Theorem} 
Let $\mathcal{M}$ be a compact Riemann surface of genus $g \geq$ 1 with canonical homology basis $\{a_i,b_i  \}_{\{i=1,\dots,g \} }$. Let $\vartheta\left[\begin{array}{c} \epsilon \\ \epsilon' \end{array}\right] \circ \varphi$ be the first order Riemann theta function associated  with $( \mathcal{M}, \{a_i,b_i  \}_{\{i=1,\dots,g \} }) $ and let $\varphi$ be the Abel-Jacobi map for $ \mathcal{M}$. Then $\vartheta\left[\begin{array}{c} \epsilon \\ \epsilon' \end{array}\right] \circ \varphi$ is either identically zero as a function on $\mathcal{M}$ or else it has precisely $g$ zeros on $\mathcal{M}$. In this case let $P_1P_2\dots P_{g}$ be the
divisor of zeros. We then have $\varphi(P_1P_2\dots P_{g})=-\frac{\Omega\epsilon}{2}-\frac{I\epsilon'}{2}-\mathcal{K}$, where $\mathcal{K}$ is the  vector of Riemann constants which depends on the canonical homology basis and the base point for the Abel-Jacobi map.
\end{Theorem}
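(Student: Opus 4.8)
This is the classical Riemann vanishing theorem, so the plan is to reduce the statement to an application of the argument principle on the surface cut open along its canonical basis. First I would remove the characteristics: by the translation law for $\vartheta\left[\begin{array}{c}\epsilon\\\epsilon'\end{array}\right]$ recorded in Appendix A, this function differs from $\vartheta\!\left(\varphi(P)+\tfrac{\null^t\epsilon'}{2}I+\tfrac{\null^t\epsilon}{2}\Omega,\Omega\right)$ by a nowhere vanishing exponential factor, so it has the same zero divisor on $\mathcal{M}$. Hence it suffices to treat $F(P):=\vartheta(\varphi(P)-e,\Omega)$ with $e:=-\tfrac{\null^t\epsilon'}{2}I-\tfrac{\null^t\epsilon}{2}\Omega$, and to prove that, when $F\not\equiv 0$, it has exactly $g$ zeros $P_1,\dots,P_g$ with $\varphi(P_1\cdots P_g)=e-\mathcal{K}$. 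I would cut $\mathcal{M}$ along $\{a_i,b_i\}$ to obtain a fundamental $4g$-gon $\Delta$ with oriented boundary $\partial\Delta=\prod_{k=1}^g a_k b_k a_k^{-1} b_k^{-1}$; although $F$ is multivalued on $\mathcal{M}$, Proposition 3 shows its quasi-periodicity is multiplicative and nonvanishing, so its zero divisor is well defined and the argument principle applies on $\Delta$.

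For the zero count I would compute $N=\frac{1}{2\pi i}\oint_{\partial\Delta}d\log F$ by grouping the boundary into the identified pairs $(a_k,a_k^{-1})$ and $(b_k,b_k^{-1})$. On the pair $(a_k,a_k^{-1})$ the two identified points differ by the $b_k$-period $\tau^{(k)}$, so by the quasi-periodicity $d\log\vartheta(z+\tau^{(k)})-d\log\vartheta(z)=-2\pi i\,dz_k$ the combined contribution collapses to $\int_{a_k}\omega_k=1$ (using the normalization $\int_{a_j}\omega_k=\delta_{jk}$). On the pair $(b_k,b_k^{-1})$ the two points differ by the integral $a_k$-period $e^{(k)}$, under which $\vartheta$ is genuinely periodic, so these contributions cancel. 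Summing over $k$ gives $N=g$.

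The harder step, and the main obstacle, is locating the zeros: I would evaluate $\sum_\ell \varphi_j(P_\ell)=\frac{1}{2\pi i}\oint_{\partial\Delta}\varphi_j\,d\log F$ by the same pairing. Now the correction terms do not vanish, because $\varphi_j$ itself jumps by a period across each cut; expanding $\int_{a_k}+\int_{a_k^{-1}}$ and $\int_{b_k}+\int_{b_k^{-1}}$ produces, besides the desired residue sum, boundary integrals of the form $\int_{a_k}\varphi_j\,\omega_k$ and $\delta_{jk}\int_{b_k}d\log F$ together with the period matrix entries $\tau_{jk}$. Organizing these via the Riemann bilinear relations, and tracking the dependence on the base point $P_1$ (equivalently, on the choice of starting vertex of $\Delta$), yields exactly the combination that one takes as the \emph{definition} of the vector of Riemann constants $\mathcal{K}$; one then reads off $\sum_\ell\varphi(P_\ell)=e-\mathcal{K}$ modulo the period lattice. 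Substituting $e=-\tfrac{\null^t\epsilon'}{2}I-\tfrac{\null^t\epsilon}{2}\Omega$ and using the symmetry of $\Omega$ gives the asserted $\varphi(P_1\cdots P_g)=-\tfrac{\Omega\epsilon}{2}-\tfrac{I\epsilon'}{2}-\mathcal{K}$. The delicate points throughout are the sign and normalization conventions in the period identifications and the careful separation of the genuinely multiplicative (hence residue-free) part of the multivalued integrand from the additive jumps that assemble $\mathcal{K}$.
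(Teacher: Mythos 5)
The paper does not prove this statement: it appears in Appendix A as quoted classical background (the Riemann vanishing theorem), with proofs explicitly deferred to the cited reference \cite{FK}. Your sketch is precisely the standard proof found there --- strip the characteristics via the translation law, then apply the argument principle on the canonically cut $4g$-gon, plain for the count $N=g$ and weighted by $\varphi_j$ for the location of the zeros, with the $(a_k,a_k^{-1})$ and $(b_k,b_k^{-1})$ edge pairings yielding respectively the contributions $\int_{a_k}\omega_k=1$ and the correction terms that assemble $e_j$ and $-\mathcal{K}_j$ modulo the lattice --- and it is correct in outline.
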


 We remind that   for every compact Riemann surface of genus  $g \geq 1$ $\mathcal{K}$ is a half period
 in the Jacobian variety of $\mathcal{M}$; in particular, selecting the hyperelliptic curve $\mathcal{M}$ described in Section 6.1 and the canonical homology basis $\mathcal{B}_1$, it is not hard to show that $\mathcal{K}=\sum_{j=1}^{g}\varphi(P_{2j+1})$, where $\{ P_{2j+1} \}_{j=1,\dots,g}$ is the set of odd branching points of $\mathcal{M}$, i.e.
 
 \begin{equation*}
  \mathcal{K}=\frac{1}{2}(ge^{(1)}+(g-1) e^{(2)}+\dots+e^{(g)}+g\tau^{(1)}+\tau^{(2)}+\dots+\tau^{(g)}) ~\mbox{mod}~\null^tnI+\null^tm\Omega
 \end{equation*}

\section{Proof of Proposition 1}

We write here the proof of Proposition 1 of Section 6.1; it states that the map $\psi(u,v)_{\mathcal{S}_2} := \arg \frac{ \vartheta(u-v+\mathcal{A}_g,\Omega) \vartheta(\bar{u}+v+\mathcal{B}_g,\Omega)}{\vartheta(u+v+\mathcal{C}_g,\Omega) \vartheta(\bar{u}-v+\mathcal{D}_g,\Omega)}$ satisfies the reflection properties:

\begin{eqnarray}
&&\psi(u,v)_{\mathcal{S}_2}=-\psi(\bar{u},v)_{\mathcal{S}_2}=-\psi(u,-\bar{v})_{\mathcal{S}_2}= \nonumber\\
&&-\psi(\tau^{(1)}+\bar{u},v)_{\mathcal{S}_2}+8\pi\re  v_1=-\psi(u,e^{(1)}-\bar{v})_{\mathcal{S}_2}=\nonumber \\
&&-\psi(\tau^{(1)}+\tau^{(j)}+\bar{u},v)_{\mathcal{S}_2}+8\pi\re  v_1+8\pi\re  v_j=  \nonumber \\ 
\label{ABCD}
&&-\psi(u,e^{(1)}+\dots+e^{(j)}-\bar{v})_{\mathcal{S}_2}~~ \forall j=2,\dots,g 
\end{eqnarray} 
 provided

\begin{align*}
\mathcal{A}_g &=\mathcal{D}_g & \mathcal{A}_g &=\bar{\mathcal{B}}_g &  &\mathcal{B}_g+\mathcal{C}_g-\mathcal{A}_g-\mathcal{D}_g \in i\mathbb{R}^g  ~\mbox{mod}~\null^tm\Omega  \\
\mathcal{B}_g &=\mathcal{C}_g &  \bar{\mathcal{C}}_g &=\mathcal{D}_g &   &\mathcal{A}_g+\mathcal{B}_g-\mathcal{C}_g-\mathcal{D}_g \in i\mathbb{R}^g ~\mbox{mod}~\null^tm\Omega   \\
 \end{align*}

\begin{proof}

Imposing the reflections given by (\ref{123}) we get:

\begin{eqnarray}
\psi(\bar{u},v)_{\mathcal{S}_2}&=&\arg \frac{ \vartheta(\bar{u}-v+\mathcal{A}_g,\Omega) \vartheta(u+v+\mathcal{B}_g,\Omega)}{\vartheta(\bar{u}+v+\mathcal{C}_g,\Omega) \vartheta(u-v+\mathcal{D}_g,\Omega)}=-\psi(u,v)_{\mathcal{S}_2} \nonumber 
\end{eqnarray}
if a $\mathcal{A}_g =\mathcal{D}_g+\null^tn_1I+\null^tm_1\Omega$  and $\mathcal{B}_g =\mathcal{C}_g+\null^tn_2I+\null^tm_2\Omega$ for some  $n_1, m_1, n_2, n_2 \in \mathbb{Z}^g$. The aim is to select $n_1, m_1, n_2, n_2$ (if it is possible)  to get $\psi(\bar{u},v)_{\mathcal{S}_2}=-\psi(u,v)_{\mathcal{S}_2}$ with no additional terms.  Using the translation properties (\ref{translat}) we can write

\begin{eqnarray}
\psi(\bar{u},v)_{\mathcal{S}_2}&=&-\psi(u,v)_{\mathcal{S}_2}+\arg e^{2\pi i( \null^tm_1(2v-\bar{u}-u-\mathcal{D}_g-\mathcal{A}_g)-\null^tm_2(2v+\bar{u}+u+\mathcal{C}_g+\mathcal{D}_g))}=\nonumber \\
&=& -\psi(u,v)_{\mathcal{S}_2} \nonumber 
\end{eqnarray}
if and only if $ m_1=m_2=0$. With these choices $\mathcal{A}_g =\mathcal{D}_g+\null^tn_1I$ and $\mathcal{B}_g =\mathcal{C}_g+\null^tn_2I$. We continue with:

\begin{eqnarray}
\psi(u,-\bar{v})_{\mathcal{S}_2}=\arg \frac{ \vartheta(\bar{u}-v+\bar{\mathcal{C}}_g,\Omega) \vartheta(u+v+\bar{\mathcal{D}}_g,\Omega)}{\vartheta(\bar{u}+v+\bar{\mathcal{A}}_g,\Omega) \vartheta(u-v+\bar{\mathcal{B}}_g,\Omega)}= 
-\psi(u,v)_{\mathcal{S}_2} \nonumber
\end{eqnarray}
if  $\mathcal{A}_g =\bar{\mathcal{B}}_g+\null^tn_3I+\null^tm_3\Omega$ and $\bar{\mathcal{C}}_g =\mathcal{D}_g+\null^tn_4I+\null^tm_4\Omega$ for some $n_3,m_3,n_4,m_4\in \mathbb{Z}^g$ to be determined.  We have used the property of first order  Riemann theta functions  $\overline{\vartheta(z,\Omega)}=\vartheta(-\bar{z},-\bar{\Omega})=\vartheta(\bar{z},\Omega)$ as the matrix $\Omega$ is pure imaginary. To get the reflection $\psi(u,-\bar{v})_{\mathcal{S}_2}=-\psi(u,v)_{\mathcal{S}_2}$ we write

\begin{eqnarray}
\psi(u,-\bar{v})_{\mathcal{S}_2} &=&-\psi(u,v)_{\mathcal{S}_2} +\arg e^{2\pi i (-\null^tm_3(\mathcal{A}_g+\mathcal{B}_g+2\Re eu)-\null^tm_4( \mathcal{D}_g+\mathcal{C}_g+2\Re e u )) }=\nonumber \\
&=&-\psi(u,v)_{\mathcal{S}_2} \nonumber
\end{eqnarray}
if and only if $m_3=-m_4$. This reflection imposes $\mathcal{A}_g =\bar{\mathcal{B}}_g+\null^tn_3I+\null^tm_3\Omega$, $\bar{\mathcal{C}}_g =\mathcal{D}_g+\null^tn_4I-\null^tm_3\Omega$ and $\mathcal{A}_g+\mathcal{B}_g-\mathcal{C}_g-\mathcal{D}_g \in i\mathbb{R}^g ~\mbox{mod}~\null^tm\Omega$. The reflections $\psi(u,v)_{\mathcal{S}_2}=-\psi(u,e^{(1)}+\dots+e^{(j)}-\bar{v})_{\mathcal{S}_2}$ give the same conditions on $\mathcal{A}_g, \dots, \mathcal{D}_g$ as translations respect to the columns of the  identity matrix induce no new relations for the odd characteristics. The last reflections we study involves the matrix of periods $\Omega$; by writing 
$\tau^{(1)}+\tau^{(j)}=\null^tn\Omega$, with $\null^tn=(1,0,\dots,0,\underbrace{1}_{j},0\dots,0)$ and $j=2,\dots,g$:

\begin{eqnarray}
&&\psi(\tau^{(1)}+\tau^{(j)}+\bar{u},v)_{\mathcal{S}_2}=-\psi(u,v)_{\mathcal{S}_2}+\arg e^{2\pi i \null^tm_1(2v-\bar{u}-u-\mathcal{D}_g-\mathcal{A}_g)}+\nonumber \\
&&+\arg e^{-2\pi i \null^tm_2(2v+\bar{u}+u+\mathcal{C}_g+\mathcal{D}_g) }+\arg e^{2\pi i \null^tn(4v+\mathcal{B}_g+\mathcal{C}_g-\mathcal{A}_g-\mathcal{D}_g) }, \nonumber
\end{eqnarray}
if  $\mathcal{A}_g =\mathcal{D}_g+\null^tn_1I+\null^tm_1\Omega$  and $\mathcal{B}_g =\mathcal{C}_g+\null^tn_2I+\null^tm_2\Omega$ for some  $n_1, m_1, n_2, n_2 \in \mathbb{Z}^g$. Thus we get $\psi(\tau^{(1)}+\tau^{(j)}+\bar{u},v)_{\mathcal{S}_2}=-\psi(u,v)_{\mathcal{S}_2}$ for $j=2,\dots,g$ if and only if 
$\mathcal{A}_g =\mathcal{D}_g+\null^tn_1I-\null^tn\Omega$  and $\mathcal{B}_g =\mathcal{C}_g+\null^tn_2I+\null^tn\Omega$ with $\null^tn=(1,1,\dots,,1)$. The reflection $\psi(\tau^{(1)}+\bar{u},v)_{\mathcal{S}_2}=-\psi(u,v)_{\mathcal{S}_2}$ fixes no new equation for $\mathcal{A}_g, \dots, \mathcal{D}_g$. In summary    $\psi(u,v)_{\mathcal{S}_2}=\arg \frac{ \vartheta(u-v+\mathcal{A}_g,\Omega) \vartheta(\bar{u}+v+\mathcal{B}_g,\Omega)}{\vartheta(u+v+\mathcal{C}_g,\Omega) \vartheta(\bar{u}-v+\mathcal{D}_g,\Omega)}$ satisfies 
 $\psi(u,v)_{\mathcal{S}_2}=-\psi(\bar{u},v)_{\mathcal{S}_2}=-\psi(u,-\bar{v})_{\mathcal{S}_2}=-\psi(\tau^{(1)}+\bar{u},v)_{\mathcal{S}_2}=-\psi(u,e^{(1)}-\bar{v})_{\mathcal{S}_2}=-\psi(\tau^{(1)}+\tau^{(j)}+\bar{u},v)_{\mathcal{S}_2}=  -\psi(u,e^{(1)}+\dots+e^{(j)}-\bar{v})_{\mathcal{S}_2}~~ \forall j=2,\dots,g $  if 
 
 \begin{align*}
 \mathcal{A}_g &=\mathcal{D}_g+\null^tn_1I & \mathcal{A}_g &=\bar{\mathcal{B}}_g+\null^tn_3I+\null^tm_3\Omega  & \mathcal{A}_g =\mathcal{D}_g+&\null^tn_1I-\null^tn\Omega \\
 \mathcal{B}_g &=\mathcal{C}_g+\null^tn_2I & \bar{\mathcal{C}}_g &=\mathcal{D}_g+\null^tn_4I-\null^tm_3\Omega  & \mathcal{B}_g =\mathcal{C}_g+&\null^tn_2I+\null^tn\Omega \\
 \end{align*}
 and $\mathcal{A}_g+\mathcal{B}_g-\mathcal{C}_g-\mathcal{D}_g \in i\mathbb{R}~\mbox{mod}~\null^tm\Omega$ with $n_1,n_2,n_3,m_3 \in \mathbb{Z}^g$ and $\null^tn=(1,1,\dots,,1)$.
 
 Of course the above conditions are not compatible. To solve this impasse we have  to  relax some of the reflections on the  map $\psi(u,v)_{\mathcal{S}_2}$. We decide to relax those respect to the columns of the matrix of periods: this approach will be compatible with the analysis of the zero modes contributions. Explicitly this amounts to get $\psi(u,v)_{\mathcal{S}_2}=-\psi(\tau^{(1)}+\tau^{(j)}+\bar{u},v)_{\mathcal{S}_2}+8\pi\re  v_1+8\pi\re  v_j$ for $\mathcal{A}_g =\mathcal{D}_g+\null^tn_1I$, $\mathcal{B}_g =\mathcal{C}_g+\null^tn_2I$ and $\mathcal{B}_g+\mathcal{C}_g-\mathcal{A}_g-\mathcal{D}_g \in i\mathbb{R}~\mbox{mod}~\null^tm\Omega$; comparing this new set of equations for  $\mathcal{A}_g, \dots, \mathcal{D}_g$ with those relative to the reflections respect to the identity matrix, $\bar{u}$ and $-\bar{v}$ we get the constraints for the odd half periods

\begin{align*}
\mathcal{A}_g &=\mathcal{D}_g & \mathcal{A}_g &=\bar{\mathcal{B}}_g &  &\mathcal{B}_g+\mathcal{C}_g-\mathcal{A}_g-\mathcal{D}_g \in i\mathbb{R}^g ~\mbox{mod}~\null^tm\Omega\\
\mathcal{B}_g &=\mathcal{C}_g &  \bar{\mathcal{C}}_g &=\mathcal{D}_g &   &\mathcal{A}_g+\mathcal{B}_g-\mathcal{C}_g-\mathcal{D}_g \in i\mathbb{R}^g  ~\mbox{mod}~\null^tm\Omega \\
 \end{align*}
with reflections for $\psi(u,v)_{\mathcal{S}_2}$ given by (\ref{ABCD}).

\end{proof}

\section{Relevant superpropagators: some properties}

In this appendix we show some properties of the relevant superpropagators $G_{\mathcal{S}_i}$ as gauge fixed homotopy operators and we give the proof of the Hodge-Kodaira splitting $dG_{\mathcal{S}_i}+G_{\mathcal{S}_i}d=I-P_{\mathcal{S}_i}$ appearing in Def.1. First of all the operators

\begin{equation}\label{operato}
(G_{\mathcal{S}_i}\phi)(Q):= \int_{P_n} \theta(Q,P)_{\mathcal{S}_i} \wedge \phi(P) ~~~~Q \in P_n, 
\end{equation}
present an integrable singularity along the diagonal in $P_n \times P_n$. We continue with

\begin{Prop}
Let $G_{\mathcal{S}_i}$ be given by $(\ref{operato})$;  they realize the Hodge-Kodaira splitting 

\begin{equation}
dG_{\mathcal{S}_i}+G_{\mathcal{S}_i}d=I-P_{\mathcal{S}_i} \label{spl}
\end{equation} 
of the differential complexes $(\mathcal{H}^n_{\mathcal{S}_i},d)$, where $P_{\mathcal{S}_i}$ denotes  projection onto cohomology.
\end{Prop}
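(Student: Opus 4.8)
The plan is to realize $G_{\mathcal{S}_i}$ as fiberwise integration along the projection $\pi:P_n\times P_n\to P_n$ onto the $Q$-factor, so that $(G_{\mathcal{S}_i}\phi)(Q)=\pi_*(\theta(Q,P)_{\mathcal{S}_i}\wedge\phi(P))$ with the $P$-copy of $P_n$ as fiber, and then to apply Stokes' theorem on the fiber after excising a small disk $B_\varepsilon(Q)$ around the diagonal. Writing $d=d_P+d_Q$ for the total differential and using that $\phi$ depends only on $P$, Leibniz on the total space gives $d(\theta\wedge\phi)=d\theta\wedge\phi-\theta\wedge d\phi$. Integrating this identity over $P_n\setminus B_\varepsilon(Q)$, where $\theta$ is smooth by the integrable-singularity property noted after (\ref{operato}), and letting $\varepsilon\to0$ produces $dG_{\mathcal{S}_i}\phi+G_{\mathcal{S}_i}d\phi$ on the left (the $d_Q$-part of the total differential passing through $\pi_*$ up to the usual sign) against three contributions on the right: a bulk term, a term on the outer boundary $\partial P_n$, and a residue on the small circle $\partial B_\varepsilon(Q)$.

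I would then evaluate the three contributions in turn. For the bulk term $\int_{P_n}d\theta\wedge\phi$: by Def.1 one has $\theta_{\mathcal{S}_i}=\frac{1}{2\pi}d\arg\psi_{\mathcal{S}_i}-\mathcal{Z}_{\mathcal{S}_i}$, and away from the diagonal $\frac{1}{2\pi}d\arg\psi_{\mathcal{S}_i}$ is closed while $d\mathcal{Z}_{\mathcal{S}_i}=\mathcal{P}_{\mathcal{S}_i}$ is by definition the kernel of $P_{\mathcal{S}_i}$; hence $d\theta_{\mathcal{S}_i}=-\mathcal{P}_{\mathcal{S}_i}$ and this term equals $-P_{\mathcal{S}_i}\phi$. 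For the outer-boundary term $\int_{\partial P_n}\theta\wedge\phi$: split $\partial P_n$ into even and odd sides; for $i=1$ the kernel vanishes on the odd sides by (\ref{bc}), while on the even sides $\phi$ vanishes since it lies in $\mathcal{H}^n_{\mathcal{S}_1}=\Omega(P_n,\partial P_n^{even})$, and symmetrically for $i=2$; thus the two factors cover the whole boundary and this contribution vanishes. For the residue: since $\theta_{\mathcal{S}_i}\sim\frac{1}{2\pi}d\arg(z_P-w_Q)$ as $P\to Q$ (Def.1), the $\varepsilon\to0$ limit of $\int_{\partial B_\varepsilon(Q)}\theta\wedge\phi$ equals $\phi(Q)$ by the normalization $\frac{1}{2\pi}\oint d\arg=1$, yielding the identity $I$. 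Collecting the pieces gives $dG_{\mathcal{S}_i}\phi+G_{\mathcal{S}_i}d\phi=(I-P_{\mathcal{S}_i})\phi$, which is (\ref{spl}).

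The main obstacle is the analysis at the diagonal: the excised disk $B_\varepsilon(Q)$ moves with $Q$, so $d_Q$ does not commute naively with fiber integration, and one must show that, in the limit $\varepsilon\to0$, precisely the local model $\frac{1}{2\pi}d\arg(z_P-w_Q)$ contributes the identity while the smooth part of the generalized angle function and the zero-mode term $\mathcal{Z}_{\mathcal{S}_i}$ give no residue. I would settle this by comparing $\theta_{\mathcal{S}_i}$ with the flat Kontsevich propagator in a conformal chart around $Q$, using the zero-set analysis of Section 6.2, which guarantees that $\psi_{\mathcal{S}_i}$ has a simple zero only along the diagonal; the bulk and boundary terms are then routine given (\ref{bc}) and $d\mathcal{Z}_{\mathcal{S}_i}=\mathcal{P}_{\mathcal{S}_i}$.
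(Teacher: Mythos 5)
Your global strategy (fiberwise Stokes' theorem with a disk excised around the diagonal, producing a bulk term, an outer boundary term, and a residue) is the same skeleton as the paper's proof, which carries it out degree by degree with explicit formulas; your boundary-term argument via (\ref{bc}) combined with $\phi$ vanishing on the complementary sides, and your residue normalization, also match the paper's computations. The genuine gap is in your treatment of the bulk term. You declare that $d\mathcal{Z}_{\mathcal{S}_i}=\mathcal{P}_{\mathcal{S}_i}$ ``is by definition the kernel of $P_{\mathcal{S}_i}$'' and conclude that the bulk term equals $-P_{\mathcal{S}_i}\phi$. But the Proposition asserts that $I-dG_{\mathcal{S}_i}-G_{\mathcal{S}_i}d$ is a \emph{projection onto cohomology}; if $P_{\mathcal{S}_i}$ is merely defined as the operator with kernel $d\mathcal{Z}_{\mathcal{S}_i}$, nothing yet guarantees that this operator is idempotent or that it restricts to the identity on the harmonic representatives of $H^1(\mathcal{H}^n_{\mathcal{S}_i})$. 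What your argument proves is ``$dG_{\mathcal{S}_i}+G_{\mathcal{S}_i}d=I-(\mbox{operator with kernel } d\mathcal{Z}_{\mathcal{S}_i})$'', which is strictly weaker than the statement.

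The missing step is precisely the paper's Lemma 3: for a zero mode $\phi=\alpha_k\, d\im\varphi_k$ one has $(P_{\mathcal{S}_i}\phi)(Q)=\phi(Q)$. This is a genuine computation, not a bookkeeping convention: one expands the kernel in the basis $\underline{\omega}$ of holomorphic differentials dual to $\mathcal{B}_1$ and uses Riemann's bilinear relations on the hyperelliptic curve, $\int_{P_n}\omega_j\wedge\overline{\omega_k}=\tfrac{1}{2}\tau_{jk}$, so that the factor $(\im\Omega)^{-1}_{ij}$ in $\mathcal{Z}_{\mathcal{S}_i}$, with the normalization $\mathcal{O}=4(\im\Omega)^{-1}$ fixed in Lemma 2, contracts against $(\im\Omega)_{jk}$ to give $\delta_{ik}$. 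This is what makes the image of $P_{\mathcal{S}_i}$ (manifestly spanned by the $d\im\varphi_i(Q)$) actually coincide with the space of zero modes with $P_{\mathcal{S}_i}$ acting as the identity there; idempotence and the vanishing of $P_{\mathcal{S}_i}$ on exact forms then follow (the paper checks the exact case $\phi=df$ separately). Note that the normalization matters: a wrong constant in $\mathcal{O}$ would still yield a homotopy identity of your form with \emph{some} operator on the right-hand side, but that operator would not be a projection, so the Hodge--Kodaira splitting would fail. To complete your proof you must add the bilinear-relations computation, or an equivalent verification that $P_{\mathcal{S}_i}^2=P_{\mathcal{S}_i}$ and $P_{\mathcal{S}_i}$ is the identity on $H^1(\mathcal{H}^n_{\mathcal{S}_i})$; the rest of your outline, including the care you take with the moving excised disk and the comparison with the flat Kontsevich propagator near the diagonal, is consistent with the paper's argument.
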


\begin{proof}
Let $\phi \in \mathcal{H}^{n,0}_{\mathcal{S}_i}$;  the splitting equation (\ref{spl}) reduces to $(G_{\mathcal{S}_i}d\phi)(v)=\phi(v)$, where, as usual, $(u,v)$ denotes the coordinates of points $(P,Q)$ of the polygon  $P_n$. Explicitly we get 
\begin{eqnarray*}
&&(G_{\mathcal{S}_i}d\phi)(v)=\int_{P_n-B_{\mu}(v)}\frac{1}{2\pi}d_u\mu_{\mathcal{S}_i}(u,v)\wedge d\phi(u)+\nonumber \\
&&-\left\{\begin{array}{c} \int_{P_n}4\im  v_i(\im \Omega)_{ij}^{-1} d\re  u_j \wedge d\phi(u)  ~~\mbox{\tiny{i=1}} \\ \int_{P_n}4\im  u_i(\im \Omega)_{ij}^{-1} d\re  v_j \wedge d\phi(u)~~\mbox{\tiny{i=2}} \end{array} \right.=\phi(v)
\end{eqnarray*}
as in the $i=1$ case integrating by parts we get the sum  $\int_{\partial P_n}\frac{1}{2\pi}d_u\mu_{\mathcal{S}_i}(u,v)\wedge \phi(u)-4\int_{\partial P_n }\im  v_i(\im \Omega)_{ij}^{-1} d\re  u_j \wedge \phi(u)$ which  vanishes for the boundary conditions of the integral kernels and $\phi$ once we add the null term $\int_{\partial P_n}\frac{1}{2\pi}d_v\mu_{\mathcal{S}_i}(u,v)\wedge \phi(u)$. In the $i=2$ case, we can add the terms $4\int_{\partial P_n}\im u_i (\im  \Omega)^{-1}_{ij}d\re  v_j\wedge \phi(u)$ and  $\int_{\partial P_n}\frac{1}{2\pi}d_v\mu_{\mathcal{S}_i}(u,v)\wedge \phi(u)$   repeating the same trick of the $i=1$ case to eliminate the contributions over $\partial P_n$. The orientation of $\partial P_n$ is taken to be counterclockwise.
Let $\phi \in \mathcal{H}^{n,1}_{\mathcal{S}_i}$ now; as the cohomology of the differential complexes in not trivial in degree 1, then we have to distinguish different cases. We begin with  $\phi=f(u,\bar{u})du+g(u,\bar{u})d\bar{u}$; the splitting equation and Lemma 3 give 
$(dG_{\mathcal{S}_i}\phi)(v)+(G_{\mathcal{S}_i}d\phi)(v)= d\bar{v}\int_{\partial B_{\mu}(v)}\frac{1}{4\pi i}(\frac{g}{u-v}+\frac{f}{\bar{u}-\bar{v}})du+  \nonumber \\
-dv\int_{\partial B_{\mu}(v)}\frac{1}{4\pi i}(\frac{g}{u-v}
+\frac{f}{\bar{u}-\bar{v}})d\bar{u}+\int_{\partial P_n}\frac{1}{2\pi}d_v\mu_{\mathcal{S}_i}(u,v)\wedge\phi(u)-\int_{\partial B_{\mu}(v)}  \frac{1}{2\pi}d_v\mu_{\mathcal{S}_i}(u,v)\wedge\phi(u)-\left\{\begin{array}{c} 4\int_{P_n}d\im v_i (\im  \Omega)^{-1}_{ij}d\re  u_j\wedge\phi(u)~~i=1 \\ 4\int_{P_n}\im u_i (\im  \Omega)^{-1}_{ij}d\re  v_j\wedge d\phi(u)~i=2  \end{array}\right.$. All the line integrals are taken counterclockwise. The sum of the integrals over the boundary $\partial B_{\mu}(v)$ give precisely the identity on the right hand side of the splitting equation; in both the $i=1,2$ cases we eliminate the terms along $\partial P_n$ by adding suitable terms as in the $\phi \in \mathcal{H}^{n,0}_{\mathcal{S}_i}$ case; in particular for $i=2$ we get the projection onto cohomology integrating by parts.

Selecting an exact one form, $\phi=df$ with $f \in \mathcal{H}^{n,0}_{\mathcal{S}_i}$  we get no projection as in the $i=1$ case $4\int_{P_n}d\im v_i (\im  \Omega)^{-1}_{ij}d\re  u_j\wedge df(u)=-4\int_{\partial P_n }d\im v_i (\im  \Omega)^{-1}_{ij}d\re  u_j\wedge f(u)$ which, summed to $\int_{\partial P_n}\frac{1}{2\pi}d\mu_{\mathcal{S}_2}(u,v)\wedge\phi(u)$, gives as usual zero (the $i=2$ case is immediate). If $\phi$ is a linear combination of zero modes we have the following

\begin{Lem}
Let  $\phi \in H^1( \mathcal{H}^{n}_{\mathcal{S}_i})$; then 

\begin{eqnarray}
(P_{\mathcal{S}_i}\phi)(Q)=\phi(Q).
\label{projeee}
\end{eqnarray}

\end{Lem}

\begin{proof}

Let $(P_{\mathcal{S}_i}\phi)(Q)= \left\{ \begin{array}{c} 4\int_{P_n}d\im \varphi_i(Q) (\im  \Omega)^{-1}_{ij}d\re  \varphi_j(P)\wedge \phi(P), \\  4\int_{P_n}d\im \varphi_i(P) (\im  \Omega)^{-1}_{ij}d\re  \varphi_j(Q)\wedge \phi(P) \end{array} \right.$ with $\varphi_i(\cdot)$ the $i^{th}$ component of the Abel map $\varphi$ on $\mathcal{M}$; we prove (\ref{projeee}) for  $i=1$: the other case is analog. We want to use Riemann's bilinear relations for holomorphic differentials on the hyperelliptic curve $\mathcal{M}$ of Section 6.1. Let $\mathcal{M}$ be given as in Figure 1; the hyperelliptic involution $J$ is seen as a rotation by $\pi$ radians about an axis passing through the $2g+2$ branching points. Let $\gamma_j$ be an oriented curve from $P_{2j-1}$ to $P_{2j}$ for $j=1,\dots,g$. We have defined the canonical homology basis $\mathcal{B}_1=\{a_1,\dots,a_g,b_1,\dots,b_g\}$ where  the curve $a_j$ is $\gamma_j$ followed by $-J\gamma_j$, i.e. $a_j$  joins $P_{2j-1}$ to $P_{2j}$ and returns to $P_{2j-1}$ and the curve $b_j$ joins a point on $a_{j}$ to a point to $\alpha$, returning to the point on $a_j$.
  
  Let  $\beta_j$ be the curve that joins $P_{2j+1}$ to $P_{2j}$ and returns to $P_{2j+1}$ for $j=1,2\dots,g$ and $\alpha$ the one which joins $P_{2g+2}$ to $P_{2g+1}$ and returns to $P_{2g+2}$ (see Figure 3). It follows that $\alpha \cdot a_1=\alpha \cdot a_2 = \dots = \alpha \cdot a_g=0$, $\alpha \cdot b_k=1$ $\forall k=1,2\dots,g$ so $\alpha = a_1+a_2+ \dots a_g$ in homology, $\beta_i \cdot b_j =\beta_i \cdot a_j=0$ for $ i \not= j, \beta_g \cdot \alpha=1$, $\beta_i \cdot a_{i+1} =-1$ $\forall i=1,\dots,g-1$, $\beta_i \cdot a_{i} =1 $ $\forall i=1,\dots,g$. Thus up to homology we conclude $\beta_i=b_{i+1}-b_i$ $\forall i=1,\dots,g-1$ and $\beta_g=-b_g$. In the following $\{ \omega_1, \dots ,\omega_g \} \equiv \{\underline{\omega}\}$ is the  basis of holomorphic differentials  dual to the canonical homology basis $\mathcal{B}_1$.

  \vspace{0.5cm}
   \begin{figure}[h!]
   \begin{center}
   \epsfig{figure=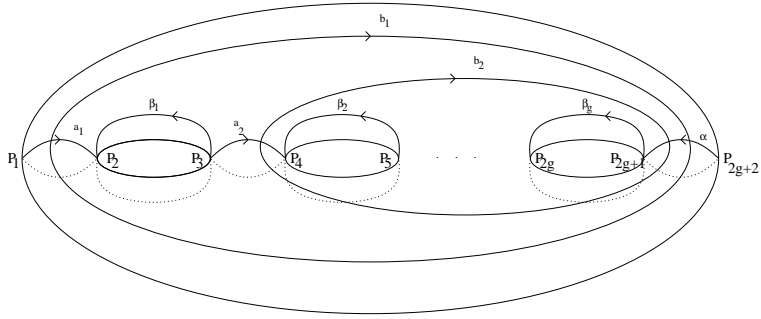,width=10cm}
 \caption{Homology basis $\mathcal{B}_1$ and $\{\beta_i,\alpha \}_{i=1,\dots,g}$ curves}   
\end{center}
   \end{figure}

Writing $\phi(P)=\alpha_kd\im \varphi_k(P)$, with $\{d\im \varphi_k(P) \}_{k=1,\dots,g}$ basis of $H^1(\mathcal{H}^{n}_{\mathcal{S}_1})$ then 
 
\begin{eqnarray}
(P_{\mathcal{S}_1}\phi)(Q)&=&-i\alpha_k d\im  \varphi_i(Q)(\im \Omega)^{-1}_{ij}(\int_{P_n}\omega_j(P)\wedge\overline{\omega_k(P)}+\nonumber \\
&&+\int_{P_n}\omega_k(P)\wedge\overline{\omega_j(P)}). \label{decomp}
\end{eqnarray}
The $\underline{\omega}(P)$ are closed differentials;  considering the fundamental polygon $\Gamma$ with symbol $\prod_{i=1}^{g}a_jb_ja^{-1}_jb^{-1}_j$  associated to the curve $\mathcal{M}$ we can write $\int_{P_n}\omega_j(P)\wedge\overline{\omega_k(P)}=\int_{\partial P_n}f_j(P)\wedge\overline{\omega_k(P)}$ where $df_j(P)=\omega_j(P)$. Then $\int_{P_n}\omega_j(P)\wedge\overline{\omega_k(P)}=\frac{1}{4}\sum_{i=1}^g[\int_{a_i}f_j(P)\wedge\overline{\omega_k(P)}+\int_{a^{-1}_i}f_j(P)\wedge\overline{\omega_k(P)}+\int_{b_i}f_j(P)\wedge\overline{\omega_k(P)}+\int_{b^{-1}_i}f_j(P)\wedge\overline{\omega_k(P)}]=\frac{1}{2}\tau_{jk}$; repeating the same procedure for the second summand of the r.h.s in (\ref{decomp}) we get $(P_{\mathcal{S}_1}\phi)(Q)=-i\alpha_k d\im  \varphi_i(Q)(\im \Omega)^{-1}_{ij}i(\im \Omega)_{jk}=\alpha_kd\im  \varphi_{k}(Q)$.

\end{proof}

 The above Lemma  concludes the proof of (\ref{spl}) for the $\phi \in \mathcal{H}^{n,1}_{\mathcal{S}_i}$ case. 
 
 If $\phi \in \mathcal{H}^{n,2}_{\mathcal{S}_i}$, then $\phi=d\omega$ for $\omega\in \mathcal{H}^{n,1}_{\mathcal{S}_i}$ (Lemma 1). This implies that we can repeat essentially the calculations of the preceding case: the splitting equation reduces to 
$(dG_{\mathcal{S}_i}\phi)(v)= \phi(v)$ as we have no projection in degree 2;  this concludes the proof of Proposition 5.

\end{proof}


\begin{thebibliography}{99}
 

%\cite{Schaller:1994es}
\bibitem{Schaller}
  P.~Schaller and T.~Strobl,
  %``Poisson structure induced (topological) field theories,''
  Mod.\ Phys.\ Lett.\  A {\bf 9} (1994) 3129
  [arXiv:hep-th/9405110].
  %%CITATION = MPLAE,A9,3129;%%

%\cite{Ikeda:1993fh}
\bibitem{Ikeda}
  N.~Ikeda,
  %``Two-dimensional gravity and nonlinear gauge theory,''
  Annals Phys.\  {\bf 235} (1994) 435
  [arXiv:hep-th/9312059].
  %%CITATION = APNYA,235,435;%%


 %\cite{Cattaneo:1999fm}
\bibitem{Cattaneo:1999fm}
  A.~S.~Cattaneo and G.~Felder,
  ``A path integral approach to the Kontsevich quantization formula,''
  Commun.\ Math.\ Phys.\  {\bf 212} (2000) 591
  arXiv:math.qa/9902090.
  %%CITATION = CMPHA,212,591;%%
 
\bibitem{Kont}
M.~Kontsevich,
``Deformation Quantization of  Poisson manifolds, I,``
 arXiv:hep.th/0101170.

 \bibitem{CF}
  A.~S.~Cattaneo and G.~Felder
``Coisotropic submanifolds in Poisson geometry and branes in the Poisson sigma model,  ``
Lett.Math.Phys. \textbf{69} (2004), 157-175.

%\cite{Kajiura:2001ng}
\bibitem{Kajiura:2001ng}
  H.~Kajiura,
  %``Homotopy algebra morphism and geometry of classical string field  theory,''
  Nucl.\ Phys.\  B {\bf 630} (2002) 361
  [arXiv:hep-th/0112228].
  %%CITATION = NUPHA,B630,361;%%
  
 %\cite{Kajiura:2003ax}
\bibitem{Kajiura:2003ax}
  H.~Kajiura,
  %``Noncommutative homotopy algebras associated with open strings,''
  Rev.\ Math.\ Phys.\  {\bf 19} (2007) 1
  [arXiv:math.qa/0306332].
  %%CITATION = RMPHE,19,1;%% 
  
  
  
 %\cite{Cattaneo:2005zz}
\bibitem{Cattaneo:2005zz}
  A.~S.~Cattaneo and G.~Felder,
  ``Relative formality theorem and quantisation of coisotropic submanifolds,''
  arXiv:math.qa/0501540.
  %%CITATION = MATH.QA/0501540;%%
 


  %\cite{Calvo:2005th}
\bibitem{Calvo:2005th}
  I.~Calvo and F.~Falceto,
  ``Star products and branes in Poisson-sigma models,''
  Commun.\ Math.\ Phys.\  {\bf 268} (2006) 607
  arXiv:hep-th/0507050.
  %%CITATION = CMPHA,268,607;%%
 %\cite{Verlinde:1986kw}
 
\bibitem{CalF}
I.~Calvo and F.~Falceto,
``Poisson reduction and branes in Poisson-Sigma models,''
 arXiv:hep-th/0405176v2 

%\cite{Bonechi:2005sj}
\bibitem{Bonechi:2005sj}
  F.~Bonechi and M.~Zabzine,
  %``Lie algebroids, Lie groupoids and TFT,''
  J.\ Geom.\ Phys.\  {\bf 57} (2007) 731
  [arXiv:math.sg/0512245].
  %%CITATION = JGPHE,57,731;%%

  
\bibitem{Super}
A.~S.~Cattaneo,
"From Topological Field Theory to Deformation Quantization and Reduction," Proceedings of ICM 2006, Vol. III, 339-365 (European Mathematical Society, 2006),
Preprint Nr. 06-2006 (2006) http://www.math.unizh.ch/fileadmin/math/preprints/icm.pdf


  
  
  
 \bibitem{Gonzal}
 M.O.~Gonzalez, \textit{Complex Analysis, Selected Topics}, Pure and Applied Mathematics,  Marcel Dekker, Inc.  (1992)
 
 T.A.~Driscoll and L.N.~Trefethen, \textit{Schwarz-Christoffel Mapping}, Cambridge Monographs on Applied and Computational Mathematics, Cambridge University Press (2002)
 
 \bibitem{schwarz}
 G.~Schwarz, \textit{Hodge Decomposition- A Method fo Solving Boundary Value Problems}, Lecture Notes in Mathematics, 1607  Springer Verlag (1995)
 
 \bibitem{next}
 A.~Ferrario \textit{Some comments on the superpropagators for the Poisson Sigma Model with branes} in preparation
 
 
 
 
 \bibitem{FK}
 
 D.~Mumford, \textit{Tata Lectures on Theta I,II  } Progress in Math, Vol \textbf{28},\textbf{43} (1982, 1984)
   
   H.M.~Farkas and I.~Kra, \textit{Riemann Surfaces} Second Edition, Springer Verlag (1992)
 
 \bibitem{CT}
 A.S.~Cattaneo and C.~Torossian, ``Quantification pour les paires symetriques et diagrammes de Kontsevich,`` math.RT/0609693
   
\bibitem{CZ}
A.S.~Cattaneo and M.~Zambon,
``Coisotropic embeddings in Poisson manifolds,``math.SG/0611480

\bibitem{C2007}

 
A.S.~Cattaneo, ``Deformation quantization and Reduction,     `` arXiv:math.qa/0701378v1
   
  \bibitem{Verlinde:1986kw}
    E.~P.~Verlinde and H.~L.~Verlinde,
    %``Chiral bosonization, determinants and the string partition function,''
    Nucl.\ Phys.\  B {\bf 288} (1987) 357.
    
   %\cite{Verlinde:1987sd}
 \bibitem{Verlinde:1987sd}
   E.~P.~Verlinde and H.~L.~Verlinde,
   %``MULTILOOP CALCULATIONS IN COVARIANT SUPERSTRING THEORY,''
   Phys.\ Lett.\  B {\bf 192} (1987) 95.
   %%CITATION = PHLTA,B192,95;%% 
   
   
 \end{thebibliography}
\end{document}